\newtheorem{Theorem}{Theorem}[section]
\newtheorem{Lemma}{Lemma}[section]
\newtheorem{Corollary}{Corollary}[section]
\newtheorem{Remark}{Remark}[section]
\newtheorem{Proposition}{Proposition}[section]
\newtheorem{Example}{Example}[section]
\makeatletter \@addtoreset{equation}{section} \makeatother
\begin{document}
	
	\title{Two classes of LCD BCH codes over finite fields \let\thefootnote\relax\footnotetext{E-Mail addresses: yuqingfu@mails.ccnu.edu.cn (Y. Fu), hwliu@ccnu.edu.cn (H. Liu).}}
	\author{ Yuqing Fu,~ Hongwei Liu}
	\date{\small School of Mathematics and Statistics, Central China Normal University, Wuhan, 430079, China}
	\maketitle
	{\noindent\small{\bf Abstract:} BCH codes form an important subclass of cyclic codes, and are widely used in compact discs, digital audio tapes and other data storage systems to improve data reliability. As far as we know, there are few results on $q$-ary BCH codes of length $n=\frac{q^{m}+1}{q+1}$. This is because it is harder to deal with BCH codes of such length. In this paper, we study $q$-ary BCH codes with lengths $n=\frac{q^{m}+1}{q+1}$ and $n=q^m+1$. These two classes of BCH codes are always LCD codes. For $n=\frac{q^{m}+1}{q+1}$, the dimensions of narrow-sense BCH codes of length $n$ with designed distance $\delta=\ell q^{\frac{m-1}{2}}+1$ are determined, where $q>2$ and $2\leq \ell \leq q-1$. Moreover, the largest coset leader is given for $m=3$ and the first two largest coset leaders are given for $q=2$. The parameters of BCH codes related to the first few largest coset leaders are investigated. Some binary BCH codes of length $n=\frac{2^m+1}{3}$ have optimal parameters. For ternary narrow-sense BCH codes of length $n=3^m+1$, a lower bound on the minimum distance of their dual codes is developed, which is good in some cases.}
	
	\vspace{1ex}
	{\noindent\small{\bf Keywords:}
		BCH codes; LCD codes; reversible codes; cyclic codes; dual codes.}
	
	2020 \emph{Mathematics Subject Classification}:  94B05, 94B60
	\section{Introduction}
	Let $\mathbb{F}_{q}$ be the finite field with $q$ elements, where $q$ is a prime power. An $[n,k,d]$ linear code 
	$\mathcal{C}$ over $\mathbb{F}_{q}$ is a $k$-dimensional subspace of $\mathbb{F}_{q}^{n}$ with minimum (Hamming) distance $d$. The (Euclidean) dual code of $\mathcal{C}$, denoted by $\mathcal{C}^{\perp}$, is defined as 
	$$\mathcal{C}^{\perp}=\big\{{\bf x}\in \mathbb{F}_{q}^{n}~\big|~{\bf x}\cdot {\bf c}=0~{\rm for~all}~{\bf c}\in \mathcal{C}\big\},$$
	where ${\bf x}\cdot {\bf c}$ denotes the standard inner product of the two vectors ${\bf x}$ and ${\bf c}$. A linear code $\mathcal{C}$ is said to be cyclic if $(c_{0},c_{1},\cdots,c_{n-1})\in \mathcal{C}$ implies $(c_{n-1},c_{0},\cdots,c_{n-2})\in \mathcal{C}$. As usual, $\mathbb{F}_{q}[x]$ denotes the polynomial ring over $\mathbb{F}_{q}$ and $\mathcal{R}_{n}:=\mathbb{F}_{q}[x]/\langle x^{n}-1\rangle$ denotes the quotient ring of $\mathbb{F}_{q}[x]$ with respect to the ideal $\langle x^{n}-1\rangle$. By identifying each vector $(a_{0},a_{1},\cdots,a_{n-1})\in \mathbb{F}_{q}^{n}$ with a polynomial $$a_{0}+a_{1}x+\cdots+a_{n-1}x^{n-1}\in \mathcal{R}_{n},$$ a linear code $\mathcal{C}$ of length $n$ over $\mathbb{F}_{q}$ corresponds to an $\mathbb{F}_{q}$-subspace of the ring $\mathcal{R}_{n}$. Moreover, $\mathcal{C}$ is cyclic if and only if the corresponding subspace is an ideal of $\mathcal{R}_{n}$. Let $\mathcal{C}$ be a cyclic code of length $n$ over $\mathbb{F}_{q}$, then $\mathcal{C}=\langle g(x)\rangle$, where $g(x)$ is a monic divisor of $x^{n}-1$ and has the minimum degree among all the generators of $\mathcal{C}$. Such polynomial $g(x)$ is unique and is called the generator polynomial of $\mathcal{C}$. 
	
	Let $n$ be a positive integer coprime to $q$. Denote $r={\rm ord}_{n}(q)$, i.e., $r$ is the smallest positive integer such that $q^{r}\equiv 1~({\rm mod}~n)$. Let $\alpha$ be a primitive element of $\mathbb{F}_{q^{r}}$ and put $\beta=\alpha^{\frac{q^{r}-1}{n}}$. Then $\beta$ is a primitive $n$-th root of unity. The set $T=\{0\leq i\leq n-1~|~g(\beta^{i})=0\}$ is referred to as the defining set of $\mathcal{C}$. For integers $b$ and $\delta$ with $2\leq \delta \leq n$, let $\mathcal{C}_{(q,n,\delta,b)}$ denote the cyclic code of length $n$ over $\mathbb{F}_{q}$ with defining set 
	$$T=C_{b}\cup C_{b+1}\cup\cdots\cup C_{b+\delta-2},$$
	where $C_{i}$ is the $q$-cyclotomic coset modulo $n$ containing $i$. The code $\mathcal{C}_{(q,n,\delta,b)}$ is called a BCH code with designed distance $\delta$. When $b=1$, $\mathcal{C}_{(q,n,\delta,1)}$ is called a narrow-sense BCH code. The code $\mathcal{C}_{(q,n,\delta+1,0)}$ is exactly the even-like subcode of $\mathcal{C}_{(q,n,\delta,1)}$. The BCH codes with length $n=q^m-1$, $\frac{q^m-1}{q-1}$ and $q^m+1$ are called primitive BCH codes, projective BCH codes and  antiprimitive BCH codes, respectively.
	
	BCH codes form a special class of cyclic codes and are widely used in compact discs, digital audio tapes and other data storage systems to improve data reliability. Among all types of BCH codes, primitive BCH codes are the most extensively studied. The well-known Reed-Solomon codes are primitive BCH codes with length $n=q-1$. The projective BCH codes are also interesting. Many Hamming codes are projective narrow-sense BCH codes. In general, it is difficult to determine the parameters such as the dimension, the minimum distance and the weight distribution of BCH codes, let alone the parameters of the dual codes of BCH codes. Very recently, \cite{gdl}, \cite{wwlw} and \cite{zlx} gave sufficient and necessary conditions in terms of the designed distance $\delta$ for the dual codes of BCH codes with lengths $n=\frac{q^m-1}{q-1}$, $\frac{q^m-1}{q+1}$ and $\frac{q^{2m}-1}{q+1}$ to be BCH codes. \cite{gdl} and \cite{wwlw} also developed some lower bounds on the minimum distances of $\mathcal{C}_{(q,(q^m-1)/(q-1),\delta,1)}^{\perp}$ and $\mathcal{C}_{(q,(q^m-1)/(q+1),\delta,1)}^{\perp}$.
	
	A linear code $\mathcal{C}$ over $\mathbb{F}_{q}$ is called an LCD (linear complementary dual) code provided $\mathcal{C}\cap \mathcal{C}^{\bot}=\{{\bf 0}\}$. LCD cyclic codes over finite fields were referred to as reversible codes in the literature and were first studied by Massey for the data storage applications \cite{m1}. There has been a lot of research on LCD codes due to their important application in cryptography to resist side-channel attacks and fault non-invasive attacks (see \cite{bccgh, cg} for more details). The constructions and analysis of LCD BCH codes with lengths $n=q^m-1$ and $n=\frac{q^m-1}{q-1}$ were presented in \cite{hywsm,hyws,ldl1,lldl}. It was shown in \cite{ldl1} that a cyclic code of length $n$ over $\mathbb{F}_{q}$ is an LCD code if $-1$ is a power of $q$ modulo $n$. In particular, cyclic codes over $\mathbb{F}_{q}$ with length $n=\frac{q^m+1}{N}$ are always LCD codes, where $1\leq N<q^m+1$ is a divisor of $q^m+1$.
	
   Determining the parameters of BCH codes over $\mathbb{F}_{q}$ with length $n=\frac{q^m+1}{N}$ has become an interesting area of research. For $n=q^m+1$, the authors in \cite{ldl1, ldl2, llgs, llfl, ylly, zc, zswh} gave necessary and sufficient conditions for an integer $1\leq a\leq q^m$ being a coset leader and considered the first few largest coset leaders. Based on these results, the parameters of antiprimitive BCH codes with designed distance $\delta$ in some ranges are investigated. Zhu et al. \cite{zswh} gave some lower bounds on the minimum distance of antiprimitive BCH codes. To the best of our knowledge, there are only two papers \cite{ylly, z} on BCH codes with length $n=\frac{q^m+1}{q+1}$. In \cite{ylly} and \cite{z}, the authors found out all coset leaders in the range $[1,q^{\frac{m-1}{2}}]$ and $[q^{\frac{m-1}{2}}+1, \frac{2q^{\frac{m+1}{2}}+2q-1}{q+1}]$, respectively, and then settled the dimensions of $\mathcal{C}_{(q,n,\delta,1)}$ for $2\leq \delta\leq q^{\frac{m-1}{2}}+1$ and $q^{\frac{m-1}{2}}+2 \leq \delta\leq \frac{2q^{\frac{m+1}{2}}+2q-1}{q+1}+1$, respectively. Zhang \cite{z} also determined the first two largest coset leaders for $q=3$ and explored the parameters of related BCH codes. In \cite{zc, zlks, z}, the dimensions of BCH codes with length $n=\frac{q^m+1}{N}$ were  studied, where $N$ is a divisor of $q+1$ with $1<N<q+1$.
	
   In this paper, we work on BCH codes over $\mathbb{F}_{q}$ with lengths $n=\frac{q^m+1}{q+1}$ and $n=q^m+1$. We will investigate the dimension and minimum distance of $\mathcal{C}_{(q,(q^m+1)/(q+1),\delta,1)}$ for certain $\delta$ and give a lower bound on the minimum distance of $\mathcal{C}_{(3,3^m+1,\delta,1)}^{\perp}$. The rest of this paper is organized as follows. In Section 2, we present some notions and results that will be used later. In Section 3, let $n=\frac{q^m+1}{q+1}$, we first find out all coset leaders in the range $[q^{\frac{m-1}{2}}+1,(q-1)q^{\frac{m-1}{2}}]$ and then provide formulas on the dimensions of $\mathcal{C}_{(q,n,\delta,1)}$ for $\delta=\ell q^{\frac{m-1}{2}}+1$, where $q>2$ and $2\leq \ell \leq q-1$. As a byproduct, we find out the largest coset leader modulo $n=\frac{q^3+1}{q+1}$. The parameters of related BCH codes are then investigated. In Section 4, for $q=2$ and $n=\frac{2^m+1}{3}$, we determine the first two largest coset leaders and explore the parameters of binary BCH codes with small dimensions. Some optimal codes are found here. In Section 5, we establish a lower bound on the minimum distance of $\mathcal{C}_{(3,3^m+1,\delta,1)}^{\perp}$. Some examples show that this lower bound is good. Section 6 concludes this paper. We remark that all examples in this paper are verified by C or Magma programs.

	\section{Preliminaries}
	Throughout this paper, $\mathbb{F}_{q}$ denotes the finite field with $q$ elements, where $q$ is a prime power and $n$ is a positive integer coprime to $q$. Let $\mathbb{Z}_{n}=\{0,1,2,\cdots,n-1\}$ denote the residue ring of the integer ring $\mathbb{Z}$ modulo $n$, and let $\mathbb{Z}_{n}^{*}$ be the group of units in $\mathbb{Z}_{n}$. For $a\in \mathbb{Z}_{n}^{*}$, ${\rm ord}_{n}(a)$ denotes the order of $a$ in $\mathbb{Z}_{n}^{*}$. As usual, $|X|$ stands for the cardinality of a finite set $X$. Given two integers $a_{1}$ and $a_{2}$, if $a_{1}$ divides $a_{2}$, then we write $a_{1}|a_{2}$; ${\rm gcd}(a_{1},a_{2})$ denotes the greatest common divisor of $a_{1}$ and $a_{2}$. Given a rational number $x$, $\lceil x\rceil$ denotes the smallest integer greater than or equal to $x$ and $\lfloor x\rfloor$ denotes the largest integer less than or equal to $x$.
	
	For any $a\in \mathbb{Z}_{n}$, the $q$-cyclotomic coset ($q$-coset in short) of $a$ modulo $n$ is defined by
	$$C_{a}=\{a,aq,aq^{2},\cdots,aq^{l_{a}-1}\}~{\rm mod}~n~\subseteq \mathbb{Z}_{n},$$
	where $l_{a}$ is the least positive integer such that $aq^{l_{a}}\equiv a~({\rm mod}~n)$, and is the size of $C_{a}$. It is known that $l_{a}|{\rm ord}_{n}(q)$. For $1\leq t\leq {\rm ord}_{n}(q)$, $[aq^{t}]_{n}$ denotes the integer in $\mathbb{Z}_{n}$ which is congruent to $aq^{t}$ modulo $n$. The smallest element in $C_{a}$ is called the coset leader of $C_{a}$. Let $\Gamma_{(n,q)}$ be the set of all $q$-coset leaders modulo $n$. In this paper, we denote by $\delta_{i}$ the $i$-th largest coset leader in $\Gamma_{(n,q)}$. All the distinct $q$-cosets modulo $n$ partition $\mathbb{Z}_{n}$. The defining set $T$ of a cyclic code $\mathcal{C}$ of length $n$ over $\mathbb{F}_{q}$ is the union of some $q$-cosets modulo $n$, and the dimension of $\mathcal{C}$ is equal to $n-|T|$.
		
	It is easily seen that the dimension of the narrow-sense BCH code $\mathcal{C}_{(q,n,\delta,1)}$ is given by
	$${\rm dim}(\mathcal{C}_{(q,n,\delta,1)})=n-|C_{1}\cup C_{2}\cup\cdots \cup C_{\delta-1}|.$$
	Therefore, to determine the dimension of $\mathcal{C}_{(q,n,\delta,1)}$, we need to find out all coset leaders in the range $[1,\delta-1]$ and compute the cardinalities of the $q$-cosets containing coset leaders. 
	
    The following result will be used to determine the dimension of BCH codes.
		
		\begin{Lemma}\label{l3.3}
			Let $u,v$ be non-negative integers and let $l$ be a positive integer. Then
			\begin{align*}
				{\rm gcd}(l^{u}+1,l^{v}-1)=\begin{cases}
					1, & {\rm if}~\frac{v}{{\rm gcd}(u,v)}~{\rm is~odd~and}~l~{\rm is~even},\\
					2, & {\rm if}~\frac{v}{{\rm gcd}(u,v)}~{\rm is~odd~and}~l~{\rm is~odd},\\
					l^{{\rm gcd}(u,v)}+1, & {\rm if}~\frac{v}{{\rm gcd}(u,v)}~{\rm is~even}.
				\end{cases}
			\end{align*}
		\end{Lemma}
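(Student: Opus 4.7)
My plan is to set $d=\gcd(u,v)$ and $g=\gcd(l^u+1,\,l^v-1)$, and then exploit the congruences $l^u\equiv-1\pmod g$ and $l^v\equiv 1\pmod g$, which together force the multiplicative order of $l$ modulo $g$ to divide $\gcd(2u,v)$. I will evaluate this gcd in the two cases determined by the parity of $v/d$, and combine the result with the elementary identities $y-1\mid y^k-1$ for every $k$ and $y+1\mid y^k+1$ for odd $k$, applied to $y=l^d$.

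In the case $v/d$ odd, I first use $\gcd(u/d,v/d)=1$ and the oddness of $v/d$ to conclude $\gcd(2u,v)=d$; hence $l^d\equiv 1\pmod g$ and therefore $l^u=(l^d)^{u/d}\equiv 1\pmod g$. Combined with $l^u\equiv-1\pmod g$ this forces $g\mid 2$, and a parity check on $l$ distinguishes the two subcases: $g=1$ if $l$ is even (since then $l^u+1$ is odd), and $g=2$ if $l$ is odd (since then $l^u+1$ and $l^v-1$ are both even).

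In the case $v/d$ even, coprimality of $u/d$ and $v/d$ forces $u/d$ to be odd, from which the forward divisibility $l^d+1\mid g$ is immediate: $l^d+1\mid l^u+1$ (odd $u/d$) and $l^d+1\mid l^{2d}-1\mid l^v-1$ (since $2d\mid v$). For the reverse, the same parity analysis now yields $\gcd(2u,v)=2d$, so $l^{2d}\equiv 1\pmod g$. The key move is then to use the oddness of $u/d$ to write $(l^d)^{u/d}\equiv l^d\pmod g$ modulo the relation $(l^d)^2\equiv 1$; since the left side equals $l^u\equiv-1$, I obtain $l^d\equiv-1\pmod g$, i.e.\ $g\mid l^d+1$, giving $g=l^d+1$.

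The hard step will be exactly this reverse direction in the even case: knowing only that $(l^d)^2\equiv 1\pmod g$ does \emph{not} by itself pin down $l^d\equiv\pm 1$ when $g$ is composite, because composite moduli can admit extra square roots of unity. My escape is not to analyze those roots, but to use the oddness of $u/d$ to transport the sign $-1$ directly from $l^u$ back to $l^d$ in one line.
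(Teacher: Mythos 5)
Your proof is correct. Note that the paper itself gives no proof of Lemma \ref{l3.3}: it is quoted as a known auxiliary fact (it is a standard folklore result on $\gcd(l^u+1,l^v-1)$), so there is no in-paper argument to compare against; your write-up would in fact supply the missing justification. The skeleton is the standard one: from $g=\gcd(l^u+1,l^v-1)$ you get $l^{2u}\equiv 1$ and $l^v\equiv 1 \pmod g$ (note $\gcd(l,g)=1$ since $g\mid l^v-1$, which is what legitimizes the order/B\'ezout step), hence $l^{\gcd(2u,v)}\equiv 1\pmod g$; your computations $\gcd(2u,v)=d$ when $v/d$ is odd and $\gcd(2u,v)=2d$ when $v/d$ is even (using $\gcd(u/d,v/d)=1$ and the parity of $v/d$) are right, and your ``key move'' in the even case --- writing $l^u=(l^d)^{u/d}\equiv l^d\pmod g$ because $u/d$ is odd and $(l^d)^2\equiv 1$ --- is exactly the correct way to avoid the spurious-square-root issue you flag, yielding $g\mid l^d+1$ to match the easy divisibility $l^d+1\mid g$. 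The only blemishes are degenerate boundary cases permitted by ``non-negative'': if $u=0$ and $l$ is even, your sentence ``$l^u+1$ is odd'' fails (then $l^u+1=2$), though the conclusion $g=1$ still holds because $l^v-1$ is odd; and $v=0$ (where $l^v-1=0$) should be dispatched separately by $\gcd(x,0)=x$. These are one-line patches and do not affect the substance, which in the paper is only applied with positive exponents anyway.
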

	
	The $q$-coset leaders modulo $n=\frac{q^m+1}{N}$ and the $q$-coset leaders modulo $n=q^m+1$ have the following relationship.
	\begin{Lemma}{\rm \cite{zc}}
		Suppose $q$ is a prime power and let $N$ be a divisor of $q+1$. For every integer $s$ with $1\leq s< \frac{q^m+1}{N}$, $s$ is the coset leader of the $q$-coset $C_{s}$ modulo $\frac{q^m+1}{N}$ if and only if $Ns$ is the coset leader of the $q$-coset $C_{Ns}$ modulo $q^m+1$. Moreover, $|C_{s}|=|C_{Ns}|$.
	\end{Lemma}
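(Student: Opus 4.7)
The plan is to exploit the natural correspondence between the $q$-cyclotomic coset of $s$ modulo $n := \frac{q^m+1}{N}$ and the $q$-cyclotomic coset of $Ns$ modulo $n' := q^m+1$ induced by multiplication by $N$. The key structural fact is that $n' = Nn$, so scaling by $N$ lifts $\mathbb{Z}_n$ into $\mathbb{Z}_{n'}$ and commutes with reductions.

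First I would dispense with the cardinality claim. By definition $|C_s|$ (modulo $n$) is the least positive $l$ with $n\mid s(q^l-1)$, and $|C_{Ns}|$ (modulo $n'$) is the least positive $l$ with $n'\mid Ns(q^l-1)$, i.e. $Nn\mid Ns(q^l-1)$. The two divisibility conditions are equivalent, so the two sizes agree.

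Next, for the coset-leader statement, I would prove the stronger assertion that the map $\varphi\colon C_s\to C_{Ns}$ given by $\varphi([sq^i]_n)=N\cdot[sq^i]_n$ is a strictly order-preserving bijection whose image equals $C_{Ns}$. Well-definedness and the inclusion $\mathrm{Im}(\varphi)\subseteq C_{Ns}$ reduce to the identity
\begin{equation*}
[Nsq^i]_{n'} \;=\; N\cdot[sq^i]_n,
\end{equation*}
which follows by writing $[sq^i]_n=sq^i-kn$ and observing that $N[sq^i]_n=Nsq^i-kn'$ lies in $[0,Nn)=[0,n')$. Surjectivity then follows because $\varphi$ is injective (multiplication by $N$ is injective on nonnegative integers) and the cardinality equality from the previous step forces image and target to coincide. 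Order preservation is automatic since $N>0$. Consequently the minimum element of $C_s$ is sent to the minimum element of $C_{Ns}$; so $s=\min C_s$ if and only if $Ns=\min C_{Ns}$, which is exactly the required equivalence.

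The main obstacle, and indeed the crux of the whole argument, is establishing the identity $[Nsq^i]_{n'}=N[sq^i]_n$; once this is in hand the rest is essentially bookkeeping. One must confirm that scaling the canonical representative in $\{0,1,\ldots,n-1\}$ by $N$ still lands inside $\{0,1,\ldots,n'-1\}$, which is precisely where the hypothesis $n'=Nn$ is used. A subtle point worth stating explicitly is that every element of $C_{Ns}$ is automatically divisible by $N$: from $Nsq^i\equiv r\pmod{n'}$ one gets $r=N(sq^i-kn)$, so the preimage $r/N$ is a bona fide element of $\mathbb{Z}_n$ lying in $C_s$. This divisibility observation is what closes the converse direction of the coset-leader equivalence cleanly.
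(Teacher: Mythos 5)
Your proposal is correct. Note that the paper does not prove this lemma at all -- it is quoted with a citation to the reference [zc] -- so there is no in-paper argument to compare against; your write-up supplies a complete self-contained justification. The two pillars of your argument are sound: the size claim reduces to the equivalence $Nn \mid Ns(q^l-1) \Leftrightarrow n \mid s(q^l-1)$, and the identity $[Nsq^i]_{n'}=N[sq^i]_n$ (valid precisely because $n'=Nn$, so the scaled representative stays in $[0,n')$) makes multiplication by $N$ a strictly increasing bijection from $C_s$ onto $C_{Ns}$, whence the minima correspond and the coset-leader equivalence follows in both directions.
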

	
	Given a positive integer $s$ with $1\leq s\leq q^{m}-1$, we can write $$s=s_{m-1}q^{m-1}+s_{m-2}q^{m-2}+\cdots+s_{1}q+s_{0},~~0\leq s_{i}\leq q-1~{\rm for}~0\leq i\leq m-1,$$ 
	which is called the $q$-adic expansion of $s$.
	This defines a sequence $(s_{m-1},s_{m-2},\cdots,s_{0})$. In studying the first few largest coset leaders modulo $n=\frac{q^{m}+1}{N}$, it is convenient to allow each $s_{i}$ to be any integer and identify the integer $s$ with its sequence forms. For example, take $q=2$, $m=5$ and $s=11$. It is easy to see that $$s=2^{3}+2+1=2^{4}+(-1)\cdot 2^2+(-1)\cdot 1=2^{2}+2\cdot2+3\cdot 1.$$
	Then we write
	$$s=(0,1,0,1,1),~(1,0,-1,0,-1),~{\rm or}~(0,0,1,2,3).$$
	For convenience, we write $q^{m}+1=(\overset{m-1}{\overbrace{q-1,q-1,\cdots,q-1}},q+1)$. 
	
	Given two positive integers $s$ and $s'$ with $1\leq s,s'\leq q^{m}-1$. Suppose $s=\sum_{i=0}^{m-1}s_{i}q^{i}$ and $s'=\sum_{i=0}^{m-1}s'_{i}q^{i}$, where $s_{i},s_{i}'\in \mathbb{Z}$. Write $s$ and $s'$ as $$s=(s_{m-1}, s_{m-2},\cdots,s_{0})~{\rm and}~s'=(s'_{m-1}, s'_{m-2},\cdots,s'_{0}).$$ When writing $$(s_{m-1}, s_{m-2},\cdots,s_{0})>({\rm resp.} =,\geq)~(s'_{m-1}, s'_{m-2},\cdots,s'_{0}),$$ $$(s_{m-1}, s_{m-2},\cdots,s_{0})>({\rm resp.} =,\geq)~s',~{\rm or}~s>({\rm resp.} =,\geq)~(s'_{m-1}, s'_{m-2},\cdots,s'_{0}),$$ we always mean $s>({\rm resp.} =,\geq)~s'$. It is easily seen that $$s\pm s'=(s_{m-1}\pm s'_{m-1}, s_{m-2}\pm s'_{m-2},\cdots,s_{0}\pm s'_{0}).$$ 
	
	There is a well-known bound on the minimum distance of cyclic codes.
		\begin{Lemma}{\rm \bf(BCH bound)}\label{l2.3}
		Let $\mathcal{C}$ be a cyclic code of length $n$ over $\mathbb{F}_{q}$ with defining set $T$. Assume $T$ contains $\delta-1$ consecutive elements for some integer $\delta$. Then $\mathcal{C}$ has minimum distance $d\geq \delta$. 
	\end{Lemma}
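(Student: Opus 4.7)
The plan is to argue by contradiction via the classical Vandermonde determinant trick. Suppose $\mathcal{C}$ contains a nonzero codeword $c(x)=\sum_{i=0}^{n-1}c_{i}x^{i}$ of Hamming weight $w$ with $1\le w\le \delta-1$, and let $\{i_{1},i_{2},\ldots,i_{w}\}$ be its support, so each $c_{i_{j}}\ne 0$.

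First, I would fix notation: by hypothesis $T$ contains $\delta-1$ consecutive integers, which I can write as $b,b+1,\ldots,b+\delta-2$. Letting $\beta$ be the primitive $n$-th root of unity used in the definition of $T$, the condition $\{b,b+1,\ldots,b+\delta-2\}\subseteq T$ is equivalent to
$$
c(\beta^{b+k})=\sum_{j=1}^{w}c_{i_{j}}\,\beta^{(b+k)i_{j}}=0,\qquad k=0,1,\ldots,\delta-2.
$$
Since $w\le \delta-1$, I may keep only the first $w$ of these equations and still obtain a homogeneous linear system in the unknowns $c_{i_{1}},\ldots,c_{i_{w}}$.

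Second, I would examine the coefficient matrix $M=\bigl(\beta^{(b+k)i_{j}}\bigr)_{0\le k\le w-1,\,1\le j\le w}$. Factoring $\beta^{b i_{j}}$ out of the $j$-th column gives $\det(M)=\Bigl(\prod_{j=1}^{w}\beta^{b i_{j}}\Bigr)\cdot\det\bigl(\beta^{k i_{j}}\bigr)$, and the remaining matrix is Vandermonde in the nodes $\beta^{i_{1}},\ldots,\beta^{i_{w}}$. Because $\beta$ has order exactly $n$ and $0\le i_{1}<\cdots<i_{w}\le n-1$, these nodes are pairwise distinct, so the Vandermonde determinant is nonzero; the scalar prefactor is a product of powers of the unit $\beta$ and hence also nonzero. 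Therefore $M$ is invertible, forcing $c_{i_{1}}=\cdots=c_{i_{w}}=0$, contradicting $w\ge 1$.

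This completes the proof, giving $d\ge \delta$. There is no real obstacle here; the only mild care needed is to factor out $\beta^{b i_{j}}$ cleanly so that one is reduced to a genuine Vandermonde determinant, and to note that $\beta^{i_{1}},\ldots,\beta^{i_{w}}$ are distinct precisely because the exponents lie in the range $[0,n-1)$ and $\operatorname{ord}(\beta)=n$.
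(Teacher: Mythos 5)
Your proof is correct: the paper states this lemma as the well-known BCH bound and gives no proof, and your argument is precisely the classical Vandermonde-determinant proof that underlies it (factor $\beta^{bi_j}$ out of each column, observe the nodes $\beta^{i_1},\ldots,\beta^{i_w}$ are distinct because $0\le i_1<\cdots<i_w\le n-1$ and $\beta$ has order $n$, conclude the coefficient matrix is invertible and the codeword is zero). The only cosmetic slip is writing the exponent range as $[0,n-1)$ in the closing remark where you mean $0\le i_j\le n-1$; nothing in the argument depends on it.
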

	
	  The lemmas below will be employed later.
	
	\begin{Lemma}\label{l2.4}{\rm \cite{bb}}
		The code $\mathcal{C}_{(q,n,\delta,1)}$ has minimum distance $d=\delta$ if $\delta|n$. 
	\end{Lemma}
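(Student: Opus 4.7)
The plan is to establish the two inequalities $d \geq \delta$ and $d \leq \delta$ separately. The lower bound is essentially immediate from the BCH bound (Lemma~\ref{l2.3}): by definition the defining set of $\mathcal{C}_{(q,n,\delta,1)}$ contains the $\delta-1$ consecutive integers $1,2,\ldots,\delta-1$, so $d \geq \delta$.

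For the upper bound, I would exhibit an explicit nonzero codeword of Hamming weight $\delta$. Since $\delta \mid n$, set $k = n/\delta$ and consider
\begin{equation*}
f(x) \;=\; \frac{x^{n}-1}{x^{k}-1} \;=\; 1 + x^{k} + x^{2k} + \cdots + x^{(\delta-1)k} \;\in\; \mathbb{F}_{q}[x].
\end{equation*}
Under the identification of $\mathbb{F}_{q}^{n}$ with $\mathcal{R}_{n}$, this polynomial corresponds to a vector of weight exactly $\delta$, so it suffices to verify that $f(x) \in \mathcal{C}_{(q,n,\delta,1)}$, i.e.\ that $f(\beta^{i}) = 0$ for every $i$ in the defining set $T = C_{1} \cup C_{2} \cup \cdots \cup C_{\delta-1}$.

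The key observation is that $f(\beta^{i}) = 0$ iff $\beta^{i}$ is an $n$-th root of unity that is not a $k$-th root of unity, which is equivalent to $\delta \nmid i$. Hence I need to show that no element of $T$ is divisible by $\delta$. For each coset leader $j \in \{1,\ldots,\delta-1\}$ we have $\delta \nmid j$; for a general element $jq^{t} \bmod n$ of $C_{j}$, since $\delta \mid n$ we get $jq^{t} \bmod n \equiv jq^{t} \pmod{\delta}$. Because $\gcd(n,q)=1$ forces $\gcd(\delta,q)=1$, the residue $q^{t} \bmod \delta$ is a unit, so $\delta \nmid jq^{t}$. Therefore $f$ vanishes on $\{\beta^{i} : i \in T\}$, which means $f \in \mathcal{C}_{(q,n,\delta,1)}$, yielding $d \leq \delta$ and completing the proof.

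The only subtle point is the last coprimality argument (ensuring the $q$-cyclotomic closure of $\{1,\ldots,\delta-1\}$ avoids multiples of $\delta$); everything else is a direct computation with the generator polynomial $x^{k}-1$ of the repetition-type cyclic subcode. I do not expect any genuine obstacle here, since this result is essentially the classical construction of a minimum-weight codeword from a divisor of $n$.
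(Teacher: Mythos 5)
Your argument is correct and complete: the lower bound is the BCH bound, and for the upper bound the codeword $f(x)=\frac{x^{n}-1}{x^{n/\delta}-1}$ has weight $\delta$ and lies in $\mathcal{C}_{(q,n,\delta,1)}$, since every element of $T=C_{1}\cup\cdots\cup C_{\delta-1}$ is nonzero modulo $\delta$ (using $\delta\mid n$ and $\gcd(q,\delta)=1$), so $\beta^{i}$ is never a $\frac{n}{\delta}$-th root of unity for $i\in T$. The paper itself gives no proof of this lemma, citing it from the reference [bb]; your construction is exactly the classical minimum-weight-codeword argument that underlies that cited result, so there is nothing to object to.
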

	
	\begin{Lemma}\label{l2.5}
		Suppose $n=\frac{q^m+1}{N}$, where $1\leq N<q^m+1$ is a divisor of $q^m+1$. The code $\mathcal{C}_{(q,n,\delta+1,0)}$ has minimum distance $d\geq 2\delta$. In particular, $d=2\delta$ if $\delta\big| n$.
	\end{Lemma}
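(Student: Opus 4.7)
The plan is to exploit the symmetry coming from $q^m \equiv -1 \pmod{n}$. Since $n$ divides $q^m+1$, multiplication by $q^m$ sends $i$ to $-i$ in $\mathbb{Z}_n$, so $-i \in C_i$ and hence $C_i = C_{n-i}$ for every $i$. Consequently, the defining set
\[
T \;=\; C_0 \cup C_1 \cup \cdots \cup C_{\delta-1}
\]
of $\mathcal{C}_{(q,n,\delta+1,0)}$ automatically contains $\{n-\delta+1,\ldots,n-1\}$ as well, so
\[
\{\, n-\delta+1,\, n-\delta+2,\, \ldots,\, n-1,\, 0,\, 1,\, \ldots,\, \delta-1 \,\} \;\subseteq\; T
\]
is a block of $2\delta-1$ consecutive residues modulo $n$. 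Applying the BCH bound (Lemma~\ref{l2.3}) immediately yields $d \geq 2\delta$.

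For the second assertion I would exhibit an explicit codeword of weight $2\delta$ when $\delta \mid n$. Set $c = n/\delta$ (one may assume $c \geq 2$, since $\delta=n$ collapses the code to $\{\mathbf{0}\}$) and consider
\[
f(x) \;=\; 1 + x^{c} + x^{2c} + \cdots + x^{(\delta-1)c} \;=\; \frac{x^{n}-1}{x^{c}-1} \in \mathbb{F}_q[x].
\]
Its roots in $\overline{\mathbb{F}_q}$ are precisely the $\beta^{i}$ with $\delta \nmid i$; in particular $f(\beta^{j}) = 0$ for $1 \leq j \leq \delta-1$. The factor $(1-x)$ adjoins the extra root $\beta^{0}=1$, so
\[
g(x) \;:=\; (1-x)\,f(x)
\]
vanishes on $\{\beta^{0},\beta^{1},\ldots,\beta^{\delta-1}\}$, and as $\deg g = (\delta-1)c + 1 < n$ it reduces modulo $x^{n}-1$ to a nonzero element of $\mathcal{C}_{(q,n,\delta+1,0)}$.

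Finally, I would compute the Hamming weight of $g$: the supports of $f(x)$ and $x f(x)$ are
\[
\{\,jc : 0 \leq j \leq \delta-1\,\} \quad\text{and}\quad \{\,jc+1 : 0 \leq j \leq \delta-1\,\},
\]
which are disjoint because $c \geq 2$, and both lie inside $[0,n-1]$ because $(\delta-1)c+1 = n-c+1 \leq n-1$. Hence $g$ has exactly $2\delta$ nonzero monomials, realising a codeword of weight $2\delta$. Combined with the lower bound above, this gives $d = 2\delta$.

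No step poses a serious obstacle: the whole argument is the $-1 \equiv q^m \pmod n$ symmetry plus a direct construction borrowed from the classical proof of Lemma~\ref{l2.4}. The only point that deserves care is checking that the monomials of $f(x)$ and $xf(x)$ do not collide and do not wrap around modulo $x^n-1$, both of which reduce to the elementary inequality $c \geq 2$.
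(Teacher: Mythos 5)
Your proof is correct and takes essentially the same route as the paper: the lower bound comes from the $q^m\equiv -1\pmod{n}$ symmetry of the defining set together with the BCH bound, and the upper bound from the codeword $(1-x)f(x)$. The only difference is cosmetic: the paper invokes Lemma~\ref{l2.4} for the existence of a weight-$\delta$ codeword $c(x)$ and bounds the weight of $(x-1)c(x)$ by $2\delta$, whereas you reconstruct that codeword explicitly as $\frac{x^{n}-1}{x^{c}-1}$ and verify the weight $2\delta$ directly.
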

	\begin{proof}
		Since $q^m\equiv -1~({\rm mod}~n)$, the defining set of $\mathcal{C}_{(q,n,\delta+1,0)}$ contains $\{-\delta+1,-\delta+2,\cdots,-1,0,1,\cdots,\delta-2,\delta-1\}$ as a subset. It follows from the BCH bound that $d\geq 2\delta$. If $\delta|n$, by Lemma \ref{l2.4} there exists a codeword $c(x)\in \mathcal{C}_{(q,n,\delta,1)}$ with weight $\delta$. Note that $(x-1)c(x)\in \mathcal{C}_{(q,n,\delta+1,0)}$ and that the weight of $(x-1)c(x)$ is at most $2\delta$, thus $d\leq 2\delta$, which implies $d= 2\delta$.
	\end{proof}
	
	\section{Dimensions of $q$-ary BCH codes with length $\frac{q^m+1}{q+1}$}
	Throughout this section, $n=\frac{q^{m}+1}{q+1}$, where $m$ can only be an odd integer. The goal of this section is to determine the dimension of the narrow-sense BCH code $\mathcal{C}_{(q,n,\delta,1)}$ for $\delta=\ell q^{\frac{m-1}{2}}+1$, where $q>2$ and $2\leq \ell\leq q-1$. 
	
	The earlier reference gave the following two results.
	\begin{Lemma}\label{l3.1}{\rm \cite{ylly}}
		${\rm ord}_{n}(q)=2m$ if $(q,m)\neq (2,3)$.
	\end{Lemma}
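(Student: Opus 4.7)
The plan is to show both divisibilities: first that ${\rm ord}_{n}(q)$ divides $2m$, and then that no proper divisor of $2m$ works (except in the excluded case). The starting observation is that $n(q+1) = q^{m}+1$, so $q^{m} \equiv -1 \pmod{n}$. Squaring gives $q^{2m}\equiv 1 \pmod n$, so $r := {\rm ord}_{n}(q)$ divides $2m$. Since $m \geq 3$ forces $n \geq 3$ (so that $-1 \not\equiv 1 \pmod n$), the congruence $q^{m}\equiv -1$ shows $r \nmid m$. Combined with $r \mid 2m$, and using that $m$ is odd, this forces $r = 2m/k$ where $k$ is an odd divisor of $m$.

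Now I would argue by contradiction: assume $r < 2m$, so $k \geq 3$, and let $s := m/k \leq m/3$. Then $n \mid q^{r}-1 = q^{2s}-1$, so $n$ divides $\gcd(q^{m}+1, q^{2s}-1)$. This is exactly the situation Lemma \ref{l3.3} is designed for: with $u=m$ and $v=2s$, we have $\gcd(u,v) = (m/k)\gcd(k,2) = s$ (because $k$ is odd), and therefore $v/\gcd(u,v) = 2$ is even. The lemma then gives
\[
\gcd(q^{m}+1,\,q^{2s}-1) \;=\; q^{s}+1.
\]
Hence $n \mid q^{s}+1$, which is a very strong size constraint: $(q^{m}+1)/(q+1) \leq q^{s}+1$, i.e.\ $q^{m} \leq q^{s+1} + q^{s} + q$.

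The last step is to show this inequality fails except in the excluded case. Since $s \leq m/3$, I would bound $q^{s+1}+q^{s}+q \leq 3q^{m/3+1}$ and compare with $q^m$: for $m \geq 5$ the ratio $q^{m}/(3q^{m/3+1}) = q^{2m/3-1}/3$ is easily seen to exceed $1$ for any $q \geq 2$, giving a contradiction. The only remaining case is $m=3$, where $k=3$ and $s=1$; the inequality then reduces to $q^{3} \leq q^{2}+2q$, i.e.\ $q^{2}\leq q+2$, which holds with equality at $q=2$ and fails for all $q \geq 3$. Thus $(q,m)=(2,3)$ is genuinely the only exception, completing the proof.

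I do not expect a real obstacle here: the argument is essentially a divisibility analysis coupled with one application of Lemma \ref{l3.3}. The only mildly delicate point is bookkeeping — making sure that $k$ is odd (so that $\gcd(k,2)=1$, which is what puts us in the "even" branch of Lemma \ref{l3.3}) — and then carrying out the case split at $m=3$ to isolate the sporadic exception $(q,m)=(2,3)$.
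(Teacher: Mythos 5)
Your argument is correct. Note, however, that the paper does not prove this statement at all: Lemma \ref{l3.1} is quoted from the reference \cite{ylly}, so there is no in-paper proof to compare against. Your derivation is a sound, self-contained replacement: the reduction $q^{m}\equiv -1 \pmod{n}$ gives ${\rm ord}_{n}(q)\mid 2m$ and rules out divisors of $m$ (since $n\geq 3$), the oddness of $m$ correctly forces ${\rm ord}_{n}(q)=2m/k$ with $k$ odd, and the application of Lemma \ref{l3.3} with $u=m$, $v=2s$ (where $\gcd(m,2s)=s$ because $k$ is odd) lands in the even branch and yields $n\mid q^{s}+1$, whose size contradiction for $m\geq 5$ and the explicit factorization $(q-2)(q+1)\leq 0$ for $m=3$ isolate exactly the exception $(q,m)=(2,3)$. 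This is essentially the standard gcd-based argument one would expect in \cite{ylly}, and it fits cleanly with the tools the paper already states.
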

	
	\begin{Lemma}\label{l3.2}{\rm \cite{ylly}}
		For $1\leq a\leq q^{\frac{m-1}{2}}$ with $a\not\equiv0~({\rm mod}~q)$, $a$ is a coset leader with $|C_{a}|=2m$ except that $a=\frac{q^{\frac{m+1}{2}}-(-1)^{\frac{m+1}{2}}}{q+1}$.
	\end{Lemma}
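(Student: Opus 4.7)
The plan is to prove both assertions---that $a$ is the minimum of its coset and that $|C_a|=2m$---simultaneously, by exploiting the identity $q^m\equiv -1\pmod n$. When $|C_a|=2m$ this identity forces the anti-symmetric description
\[ C_a=\{[aq^s]_n : 0\le s\le m-1\}\cup\{n-[aq^s]_n : 0\le s\le m-1\}, \]
so that $a$ is the coset leader precisely when $a\le [aq^s]_n\le n-a$ for every $1\le s\le m-1$. Hence the two claims reduce, respectively, to a divisibility question (for the size) and a pair of digitwise inequalities (for coset-leadership).

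For $|C_a|=2m$: Lemma~\ref{l3.1} gives $\mathrm{ord}_n(q)=2m$, so $|C_a|$ is a divisor of $2m$. Since $m$ is odd, every proper divisor of $2m$ is either $d$ or $2d$ with $d\mid m$, together with $t=m$ itself. The case $t=m$ forces $n\mid 2a$ via $q^m\equiv -1$, which is ruled out by $2a\le 2q^{(m-1)/2}<n$ (a direct estimate on $n=(q^m+1)/(q+1)$, valid since $(q,m)\neq(2,3)$). For $t\mid m$ with $t\le m/3$, Lemma~\ref{l3.3} gives $\gcd(q^m+1,q^t-1)\in\{1,2\}$ because $m/\gcd(m,t)=m/t$ is odd; together with the elementary bound on $\gcd(q+1,q^t-1)$ this controls $\gcd(n,q^t-1)$ by a small constant, so $n\mid a(q^t-1)$ is incompatible with $a\le q^{(m-1)/2}$. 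The residual divisors $t=2d$ with $d<m$ are handled by the same estimates applied to $q^{2d}-1=(q^d-1)(q^d+1)$.

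For coset leadership: expand $a=\sum_{i=0}^{(m-1)/2}a_i q^i$ in $q$-adic form. Multiplication by $q^s$ shifts the digits, and reduction modulo $n=q^{m-1}-q^{m-2}+\cdots-q+1$ is carried out by the identity $q^m\equiv -1$. In the lower range $1\le s\le (m-1)/2$ the product $aq^s$ already lies below $n$, and the inequalities $a\le aq^s\le n-a$ reduce to comparing the digit patterns of $aq^s$ and $n$ position-by-position. In the upper range $(m+1)/2\le s\le m-1$, a single application of $q^m\equiv -1$ splits $aq^s$ into a low-order part plus a sign-flipped high-order part, producing a closed form for $[aq^s]_n$. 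Tracing when the required bounds fail shows that equality can occur only when $a$ equals the extremal value $a^{*}=(q^{(m+1)/2}-(-1)^{(m+1)/2})/(q+1)$; a direct computation then confirms that the coset of $a^*$ contains a strictly smaller element, isolating $a^*$ as the unique exception.

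The main obstacle will be the digit-level bookkeeping in the middle range $s\approx (m-1)/2$, where carries in the reduction may cascade across several $q$-adic positions and the resulting signed expression has to be compared with both $a$ and $n-a$ simultaneously. Pinning down the exact digit configuration that realizes the boundary case, and verifying that it forces $a=a^*$, is the combinatorial crux of the argument.
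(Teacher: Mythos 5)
First, note that the paper does not prove this statement at all: Lemma \ref{l3.2} is quoted from \cite{ylly}, so your attempt can only be judged on its own merits and against the analogous in-paper arguments (Proposition \ref{p3.1} for coset sizes, Proposition \ref{p3.2} and the Appendix proof of Proposition \ref{p3.4} for leadership in the adjacent range). Your coset-size half is essentially sound and mirrors the paper's Proposition \ref{p3.1}: with ${\rm ord}_n(q)=2m$, a size $d\mid m$ gives $n\mid a(q^d-1)$, hence $n\mid 2a$ by Lemma \ref{l3.3}, impossible for $a\le q^{(m-1)/2}$; a size $2d$ with $d\mid m$, $d<m$ gives $a\ge n/(q^d+1)$, which contradicts $a\le q^{(m-1)/2}$ after a short estimate (only $m=3$ is borderline and needs the refinement $\gcd(n,q+1)\mid 3$, which you gesture at but do not spell out). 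Your reduction of leadership to the $2m$ inequalities $a\le [aq^s]_n\le n-a$, $1\le s\le m-1$, via $q^m\equiv-1\pmod n$ is also correct.

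The genuine gap is that the leadership half---which is the actual content of the lemma---is only a plan. You assert that ``tracing when the required bounds fail shows that equality can occur only when $a=a^{*}$,'' but this trace is never carried out: nothing in the proposal verifies, for a general digit string $(a_{h-1},\dots,a_0)$ with $h=\frac{m-1}{2}$, that all $2m$ reductions stay in $[a,\,n-a]$, nor isolates the unique failing configuration. This is exactly the long signed-digit case analysis that the paper performs (for the next range) in Proposition \ref{p3.2} and the Appendix, and it cannot be waved through: you yourself label it ``the combinatorial crux.'' Moreover your easy/hard split is wrong at the decisive spot: the claim that $aq^s<n$ for all $1\le s\le \frac{m-1}{2}$ fails at $s=\frac{m-1}{2}$ once $a>n/q^{(m-1)/2}$, and the exceptional $a^{*}=\frac{q^{(m+1)/2}-(-1)^{(m+1)/2}}{q+1}$ lies precisely in that regime (e.g.\ $q=3$, $m=5$: $n=61$, $a^{*}=7$, $a^{*}q^{2}=63\equiv 2<7$), so the case you classify as routine is the one where the exception arises. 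Until the digit-level analysis for $s$ near $\frac{m-1}{2}$ (and its mirror images $s+m$) is actually executed and shown to fail only at $a^{*}$, the proposal does not prove the lemma.
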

	In this section, we consider integer $a$ in the larger range $1\leq a\leq (q-1)q^{\frac{m-1}{2}}$. We first compute the cardinalities of the cosets containing coset leaders in the range $[1,(q-1)q^{\frac{m-1}{2}}]$. 
	\begin{Proposition}\label{p3.1}
		 Suppose that $(q,m)\neq (2,3)$ and that $a$ is a coset leader in the range $1\leq a\leq q^{\frac{m+1}{2}}$ {\rm (}$1\leq a\leq q^{2}-q$ if $m=3${\rm )}, then $|C_{a}|=2m$ except in the following cases:
		
		\noindent {\rm (1)} $q\equiv 2~({\rm mod}~3)$, $m=3$ and $a=\frac{q^{2}-q+1}{3}$. Moreover, $|C_{a}|=2$.
		
		\noindent {\rm (2)} $q=2$, $m=9$ and $a=19$. Moreover, $|C_{a}|=6$.
		
		\noindent {\rm (3)} $q=4$, $m=5$ and $a=41$. Moreover, $|C_{a}|=2$. 
	\end{Proposition}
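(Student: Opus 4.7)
The plan is to exploit the fact that $|C_a|$ divides ${\rm ord}_n(q) = 2m$ (Lemma~\ref{l3.1}) and to show $|C_a|=2m$ unless $(q,m,a)$ falls into one of the three listed exceptional cases. Setting $d=|C_a|$ and using $aq^d \equiv a \pmod n$, one obtains the fundamental inequality
\[
a \ge \frac{n}{\gcd(n,\,q^d-1)}.
\]
After computing $\gcd(n,q^d-1)$ for each proper divisor $d$ of $2m$, I would compare this lower bound with the hypothesis $a\le q^{(m+1)/2}$ (respectively $a\le q^2-q$ when $m=3$) and eliminate everything outside cases (1)--(3).

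First I would dispose of odd divisors. Since $m$ is odd, the proper divisors of $2m$ split into two families: (a) divisors of $m$, and (b) $d=2e$ with $e\mid m$ and $e<m$. For (a), Lemma~\ref{l3.3} with $(u,v)=(m,d)$ gives $v/\gcd(u,v)=1$, so $\gcd(q^m+1,q^d-1)\le 2$. Since $m$ odd forces $n=(q^m+1)/(q+1)$ to be odd when $q$ is odd, while the gcd equals $1$ when $q$ is even, in either case $\gcd(n,q^d-1)=1$, yielding $a\ge n$, which contradicts the size hypothesis on $a$ for every $m\ge 3$.

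The heart of the argument is family (b). For $d=2e$ with $e\mid m$, $e<m$, Lemma~\ref{l3.3} gives $\gcd(q^m+1,q^{2e}-1)=q^e+1$. To pass from $q^m+1$ to $n$, I would write $q^m+1=(q^e+1)M$ with $M=\sum_{i=0}^{m/e-1}(-1)^i q^{m-(i+1)e}$, and $q^e+1=(q+1)E$ with $E=(q^e+1)/(q+1)$, so that $n=EM$. Reducing $M$ modulo $q+1$ (where $q\equiv -1$) collapses every summand to $1$ because $e$ and $m$ are odd, giving $M\equiv m/e \pmod{q+1}$. Combined with $\gcd(n,q^e-1)=1$ (from family (a)), this yields
\[
\gcd(n,q^{2e}-1)=\gcd(n,q^e+1)=E\cdot\gcd\!\bigl(\tfrac{m}{e},\,q+1\bigr),
\]
and consequently
\[
a\ge\frac{q^m+1}{(q^e+1)\gcd(m/e,\,q+1)}.
\]

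The final and hardest step is to enumerate all triples $(q,m,e)$ compatible with $a\le q^{(m+1)/2}$ (or $a\le q^2-q$ when $m=3$). Because $e\mid m$ with $e<m$ forces $e\le m/3$, and $\gcd(m/e,q+1)\le\min(m/e,q+1)$, a leading-order estimate shows the inequality can hold only when $m$ is small or a divisibility coincidence among $m$, $e$, $q+1$ inflates the gcd. Running through the surviving candidates and verifying in each that the putative $a$ is indeed a coset leader with the asserted coset size, one arrives at exactly three cases: $m=3$ with $q\equiv 2\pmod 3$ and $e=1$, giving $a=(q^2-q+1)/3$ and $|C_a|=2$ (case (1)); $q=4$, $m=5$, $e=1$, giving $a=41$ and $|C_a|=2$ (case (3)); and $q=2$, $m=9$, $e=3$, giving $a=19$ and $|C_a|=6$ (case (2)). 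The main obstacle will be the enumeration for composite odd $m$ (e.g.\ $m=9,15,21,25,\dots$), where multiple choices of $e$ must each be excluded and the naive inequality has to be sharpened.
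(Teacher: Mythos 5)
Your plan is correct and is essentially the paper's own proof: both rest on ${\rm ord}_{n}(q)=2m$, a case split over the divisors of $2m$ (odd divisors eliminated because $n$ is odd, even divisors $2e$ reduced via Lemma \ref{l3.3} to $\gcd(q^{m}+1,q^{2e}-1)=q^{e}+1$), the comparison of the resulting divisibility constraint with $a\leq q^{\frac{m+1}{2}}$ (resp.\ $a\leq q^{2}-q$), and a finite enumeration that leaves exactly cases (1)--(3). Your identity $\gcd(n,q^{e}+1)=\frac{q^{e}+1}{q+1}\gcd\!\left(\frac{m}{e},q+1\right)$ is a uniform packaging of the remainder computations the paper performs subcase by subcase (each of its divisions by $q+1$ leaves remainder $\frac{m}{e}$, e.g.\ $n=(q+1)(q-2)+3$ for $m=3$ and $\frac{q^{9}+1}{q^{3}+1}=(q+1)(\cdots)+3$ for $m=9$), and the enumeration you defer is precisely the paper's Subcases 2.1--2.3, which your explicit lower bound $a\geq\frac{q^{m}+1}{(q^{e}+1)\gcd(m/e,\,q+1)}$ disposes of with routine small-case checks.
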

	
	\begin{proof}
		We see from Lemma \ref{l3.1} that ${\rm ord}_{n}(q)=2m$. Since $|C_{a}|$ is a divisor of $2m$, $|C_{a}|=2m$, $m$, $\frac{2m}{l}$, or $\frac{m}{l}$, where $3\leq l\leq m$ and $l|m$.
		
		{\bf Case 1.} Suppose $|C_{a}|=m$. Then $-a\equiv aq^{m}\equiv a~({\rm mod}~n)$ implying $n|2a$. As $n=\frac{q^{m}+1}{q+1}=(q-1)q^{m-2}+(q-1)q^{m-4}+\cdots+(q-1)q+1$, $n$ is odd; thus $n|a$, which is impossible.
		
		{\bf Case 2.} Suppose $|C_{a}|=\frac{2m}{l}$. Then $n\big|a(q^{\frac{2m}{l}}-1)$, or equivalently, $(q^{m}+1)\big|a(q+1)(q^{\frac{2m}{l}}-1)$. By 
		Lemma \ref{l3.3}, ${\rm gcd}(q^{m}+1,q^{\frac{2m}{l}}-1)=q^{\frac{m}{l}}+1$, and so $(q^{m}+1)\big|a(q+1)(q^{\frac{m}{l}}+1)$. Note that
		\begin{equation}\label{e3.1}
			a(q+1)(q^{\frac{m}{l}}+1)\leq q^{\frac{m+1}{2}}(q+1)(q^{\frac{m}{l}}+1)=q^{\frac{(m+3)l+2m}{2l}}+q^{\frac{(m+1)l+2m}{2l}}+q^{\frac{m+3}{2}}+q^{\frac{m+1}{2}}.
		\end{equation}
		Since $l|m$ and $m$ is odd, three subcases arise: $l=3$, $5\leq l\leq \frac{m}{3}$, and $l=m$. 
		
		{\bf Subcase 2.1.} Assume $l=3$. It follows from (\ref{e3.1}) that $a(q+1)(q^{\frac{m}{3}}+1)\leq q^{\frac{5m+9}{6}}+q^{\frac{5m+3}{6}}+q^{\frac{m+3}{2}}+q^{\frac{m+1}{2}}$. Thus $(q^{m}+1)\big|a(q+1)(q^{\frac{m}{3}}+1)$ holds only if $m\leq \frac{5m+9}{6}$, i.e., $m\leq 9$. As $l|m$ and $m$ is odd, $m=3$ or $9$.
		
	    If $m=3$, then $n=\frac{q^{3}+1}{q+1}$ and $n|a(q+1)$. As $n=(q+1)(q-2)+3$, ${\rm gcd}(n,q+1)=1$ or $3$. If ${\rm gcd}(n,q+1)=1$, then $n|a$, which is impossible. If ${\rm gcd}(n,q+1)=3$, or equivalently, $q\equiv 2~({\rm mod}~3)$, then $\frac{n}{3}|a$; thus $a=\frac{n}{3}$ and $|C_{a}|=\frac{2\times 3}{3}=2$.
		
		If $m=9$, then $(q^{9}+1)\big|a(q+1)(q^{3}+1)$. We see from (\ref{e3.1}) that $a(q+1)(q^{3}+1)\leq q^{9}+q^{8}+q^{6}+q^{5}<2(q^{9}+1)$, and hence $a(q+1)(q^{3}+1)=q^{9}+1$. So $(q+1)\big|\frac{q^{9}+1}{q^{3}+1}$. Note that $\frac{q^{9}+1}{q^{3}+1}=(q+1)(q^{5}-q^{4}+q^{3}-2q^{2}+2q-2)+3$.
		Thus $(q+1)|3$ implying $q=2$. Meanwhile, $a=\frac{2^{9}+1}{3(2^{3}+1)}=19$ and $|C_{a}|=\frac{2\times 9}{3}=6$.
		
		{\bf Subcase 2.2.} Assume $5\leq l\leq \frac{m}{3}$. By the inequality (3.1), we have
		\begin{align*}
			a(q+1)(q^{\frac{m}{l}}+1)&<q^{\frac{(m+5)l+2m}{2l}}=q^{\frac{(l+2)m+5l}{2l}}<q^{\frac{(l+2)m+9l}{2l}}\leq q^{\frac{(l+2)m+3m}{2l}}\\
			&=q^{\frac{(l+5)m}{2l}}\leq q^{\frac{2lm}{2l}}=q^{m}<q^{m}+1,
		\end{align*}
		which is a contradiction to the fact that $(q^{m}+1)\big|a(q+1)(q^{\frac{m}{l}}+1)$.
		
		{\bf Subcase 2.3.} Assume $l=m$. Then $(q^{m}+1)\big|a(q+1)^{2}$. It follows from (3.1) that $a(q+1)^{2}\leq q^{\frac{m+5}{2}}+2q^{\frac{m+3}{2}}+q^{\frac{m+1}{2}}\leq q^{\frac{m+7}{2}}+q^{\frac{m+1}{2}}$,
		and so $(q^{m}+1)\big|a(q+1)^{2}$ holds only if $m\leq \frac{m+7}{2}$, i.e., $m\leq 7$; thus $m=3$, $5$ or $7$ as $m\geq 3$ is odd. The case $m=3$ has been discussed in Subcase 2.1. 
		
	    If $m=5$, then $n=\frac{q^{5}+1}{q+1}$ and $n\big|a(q+1)$. As $n=(q+1)(q^{3}-2q^{2}+3q-4)+5$, ${\rm gcd}(n,q+1)=1$ or $5$. If ${\rm gcd}(n,q+1)=1$, then $n|a$, which is impossible. Suppose ${\rm gcd}(n,q+1)=5$, then $\frac{n}{5}\big|a$. Hence $a=\frac{\ell n}{5}$, where $1\leq \ell \leq 4$. The assumption $a\leq q^{\frac{m+1}{2}}=q^{3}$ gives $(\ell q^{2}-5q-5)q^{3}+\ell\leq 0$, which holds if and only if $\ell q^{2}-5q-5<0$, yielding $q=4$ and $\ell=1$. Thus $a=\frac{4^{5}+1}{5\times 5}=41$ and $|C_{a}|=\frac{2\times 5}{5}=2$.
		
		If $m=7$, then $n=\frac{q^{7}+1}{q+1}$ and $(q^{7}+1)\big|a(q+1)^{2}$. We know from (3.1) that $a(q+1)^{2}\leq q^{7}+q^{4}<2(q^{7}+1)$, and hence $a(q+1)^{2}=q^{7}+1$. So $(q+1)|n$. Since $n=(q+1)(q^{5}-2q^{4}+3q^{3}-4q^{2}+5q-6)+7$, 
		$(q+1)|7$. Thus $q=6$ and $a=\frac{n}{7}$. As $\frac{n}{7}>q^{4}=q^{\frac{m+1}{2}}$, this is impossible.
		
		{\bf Case 3.} Suppose $|C_{a}|=\frac{m}{l}$. Then $n\big|a(q^{\frac{m}{l}}-1)$, or equivalently, $(q^{m}+1)\big|a(q+1)(q^{\frac{m}{l}}-1)$. If $q$ is even, then by Lemma \ref{l3.3}, ${\rm gcd}(q^{m}+1,q^{\frac{m}{l}}-1)=1$ implying $(q^{m}+1)\big|a(q+1)$, that is, $n|a$, which is impossible. If $q$ is odd, then by Lemma \ref{l3.3} again, ${\rm gcd}(q^{m}+1,q^{\frac{m}{l}}-1)=2$ implying $(q^{m}+1)\big|2a(q+1)$; thus $n|2a$ implying $n|a$, which also is impossible.
	\end{proof}
	Next we will find out all coset leaders in the range $[q^{\frac{m-1}{2}}+1, (q-1)q^{\frac{m-1}{2}}]$, and then determine the dimension of  $\mathcal{C}_{(q,n,\delta,1)}$ for $\delta=\ell q^{\frac{m-1}{2}}+1$ with $2\leq \ell\leq q-1$. 
	
	For $m=3$, we have the following result.
	
	\begin{Proposition}\label{p3.2}
		Suppose $n=\frac{q^{3}+1}{q+1}$. Let $a$ be an integer with $q+1\leq a\leq (q-1)q$ and $a\not\equiv0~({\rm mod}~q)$. Denote the $q$-adic expansion of $a$ by $a_{1}q+a_{0}$. Then $a$ is a coset leader except in the following cases:
		
		\noindent {\rm (1)} $\lceil\frac{q}{3}\rceil\leq a_{1}\leq q-2$ and $1\leq a_{0}\leq q-1$,
		
		\noindent {\rm (2)} $1\leq a_{1}\leq \lfloor\frac{q-1}{3}\rfloor$ and $1\leq a_{0}\leq a_{1}$,
		
		\noindent {\rm (3)} $1\leq a_{1}\leq \lfloor\frac{q-3}{3}\rfloor$ and $q-1-2a_{1}\leq a_{0}\leq q-1$,
		
		\noindent {\rm (4)} $a_{1}=\lfloor\frac{q-1}{3}\rfloor$ and $q-2a_{1}\leq a_{0}\leq q-1$.
	\end{Proposition}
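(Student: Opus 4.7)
The plan is to exploit the fact that $n=(q^{3}+1)/(q+1)=q^{2}-q+1$, giving the crucial relations $q^{2}\equiv q-1\pmod{n}$ and $q^{3}\equiv -1\pmod{n}$. Combined with Lemma~\ref{l3.1}, which yields ${\rm ord}_{n}(q)=6$, this forces
$$C_{a}\subseteq\{a,\,aq,\,aq^{2},\,n-a,\,n-aq,\,n-aq^{2}\}\pmod{n},$$
so $a$ is a coset leader precisely when each of the other five representatives, reduced modulo $n$, is at least $a$. The strategy is to rewrite each of these five comparisons as an explicit inequality on the digits $(a_{1},a_{0})$.

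Writing $a=a_{1}q+a_{0}$ with $a_{1}\in[1,q-2]$ and $a_{0}\in[1,q-1]$, I would compute the 2-digit $q$-adic expansion of each conjugate using $q^{2}\equiv q-1\pmod{n}$, tracking the single ``carry'' that occurs depending on whether $a_{0}+a_{1}$ is less than, equal to, or greater than $q$. In the carry regime $a_{0}+a_{1}\geq q$, a direct computation yields $aq\bmod n<a$, so $a$ cannot be a coset leader; I would then verify that this region is absorbed by case~(1) (when $a_{1}\geq\lceil q/3\rceil$) and by the upper-$a_{0}$ portions of cases~(3)--(4) (when $a_{1}$ is small). In the no-carry regime $a_{0}+a_{1}\leq q-1$, the comparisons $a\leq aq$ and $a\leq n-aq^{2}$ hold automatically, and $a\leq n-a$ is implied by $a\leq n-aq$, so the two remaining substantive conditions reduce to
\begin{align*}
a\leq aq^{2}\pmod{n} &\iff a_{0}(q-2)\geq a_{1}(q+1),\\
a\leq n-aq\pmod{n} &\iff a_{0}(q+1)+a_{1}(2q-1)\leq n.
\end{align*}

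Next I would match these inequalities with the four listed cases. The condition $a_{0}(q-2)\geq a_{1}(q+1)$ automatically fails when $a_{0}\leq a_{1}$; more carefully, its failure region together with the carry regime is precisely case~(2) plus the ``small-$a_{0}$'' part of case~(1). The condition $a_{0}(q+1)+a_{1}(2q-1)\leq n$ expands, upon substituting $n=q^{2}-q+1$, to a lower threshold on $a_{0}$ in terms of $a_{1}$; computing this threshold gives $a_{0}\geq q-1-2a_{1}$ when $a_{1}\leq\lfloor(q-3)/3\rfloor$ (case~(3)) and the shifted threshold $a_{0}\geq q-2a_{1}$ when $a_{1}=\lfloor(q-1)/3\rfloor$ (case~(4)). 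The converse direction, that every $(a_{1},a_{0})$ outside cases~(1)--(4) satisfies the no-carry condition together with both displayed inequalities, follows by reversing each implication.

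The principal obstacle is bookkeeping according to $q\bmod 3$: the floors $\lfloor(q-1)/3\rfloor$, $\lfloor(q-3)/3\rfloor$ and the ceiling $\lceil q/3\rceil$ shift by one across the three residue classes, so case~(4) genuinely contributes a new region only when $q\not\equiv 0\pmod 3$; and in the subcase $q\equiv 2\pmod 3$ the boundary value $a=n/3$ arises (where $|C_{a}|=2$ by Proposition~\ref{p3.1}) and must be checked to lie on the coset-leader side of the inequalities, which it does with equality. Verifying that the four listed ranges jointly cover every non-leader $a$ in the allowed range and nothing more, and that the complement is precisely the set of coset leaders, is where the bulk of the calculation sits.
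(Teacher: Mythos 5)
Your proposal is correct and follows essentially the same route as the paper: the paper also runs through the conjugates $aq^{t}$ for $t=1,\dots,5$ using $q^{2}\equiv q-1$ and $q^{3}\equiv -1 \pmod n$ and turns each comparison into digit inequalities (its conditions c1)--c11)), and your consolidated criteria are equivalent to these -- the carry condition $a_{0}+a_{1}\geq q$ is exactly the paper's $t=1$ (and $t=5$) condition, your inequality $a_{0}(q-2)\geq a_{1}(q+1)$ is the paper's $t=2$ analysis, and $a_{0}(q+1)+a_{1}(2q-1)\leq n$ is precisely the paper's $t=4$ inequality, with $t=3$ subsumed as you note. The only caveat is your aside that the failure of $a_{0}(q-2)\geq a_{1}(q+1)$ together with the carry regime is ``precisely'' case (2) plus part of case (1): this is not literally true (e.g.\ for $q\equiv 1\pmod 3$, $a_{1}=\frac{q-1}{3}$, $a_{0}=a_{1}+1$ it falls in case (4)), but since the final step you defer is exactly checking that the union of all failure regions equals the union of cases (1)--(4), this slip does not affect the argument.
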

	
	\begin{proof}
		 We need to find out all integers $a$ with $q+1\leq a\leq (q-1)q$ and $a\not\equiv0~({\rm mod}~q)$ that are not coset leaders. In other words, for every $1\leq t\leq 2m-1=5$, we need to find out all integers $a$ with $q+1\leq a\leq (q-1)q$ and $a\not\equiv0~({\rm mod}~q)$ satisfying $[aq^{t}]_{n}<a$. By assumption we have $1\leq a_{1}\leq q-2$ and $1\leq a_{0}\leq q-1$. We organize the proof into five cases.
		
		{\bf Case 1}. When $t=1$, we have
		$$aq=a_{1}q^{2}+a_{0}q\equiv a_{1}(q-1)+a_{0}q~({\rm mod}~n).$$
		Denote $M_{1}=a_{1}(q-1)+a_{0}q$. It is easy to see that $M_{1}>0$. Moreover, 
		$$M_{1}\leq (q-2)(q-1)+(q-1)q=2q^{2}-4q+2<2n.$$
		If $0<M_{1}<n$, we have 
		$$[aq]_{n}=M_{1}=a_{1}q+a_{0}q-a_{1}=a_{1}q+a_{0}(q-1)-a_{1}+a_{0}>a_{1}q+a_{0}=a.$$
		Suppose $n<M_{1}<2n$. Then
		$$[aq]_{n}=M_{1}-n=a_{1}(q-1)+a_{0}q-q^{2}+q-1=(a_{1}+a_{0}-q+1)q-a_{1}-1.$$
		If $a_{1}+a_{0}\leq q-1$, then $[aq]_{n}<0$, which is impossible. So $a_{1}+a_{0}\geq q$. Meanwhile, 
		\begin{align*}
			[aq]_{n}<a~\Leftrightarrow &~(a_{0}-q+1)q<a_{1}+a_{0}+1.
		\end{align*}
		The right-hand side inequality above is trivially true. Therefore $[aq]_{n}<a$ if and only if $a$ satisfies the following condition:
		
		\noindent {\bf c1)} $a=a_{1}q+a_{0}$ with $1\leq a_{1}\leq q-2$ and $q-a_{1}\leq a_{0}\leq q-1$.
		
		{\bf Case 2.} When $t=2$, we have
		$$aq^{2}=a_{1}q^{3}+a_{0}q^{2}\equiv -a_{1}+a_{0}(q-1)~({\rm mod}~n).$$
		Denote $M_{2}=-a_{1}+a_{0}(q-1)$. It is clear that $M_{2}>0$ and  $M_{2}\leq -1+(q-1)^{2}=q^{2}-2q<n$. Thus $[aq^{2}]_{n}=M_{2}$, and then
		$$[aq^{2}]_{n}<a~\Leftrightarrow~(a_{0}-a_{1})q<2a_{0}+a_{1}.$$
		Since $3\leq 2a_{0}+a_{1}\leq 2(q-1)+q-2=3q-4$, $[aq^{2}]_{n}<a$ if and only if one of the following holds:
		
		\noindent (i) $a_{0}-a_{1}=2$ and $2q<2a_{0}+a_{1}$,
		
		\noindent (ii) $a_{0}-a_{1}=1$ and $q<2a_{0}+a_{1}$,
		
		\noindent (iii) $a_{0}-a_{1}\leq 0$.
		 
		 \noindent Dealing with these three conditions, we have that $[aq^{2}]_{n}<a$ if and only if $a$ satisfies one of the following three conditions:
		 
		 \noindent {\bf c2)} $a=a_{1}(q+1)+2$ with $\lceil\frac{2q-3}{3}\rceil\leq a_{1}\leq q-3$,
		 
		 \noindent {\bf c3)} $a=a_{1}(q+1)+1$ with $\lceil\frac{q-1}{3}\rceil\leq a_{1}\leq q-2$,
		 
		 \noindent {\bf c4)} $a=a_{1}q+a_{0}$ with $1\leq a_{0}\leq a_{1}\leq q-2$.
		 
		 {\bf Case 3.} When $t=3$, we have $aq^{3}\equiv -a~({\rm mod}~n)$,
		 and then $[aq^{3}]_{n}=n-a$. Hence 
		 $$[aq^{3}]_{n}<a~\Leftrightarrow~(q-1-2a_{1})q<2a_{0}-1.$$
		 As $1\leq 2a_{0}-1\leq 2(q-1)-1=2q-3$, $[aq^{3}]_{n}<a$ if and only if one of the following holds:
		 
		 \noindent (i) $q-1-2a_{1}=1$ and $q<2a_{0}-1$,
		 
		 \noindent (ii) $q-1-2a_{1}\leq 0$.
		 
		 \noindent Direct computation shows that $[aq^{3}]_{n}<a$ if and only if $a$ satisfies one of the following two conditions:
		 
		 \noindent {\bf c5)} $a=\frac{q(q-2)}{2}+a_{0}$ with $q\geq 4$ being even and $\frac{q}{2}+1\leq a_{0}\leq q-1$,
		 
		 \noindent {\bf c6)} $a=a_{1}q+a_{0}$ with $\lceil\frac{q-1}{2}\rceil\leq a_{1}\leq q-2$ and $1\leq a_{0}\leq q-1$.
		 
		 {\bf Case 4.} When $t=4$, we see from Case 1 that $aq^{4}\equiv -aq\equiv -M_{1}~({\rm mod}~n)$,
		 where $M_{1}=a_{1}(q-1)+a_{0}q$ and $0<M_{1}<2n$.
		 
		 Suppose $0<M_{1}<n$. Then $$[aq^{4}]_{n}=n-M_{1}=(q-1-a_{1}-a_{0})q+a_{1}+1.$$
		 If $a_{1}+a_{0}\geq q$, then $[aq^{4}]_{n}<0$, which is impossible, and hence $a_{1}+a_{0}\leq q-1$. Meanwhile,
		 $$[aq^{4}]_{n}<a~\Leftrightarrow~(q-1-2a_{1}-a_{0})q<a_{0}-a_{1}-1.$$
		 Since $-q+2\leq a_{0}-a_{1}-1\leq q-3$, $[aq^{4}]_{n}<a$ if and only if one of the following holds:
		 
		 \noindent (i) $q-1-2a_{1}-a_{0}=0$ and $0<a_{0}-a_{1}-1$,
		 
		 \noindent (ii) $q-1-2a_{1}-a_{0}\leq -1$.
		 
		 \noindent Assume (i), then $a_{0}=q-1-2a_{1}$ and $1\leq a_{1}\leq \lfloor\frac{q-3}{3}\rfloor$. Assume (ii), then ${\rm max}\{1, q-2a_{1}\}\leq a_{0}\leq q-1-a_{1}$ and $1\leq a_{1}\leq q-2$.
		 
		 Suppose $n<M_{1}<2n$. We first note that $M_{1}-n=(a_{1}+a_{0}-q+1)q-a_{1}-1$. It is easy to check that $M_{1}-n\geq 0$ if and only if $a_{1}+a_{0}\geq q$. In addition, 
		 $$[aq^{4}]_{n}=2n-M_{1}=(2q-2-a_{1}-a_{0})q+a_{1}+2.$$
		 It follows that
		 $$[aq^{4}]_{n}<a~\Leftrightarrow~(2q-2-2a_{1}-a_{0})q<a_{0}-a_{1}-2.$$
		 Since $-q+1\leq a_{0}-a_{1}-2\leq q-4$, $[aq^{4}]_{n}<a$ if and only if one of the following holds:
		 
		 \noindent (i) $2q-2-2a_{1}-a_{0}=0$ and $0<a_{0}-a_{1}-2$,
		 
		 \noindent (ii) $2q-2-2a_{1}-a_{0}\leq -1$.
		 
		 \noindent Assume (i), then $a_{0}=2q-2-2a_{1}$ and $\lceil\frac{q-1}{2}\rceil\leq a_{1}\leq \lfloor\frac{2q-5}{3}\rfloor$. Assume (ii), then $\lceil\frac{q}{2}\rceil\leq a_{1}\leq q-1$ and $2q-1-2a_{1}\leq a_{0}\leq q-1$.
		 
		 In conclusion, $[aq^{4}]_{n}<a$ if and only if $a$ satisfies one of the following four conditions:
		 
		 \noindent {\bf c7)} $a=(a_{1}+1)q-2a_{1}-1$ with $1\leq a_{1}\leq \lfloor\frac{q-3}{3}\rfloor$,
		 
		 \noindent {\bf c8)} $a=a_{1}q+a_{0}$ with $1\leq a_{1}\leq q-2$ and ${\rm max}\{1,q-2a_{1}\}\leq a_{0}\leq q-1-a_{1}$,
		 
		 \noindent {\bf c9)} $a=(a_{1}+2)q-2a_{1}-2$ with $\lceil\frac{q-1}{2}\rceil\leq a_{1}\leq \lfloor\frac{2q-5}{3}\rfloor$.
		 
		 \noindent {\bf c10)} $a=a_{1}q+a_{0}$ with $\lceil\frac{q}{2}\rceil\leq a_{1}\leq q-2$ and $2q-1-2a_{1}\leq a_{0}\leq q-1$.
		 
		 {\bf Case 5.} When $t=5$, it follows from Case 2 that $aq^{5}\equiv -aq^{2}\equiv -M_{2}~({\rm mod}~n)$, where $M_{2}=-a_{1}+a_{0}(q-1)$ and $0<M_{2}<n$. Hence $[aq^{5}]_{n}=n-M_{2}=(q-1-a_{0})q+a_{1}+a_{0}+1$. It follows that
		 \begin{align*}
		 	[aq^{5}]_{n}<a~&\Leftrightarrow~(q-1-a_{1}-a_{0})q+a_{1}+1<0\\
		 	&\Leftrightarrow~q-1-a_{1}-a_{0}\leq -1.
		 \end{align*}
		 Therefore $[aq^{5}]_{n}<a$ if and only if $a$ satisfies the following condition:
		 
		 \noindent {\bf c11)} $a=a_{1}q+a_{0}$ with $1\leq a_{1}\leq q-2$ and $q-a_{1}\leq a_{0}\leq q-1$.
		 
		 Summarizing and analysing all the conclusions above, we then arrive at the desired results.
	\end{proof}
	 
	 The following corollary is a direct application of Proposition \ref{p3.2}.
	\begin{Corollary}\label{c3.1}
		Let $n=\frac{q^{3}+1}{q+1}$ and let $\delta_{1}$ be the largest $q$-coset leader modulo $n$. Then
		
	    \noindent {\rm (1)} $\delta_{1}=\frac{q^{2}-2q}{3}$ if $q\equiv 0~({\rm mod}~3)$,
		
		\noindent {\rm (2)}  $\delta_{1}=\frac{q^{2}-3q+2}{3}$ if $q\equiv 1~({\rm mod}~3)$,
		
		\noindent {\rm (3)}  $\delta_{1}=\frac{q^{2}-q+1}{3}$ if $q\equiv 2~({\rm mod}~3)$.
	\end{Corollary}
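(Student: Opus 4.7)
The plan is to apply Proposition~\ref{p3.2} directly. Since $n - 1 = q^2 - q$, every integer in $[1, n-1]$ lies either in $[1, q]$, where Lemma~\ref{l3.2} and a small check determine the coset leaders, or in $[q+1, q^2-q]$. In the latter range, any multiple $a = bq$ with $1 \le b \le q-1$ satisfies $[aq^5]_n = [bq^6]_n = b < a$ (using $q^3 \equiv -1 \pmod n$), hence is not a coset leader. Therefore it suffices to maximize $a = a_1 q + a_0$ with $1 \le a_1 \le q-2$ and $1 \le a_0 \le q-1$ over those $a$ that avoid all four exclusion cases of Proposition~\ref{p3.2}.

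Exclusion~(1) forces $a_1 \le \lceil q/3\rceil - 1 = \lfloor (q-1)/3\rfloor$, and since $a$ increases much faster in $a_1$ than in $a_0$, the optimum is obtained at the largest admissible $a_1$ together with the largest $a_0$ avoiding exclusions~(2),~(3), and~(4). The proof then splits on $q \bmod 3$.

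For $q \equiv 0 \pmod 3$, I would take $a_1 = (q-3)/3$; exclusion~(3) forces $a_0 \le q/3$, exclusion~(2) forces $a_0 \ge q/3$, giving $a_0 = q/3$ and $\delta_1 = (q^2 - 2q)/3$. For $q \equiv 1 \pmod 3$, the maximal choice $a_1 = (q-1)/3$ fails because exclusions~(2) and~(4) leave no valid $a_0$; the next value $a_1 = (q-4)/3$ works with maximum $a_0 = (q+2)/3$ permitted by exclusion~(3), so $\delta_1 = (q^2 - 3q + 2)/3$. For $q \equiv 2 \pmod 3$, I would take $a_1 = (q-2)/3$; exclusion~(3) is then out of range of $a_1$, and exclusion~(4) permits $a_0 = (q+1)/3$, yielding $\delta_1 = (q^2 - q + 1)/3 = n/3$.

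The main obstacle is the case analysis itself: the floor and ceiling functions in the exclusion conditions behave differently in each residue class, and one must both verify that the proposed $a_1$ admits a valid $a_0$ (and that no larger $a_1$ does) and that the chosen $a_0$ simultaneously dodges every exclusion active at that $a_1$. Small cases such as $q \in \{3, 4\}$, where the optimizer happens to fall in $[1, q]$ rather than $[q+1, q^2-q]$, need a direct appeal to Lemma~\ref{l3.2} to confirm that the claimed formula still gives the correct largest coset leader.
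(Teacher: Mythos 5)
Your proposal is correct and follows essentially the same route as the paper: specialize the exclusion conditions of Proposition~\ref{p3.2} to each residue class of $q$ modulo $3$, take the largest admissible $a_{1}$ (dropping to the next value when no $a_{0}$ survives, as happens for $q\equiv 1~({\rm mod}~3)$), and then take the largest surviving $a_{0}$. The paper writes out only the case $q\equiv 0~({\rm mod}~3)$ and declares the other two similar, so your fuller case analysis, together with the explicit dismissal of multiples of $q$ and the remark on small $q$, matches its proof in substance.
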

	\begin{proof}
		Suppose $q\equiv 0~({\rm mod}~3)$. By Proposition \ref{p3.2}, for $q+1\leq a\leq (q-1)q$ with $a\not\equiv0~({\rm mod}~q)$, $a$ is a coset leader except in the following cases:
		
		\noindent (i) $\frac{q}{3}\leq a_{1}\leq q-2$ and $1\leq a_{0}\leq q-1$,
		
		\noindent (ii) $1\leq a_{1}\leq \frac{q-3}{3}$ and $1\leq a_{0}\leq a_{1}$,
		
		\noindent (iii) $1\leq a_{1}\leq \frac{q-3}{3}$ and $q-1-2a_{1}\leq a_{0}\leq q-1$.
		
		\noindent So when $a_{1}\geq \frac{q}{3}$, $a$ is not a coset leader. Let $a_{1}=\frac{q-3}{3}$. Then $q-1-2a_{1}=\frac{q+3}{3}$. We deduce from the cases (ii) and (iii) that $a=\frac{(q-3)q}{3}+a_{0}$ is not a coset leader unless $a_{0}=\frac{q}{3}$. Hence $\delta_{1}=\frac{(q-3)q}{3}+\frac{q}{3}=\frac{q^{2}-2q}{3}$, proving (1). The statements (2) and (3) can be similarly verified.
	\end{proof}
	
	For $n=\frac{q^{3}+1}{q+1}$, the parameters of $\mathcal{C}_{(q,n,\delta_{1},1)}$ and $\mathcal{C}_{(q,n,\delta_{1}+1,0)}$ are presented as follows.
	\begin{Theorem}\label{t3.1}
		Let $n=\frac{q^{3}+1}{q+1}$. Then
		\begin{itemize}[align=left,leftmargin=*]
			\item[{\rm (1)}] if $q\equiv 0~({\rm mod}~3)$, the codes $\mathcal{C}_{(q,n,\delta_{1},1)}$ and $\mathcal{C}_{(q,n,\delta_{1}+1,0)}$ have parameters $[q^{2}-q+1,7,\geq \frac{q^{2}-2q}{3}]$ and $[q^{2}-q+1,6,\geq \frac{2q^{2}-4q}{3}]$, respectively.
			\item[{\rm (2)}] if $q\equiv 1~({\rm mod}~3)$, the codes $\mathcal{C}_{(q,n,\delta_{1},1)}$ and $\mathcal{C}_{(q,n,\delta_{1}+1,0)}$ have parameters $[q^{2}-q+1, 7,\geq \frac{q^{2}-3q+2}{3}]$ and $[q^{2}-q+1, 6,\geq \frac{2q^{2}-6q+4}{3}]$, respectively.
			\item[{\rm (3)}] if $q\equiv 2~({\rm mod}~3)$, the codes $\mathcal{C}_{(q,n,\delta_{1},1)}$ and $\mathcal{C}_{(q,n,\delta_{1}+1,0)}$ have parameters $[q^{2}-q+1,3,\frac{q^{2}-q+1}{3}]$ and $[q^{2}-q+1,2,\frac{2q^{2}-2q+2}{3}]$, respectively.
		\end{itemize}
	\end{Theorem}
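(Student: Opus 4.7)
The plan is to compute the dimension and minimum distance of each code separately, based on the identity
\[
\dim(\mathcal{C}_{(q,n,\delta_{1},1)})=n-|C_{1}\cup C_{2}\cup\cdots\cup C_{\delta_{1}-1}|,
\]
so the main work is to enumerate the $q$-coset leaders in $[1,\delta_{1}-1]$ and to sum the sizes of the corresponding cosets. I would split the range $[1,\delta_{1}-1]$ into $[1,q]$ and $[q+1,\delta_{1}-1]$. By Lemma \ref{l3.2}, the coset leaders in $[1,q]$ are exactly $\{1,2,\ldots,q-2\}$ (since $q-1$ is the stated exception and multiples of $q$ are automatically not leaders), and each such coset has size $2m=6$. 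For the range $[q+1,\delta_{1}-1]$ I would invoke Proposition \ref{p3.2} to remove the four listed families of non-leaders, counting what remains (and excluding multiples of $q$) according to each residue class $q\pmod 3$.

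By Proposition \ref{p3.1}, every coset leader $a$ in $[1,q^{2}-q]$ satisfies $|C_{a}|=2m=6$ except $a=(q^{2}-q+1)/3=n/3$, which occurs only when $q\equiv 2\pmod 3$; but in that case $n/3=\delta_{1}$, so this short coset does \emph{not} belong to $T=C_{1}\cup\cdots\cup C_{\delta_{1}-1}$. Consequently $|T|=6L$, where $L$ is the coset-leader count from the previous step, and the claimed dimensions $7,7,3$ follow from an arithmetic tally in each residue class: one expects $L=q(q-1)/6-1$ in cases (1)--(2) and $L=(q-2)(q+1)/6$ in case (3).

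For the minimum distance of $\mathcal{C}_{(q,n,\delta_{1},1)}$, the BCH bound (Lemma \ref{l2.3}) gives $d\geq\delta_{1}$ in every case, which suffices for (1) and (2). In case (3), $\delta_{1}=n/3$ divides $n$, so Lemma \ref{l2.4} upgrades the bound to equality $d=\delta_{1}$. For the even-like subcode $\mathcal{C}_{(q,n,\delta_{1}+1,0)}$, the defining set is $\{0\}\cup C_{1}\cup\cdots\cup C_{\delta_{1}-1}$, which strictly adds the singleton $C_{0}$; hence the dimension is exactly one less than $\dim(\mathcal{C}_{(q,n,\delta_{1},1)})$. Lemma \ref{l2.5} yields $d\geq 2\delta_{1}$ in all cases, with equality in case (3) since $\delta_{1}\mid n$.

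The principal technical obstacle will be the bookkeeping in step one: the four non-leader families of Proposition \ref{p3.2} interact differently with the cutoff $\delta_{1}-1$ depending on the residue of $q\bmod 3$, and the floor/ceiling boundaries (together with the removal of multiples of $q$) must be tracked carefully so that the three counts come out cleanly to the predicted values of $L$.
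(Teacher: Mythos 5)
Your distance arguments coincide with the paper's: the BCH bound (Lemma \ref{l2.3}) for the lower bounds, Lemma \ref{l2.4} for the exact value in case (3) since $\delta_{1}=n/3$ divides $n$, Lemma \ref{l2.5} for the even-like subcodes, and the observation that $\mathcal{C}_{(q,n,\delta_{1}+1,0)}$ merely adds $C_{0}=\{0\}$ to the defining set, so its dimension is one less. The gap lies in the dimension computation, which is the heart of the statement. You reduce it to enumerating all coset leaders in $[1,\delta_{1}-1]$ via Lemma \ref{l3.2} and Proposition \ref{p3.2} and summing coset sizes, but you never carry out that enumeration: you only say that ``one expects'' $L=q(q-1)/6-1$ in cases (1)--(2) and $L=(q-2)(q+1)/6$ in case (3), and you yourself identify the interaction of Proposition \ref{p3.2}'s four non-leader families with the cutoff $\delta_{1}-1$ in each residue class as the principal obstacle. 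As written, the dimensions $7,7,3$ are asserted rather than derived.

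The enumeration is also unnecessary, and avoiding it is exactly what the paper's one-line proof intends (it cites Proposition \ref{p3.1} and Corollary \ref{c3.1}, not Proposition \ref{p3.2}). Since Corollary \ref{c3.1} says $\delta_{1}$ is the largest coset leader, every nonzero residue lies in a coset whose leader is at most $\delta_{1}$, hence $T=C_{1}\cup\cdots\cup C_{\delta_{1}-1}$ equals $\mathbb{Z}_{n}\setminus(\{0\}\cup C_{\delta_{1}})$ and
$$\dim\big(\mathcal{C}_{(q,n,\delta_{1},1)}\big)=n-|T|=1+|C_{\delta_{1}}|.$$
Proposition \ref{p3.1} then gives $|C_{\delta_{1}}|=2m=6$ when $q\equiv 0,1\pmod 3$ (dimension $7$) and $|C_{\delta_{1}}|=|C_{n/3}|=2$ when $q\equiv 2\pmod 3$ (dimension $3$); this is the same argument the paper uses again in the proof of Theorem \ref{t3.2} for $\delta>\delta_{1}$. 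Your predicted values of $L$ are in fact forced to be correct by this reasoning, but to make your own route self-contained you would still have to do the residue-class bookkeeping you deferred, which is considerably heavier than the direct argument above.
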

	\begin{proof}
		The proof follows from Lemmas \ref{l2.3}-\ref{l2.5}, Proposition \ref{p3.1} and Corollary \ref{c3.1}.
	\end{proof}
	\begin{Example}
		Take $q=5$ in Theorem \ref{t3.1}. Then $n=21$ and $\delta_{1}=7$. By Theorem \ref{t3.1},  the codes $\mathcal{C}_{(q,n,\delta_{1},1)}$ and $\mathcal{C}_{(q,n,\delta_{1}+1,0)}$ have parameters $[21,3,7]$ and $[21,2,14]$, respectively.
	\end{Example}
	
	When $m=3$, the dimension of $\mathcal{C}_{(q,n,\delta,1)}$ is given as follows.
	\begin{Theorem}\label{t3.2}
			Let $n=\frac{q^{3}+1}{q+1}$. For $\delta=\ell q+1$ with $2\leq \ell\leq q-1$, the dimension of the code $\mathcal{C}_{(q,n,\delta,1)}$ is given by
    		\begin{align*}
    			{\rm dim}(\mathcal{C}_{(q,n,\delta,1)})=\begin{cases}
    				n-3\ell(2q-1-3\ell), & {\rm if}~2\leq \ell\leq \lfloor\frac{q-1}{3}\rfloor,\\
    				1, & {\rm if}~\lceil\frac{q}{3}\rceil\leq \ell\leq q-1.
    				\end{cases}
    		\end{align*}
	\end{Theorem}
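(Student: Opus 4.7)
The plan is to use $\dim(\mathcal{C}_{(q,n,\delta,1)}) = n - |C_1 \cup C_2 \cup \cdots \cup C_{\delta-1}|$ and evaluate the defining set by determining which elements of $[1, \ell q]$ are coset leaders together with the sizes of the corresponding cosets. Since $\ell q \leq (q-1)q = q^2 - q$, Proposition \ref{p3.1} applies: in the first range $2 \leq \ell \leq \lfloor (q-1)/3 \rfloor$, the only exception that could intervene is $a = (q^2 - q + 1)/3$ when $q \equiv 2 \pmod{3}$, and a quick comparison gives $(q^2-q+1)/3 > (q-1)q/3 \geq \ell q$, so every coset leader $a$ in $[1, \ell q]$ satisfies $|C_a| = 2m = 6$. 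The problem therefore reduces to counting the coset leaders in $[1, \ell q]$ and multiplying by $6$.

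For that count I would partition $[1, \ell q]$ into the blocks $B_{a_1} = [a_1 q + 1, a_1 q + q - 1]$ for $a_1 = 0, 1, \ldots, \ell - 1$, together with the multiples of $q$ which lie in cosets of smaller integers and hence contribute no new leaders. Lemma \ref{l3.2} gives $q - 2$ coset leaders in $B_0$ (all of $\{1, \ldots, q-1\}$ except $q-1$). For $1 \leq a_1 \leq \ell - 1$, Proposition \ref{p3.2} enumerates the non-coset-leaders; since $a_1 \leq \lfloor (q-1)/3 \rfloor - 1$, cases (1) and (4) of that proposition are vacuous, and a short case split on $q \bmod 3$ confirms $\ell - 1 \leq \lfloor (q-3)/3 \rfloor$, so case (3) applies in full. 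Its $a_0$-range $[q-1-2a_1, q-1]$ is disjoint from the $a_0$-range $[1, a_1]$ of case (2) because $a_1 < q/3$, so $B_{a_1}$ contributes $(q-1) - (3a_1 + 1) = q - 3a_1 - 2$ coset leaders. Summing yields
$$L(\ell) = (q-2) + \sum_{a_1=1}^{\ell-1}(q - 3a_1 - 2) = \ell(q-2) - \frac{3\ell(\ell-1)}{2} = \frac{\ell(2q-1-3\ell)}{2},$$
whence $|C_1 \cup \cdots \cup C_{\ell q}| = 6L(\ell) = 3\ell(2q-1-3\ell)$, proving the first formula.

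For the second range $\lceil q/3 \rceil \leq \ell \leq q - 1$, I would show directly that $\ell q \geq \delta_1$ via Corollary \ref{c3.1} in each residue class of $q$ modulo $3$. For example, when $q \equiv 0 \pmod 3$, $\lceil q/3 \rceil = q/3$ and $\ell q \geq q^2/3 > (q^2 - 2q)/3 = \delta_1$; the cases $q \equiv 1, 2 \pmod 3$ are analogous. Consequently $C_1 \cup \cdots \cup C_{\ell q}$ contains the coset of every nonzero element of $\mathbb{Z}_n$, so its cardinality equals $n - 1$ and the dimension collapses to $1$.

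The main obstacle is the combinatorial bookkeeping of the middle step: one must verify in each residue class of $q$ modulo $3$ that, when $a_1 \leq \lfloor(q-1)/3\rfloor - 1$, the four exceptional families of Proposition \ref{p3.2} collapse to cases (2) and (3) with disjoint $a_0$-intervals, so that the per-block count $q - 3a_1 - 2$ is uniform across the summation. Once these boundary checks are completed, the remaining arithmetic—both the telescoping sum in the first range and the three one-line comparisons $\ell q > \delta_1$ in the second—is routine.
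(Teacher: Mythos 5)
Your proposal is correct and follows essentially the same route as the paper: it combines Proposition \ref{p3.1} (all relevant cosets have size $2m=6$), Lemma \ref{l3.2} and Proposition \ref{p3.2} (to identify the non-coset-leaders below $\ell q$, which reduce to cases (2) and (3) with disjoint $a_0$-ranges), and Corollary \ref{c3.1} (to see $\ell q\geq\delta_1$ in the second range), exactly as the paper does. The only difference is cosmetic bookkeeping—you count leaders block by block while the paper counts non-leaders $N_\ell=N_{1,\ell}\cup N_{2,\ell}$ and subtracts—and both yield $|T|=3\ell(2q-1-3\ell)$.
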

	\begin{proof}
		When $\lceil\frac{q}{3}\rceil\leq \ell\leq q-1$, we know from Corollary \ref{c3.1} that $\delta>\delta_{1}$, where $\delta_{1}$ is the largest coset leader, which implies that $C_{1}\cup C_{2}\cup\cdots \cup C_{\delta-1}=C_{1}\cup C_{2}\cup\cdots \cup C_{\delta_{1}}$, and hence ${\rm dim}(\mathcal{C}_{(q,n,\delta,1)})=n-|C_{1}\cup C_{2}\cup\cdots \cup C_{\delta_{1}}|=|C_{0}|=1$. 
		
		For $2\leq \ell \leq \lfloor\frac{q-1}{3}\rfloor$, denote by $N_{\ell}$ the set consisting of all integers $a$ with $q+1\leq a\leq (q-1)q$ and $a\not\equiv0~({\rm mod}~q)$ that are not coset leaders. By Proposition \ref{p3.2}, $N_{\ell}=N_{1,\ell}\cup N_{2,\ell}$, where 
		\begin{align*}
			N_{1,\ell}&=\big\{a=a_{1}q+a_{0}~\big|~1\leq a_{1}\leq \ell-1,~1\leq a_{0}\leq a_{1}\big\},\\
			N_{2,\ell}&=\big\{a=a_{1}q+a_{0}~\big|~1\leq a_{1}\leq \ell-1,~q-1-2a_{1}\leq a_{0}\leq q-1\big\}.
		\end{align*}
		It is easy to check that $N_{1,\ell}\cap N_{2,\ell}=\emptyset$. Then $|N_{\ell}|=|N_{1,\ell}|+|N_{2,\ell}|=\frac{(3\ell+2)(\ell-1)}{2}$. On the one hand, the number of integers $a$ with $1\leq a\leq \delta-1$ that are multiples of $q$ is $\lfloor\frac{\delta-1}{q}\rfloor$. On the other hand, it follows from Lemma \ref{l3.2} that the number of integers $a$ with $1\leq a\leq \delta-1$ and $a\not\equiv0~({\rm mod}~q)$ that are not coset leaders is $|N_{\ell}|+1$. Therefore $C_{1}\cup C_{2}\cup\cdots \cup C_{\delta-1}$ is the union of $\delta-1-\lfloor\frac{\delta-1}{q}\rfloor-|N_{\ell}|-1=\ell(q-1)-|N_{\ell}|-1$ distinct $q$-cosets. By Proposition \ref{p3.1}, all these $q$-cosets have cardinality $6$. Hence 
		$${\rm dim}(\mathcal{C}_{(q,n,\delta,1)})=n-6(\ell(q-1)-|N_{\ell}|-1)=n-3\ell(2q-1-3\ell).$$
        The proof is then completed.
		\end{proof}
	\begin{Example}{\rm
		Take $q=8$ in Theorem \ref{t3.2}. Then $n=57$ and $2\leq \ell\leq q-1=6$. Let $\ell=2$, then $\delta=17$. By Theorem \ref{t3.2}, the codes $\mathcal{C}_{(q,n,\delta,1)}$ and $\mathcal{C}_{(q,n,\delta+1,0)}$ have parameters $[57,3,\geq 17]$ and $[57,2,\geq 34]$, respectively.}
	\end{Example}
	
	For $m=5$, using similar methods as $m=3$ we obtain the following result.
	\begin{Proposition}\label{p3.3}
		Suppose $n=\frac{q^{5}+1}{q+1}$. Let $a$ be an integer with $q^{2}+1\leq a\leq (q-1)q^{2}$ and $a\not\equiv 0~({\rm mod}~q)$. Denote the $q$-adic expansion of $a$ by $a_{2}q^{2}+a_{1}q+a_{0}$. 
		\begin{enumerate}[align=left,leftmargin=*]
			\setlength{\parindent}{0.6em}
			\renewcommand{\labelenumi}{{\rm (\theenumi)}}
			\renewcommand{\labelenumii}{{\rm (\roman*)}}
			\item When $q=3$, $a$ is a coset leader except that $a\in \{11,13,14,16,17\}$.
			\item When $q=4$, $a$ is a coset leader except that $a\in\{19,25,26,27,29,35,37,38,39,42,43,\\
			\indent 45,46,47\}$.
			\item When $q\geq 5$, $a$ is a coset leader except in the following cases:
			\begin{enumerate}
				\item  $a\in \{q^{2}+q+1,(q-2)q^{2}+q-1\}$,
				\item  $a=\frac{q^{3}+q^{2}+q}{2}-1$, where $q\geq 5$ is even,
				\item $a=a_{2}q^{2}+(q-1-a_{2})q+2a_{2}+1$, where $1\leq a_{2}\leq {\rm min}\{\lfloor\frac{q}{2}\rfloor-1,q-3\}$,
				\item $a=a_{2}q^{2}+(q-1-a_{2})q+a_{0}$, where $1\leq a_{2}\leq q-3$ and $1\leq a_{0}\leq {\rm min}\{2a_{2},q-1\}$,
				\item $a=a_{2}q^{2}+(q-a_{2})q+a_{0}$, where $\lceil\frac{q+1}{2}\rceil\leq a_{2}\leq q-3$ and $1\leq a_{0}\leq 2a_{2}-q$,
				\item $a=a_{2}q^{2}+(a_{1}+1)q-a_{1}$, where $1\leq a_{2}\leq q-3$ and $q-1-a_{2}\leq a_{1}\leq q-1$,
				\item $a=a_{2}q^{2}+(a_{2}+1)q-a_{2}-1$, where $\lceil\frac{q+1}{2}\rceil\leq a_{2}\leq q-3$,
				\item $a=a_{2}q^{2}+a_{2}q-a_{2}$, where $2\leq a_{2}\leq q-3$,
				\item $a=a_{2}q^{2}+(a_{1}+1)q-a_{1}-1$, where $2\leq a_{2}\leq q-3$ and $0\leq a_{1}\leq a_{2}-2$,
				\item $a=(q-2)q^{2}+a_{1}q+a_{0}$, where $1\leq a_{1}, a_{0}\leq q-1$.
			\end{enumerate}
		\end{enumerate}
	\end{Proposition}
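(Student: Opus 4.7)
The plan is to mirror the strategy from Proposition~\ref{p3.2}: since ${\rm ord}_{n}(q) = 2m = 10$, an integer $a$ with $q^{2}+1 \leq a \leq (q-1)q^{2}$ and $a \not\equiv 0~({\rm mod}~q)$ fails to be a coset leader if and only if $[aq^{t}]_{n} < a$ for some $1 \leq t \leq 9$. Writing $a = a_{2}q^{2} + a_{1}q + a_{0}$ with $1 \leq a_{2} \leq q-2$, $0 \leq a_{1} \leq q-1$ and $1 \leq a_{0} \leq q-1$, I would compute $[aq^{t}]_{n}$ explicitly as a function of $(a_{2},a_{1},a_{0})$ for each $t$ and identify the triples for which $[aq^{t}]_{n} < a$.

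Two structural simplifications shrink the workload considerably. First, $q^{5} \equiv -1~({\rm mod}~n)$ gives $[aq^{t+5}]_{n} = n - [aq^{t}]_{n}$ for $1 \leq t \leq 4$; in particular $[aq^{5}]_{n} = n-a$, and since $a \leq (q-1)q^{2} < n/2$ this is strictly larger than $a$, so $t=5$ contributes nothing. The substantive work thus reduces to $t \in \{1,2,3,4\}$ together with the sign-flipped counterparts $t \in \{6,7,8,9\}$. Second, using $n = q^{4} - q^{3} + q^{2} - q + 1$, each $aq^{t}$ can be rewritten as a polynomial $M_{t}$ in $q$ of degree at most $3$ whose coefficients are small integer combinations of $a_{0}, a_{1}, a_{2}$. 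Bounding $M_{t}$ crudely via $a < q^{3}$ caps the reduction $[aq^{t}]_{n} = M_{t} - k_{t} n$ at a handful of values of $k_{t}$, producing finitely many subcases per $t$.

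In each subcase the inequality $[aq^{t}]_{n} < a$ rearranges into a comparison of the form $(\text{coefficient of } q^{j}) \cdot q < (\text{lower-order term})$, which by the same digit-bound argument used in Cases~1--5 of Proposition~\ref{p3.2} collapses to a handful of elementary linear conditions on $(a_{2},a_{1},a_{0})$. Taking the union over $t = 1,\ldots,9$ yields the full set of non-coset-leaders; complementing within the allowed range produces the ten families (i)--(x).

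The main obstacle is the combinatorial bookkeeping: merging overlapping subcases into the compact form stated and handling boundary collisions at the extremes of the digit ranges. This is also where the sporadic exceptions for $q=3$ and $q=4$ enter: when $q$ is small, several of the intervals appearing in (i)--(x), such as $\lceil\frac{q+1}{2}\rceil \leq a_{2} \leq q-3$ or $1 \leq a_{2} \leq \lfloor\frac{q}{2}\rfloor-1$, collapse or become degenerate, so the few remaining exceptional integers must be listed explicitly. These small-$q$ lists are readily verified by direct enumeration and are corroborated by the Magma/C checks mentioned in the introduction.
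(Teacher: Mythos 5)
Your plan is exactly the route the paper intends: the paper omits the proof of this proposition, stating only that it follows ``using similar methods as $m=3$,'' i.e., the case analysis over $t=1,\dots,2m-1$ of when $[aq^{t}]_{n}<a$ carried out in Proposition \ref{p3.2} and, for $m>5$, in the Appendix proof of Proposition \ref{p3.4}, including your reductions via $q^{5}\equiv-1~({\rm mod}~n)$ and the bound $M_{t}<2n$. Your structural observations (that $t=5$ contributes nothing since $2a<n$, the $t\leftrightarrow t+5$ sign flip, and direct enumeration for the degenerate small-$q$ cases) are correct, so the proposal matches the paper's approach in essence.
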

	
	Combining Lemma \ref{l3.2}, Propositions \ref{p3.1} and \ref{p3.3}, we have the following theorem.
	\begin{Theorem}\label{t3.3}
		Let $n=\frac{q^{5}+1}{q+1}$. For $\delta=\ell q^{2}+1$ with $2\leq \ell\leq q-1$, the dimension of the code $\mathcal{C}_{(q,n,\delta,1)}$ is given as follows.
		
		\noindent {\rm (1)} When $q=3$, we have $\delta=2q^{2}+1$ and ${\rm dim}(\mathcal{C}_{(q,n,\delta,1)})=1$.
		
		\noindent {\rm (2)} When $q=4$, 
		\begin{align*}
			{\rm dim}(\mathcal{C}_{(q,n,\delta,1)})=\begin{cases}
				25, & {\rm if}~\ell=2,\\
				3, & {\rm if}~\ell=3.
			\end{cases}
		\end{align*}
		\noindent {\rm (3)} When $q\geq 5$, 
		\begin{align*}
			{\rm dim}(\mathcal{C}_{(q,n,\delta,1)})=\begin{cases}
				n-10\big(\ell(q-1)q-2\ell^{2}+\ell\big), & {\rm if}~2\leq \ell \leq \lceil\frac{q-1}{2}\rceil,\\
				n-10\big((q^{2}-q-2\lfloor\frac{q}{2}\rfloor)\lceil\frac{q+1}{2}\rceil-\lceil\frac{q}{2}\rceil+1\big), & {\rm if}~\ell=\lceil\frac{q+1}{2}\rceil,\\
				n-10(\ell(q-1)q-2\ell^{2}+3\ell-q), & {\rm if}~\lceil\frac{q+3}{2}\rceil\leq \ell \leq q-2,\\
				q^{4}-11q^{3}+51q^{2}-131q+161, & {\rm if}~\ell=q-1.
			\end{cases}
		\end{align*}
	\end{Theorem}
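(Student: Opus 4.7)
The plan is to apply the standard formula $\dim(\mathcal{C}_{(q,n,\delta,1)}) = n - |C_1 \cup \cdots \cup C_{\delta-1}|$, decomposing this union as $\bigsqcup_a C_a$ over the coset leaders $a$ in $[1, \ell q^2]$. Since $C_{qb} = C_b \ni b < qb$, no nonzero multiple of $q$ in this range is a coset leader, so one only counts coset leaders $a$ with $q \nmid a$. By Lemma \ref{l3.2}, every such $a$ in $[1, q^2]$ is a coset leader except $a = q^2 - q + 1$, and by Proposition \ref{p3.3} the non-coset-leaders in $(q^2, (q-1)q^2]$ with $q \nmid a$ are completely enumerated. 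By Proposition \ref{p3.1}, every coset leader in $[1, q^3]$ has $|C_a| = 2m = 10$, with the sole exception $|C_{41}| = 2$ when $(q,m) = (4,5)$.

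The small cases are handled directly. For $q = 3$, only $\ell = 2$ is possible; Proposition \ref{p3.3}(1) shows that $\delta - 1 = 18$ already exceeds every coset leader, so $\dim = 1$. For $q = 4$, I would enumerate the coset leaders for $\ell \in \{2, 3\}$ via Proposition \ref{p3.3}(2), substituting $|C_{41}| = 2$ in place of $10$ once $41 \le \ell q^2$. For $q \ge 5$, the core task is to compute, for each $\ell$, the cardinality $|N_\ell|$ of non-coset-leaders with $q \nmid a$ in $(q^2, \ell q^2]$ by slicing on the high digit $a_2 \in [1, \ell - 1]$ and summing the contributions of families (i)--(x) from Proposition \ref{p3.3}(3). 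The number of coset leaders in $[1, \ell q^2]$ is then $\ell q(q-1) - 1 - |N_\ell|$, each contributing a coset of size $10$, and algebraic simplification yields the four-branch formula.

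The main obstacle will be the per-slice bookkeeping for $q \ge 5$, where families (iii)--(x) have overlapping $a_2$-ranges and sometimes the same integer appears in two families. Writing each $a$ in its standard $q$-adic form $(a_2, a_1', a_0')$, one checks that family (iv) at $a_0 = a_2 + 1$ coincides with family (vi) at $a_1 = q - 1 - a_2$, producing exactly one overlap per slice that must be subtracted. The threshold $\ell = \lceil (q+1)/2 \rceil$ is where families (v) and (vii) first activate and where (iv)'s upper bound $\min\{2a_2, q-1\}$ switches from $2a_2$ to $q - 1$; this accounts for the three successively shifted quadratic-in-$\ell$ formulas across the regimes. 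The top case $\ell = q - 1$ additionally picks up the isolated non-coset-leader $(q-2)q^2 + q - 1$ from (i), the even-$q$ exception $(q^3 + q^2 + q)/2 - 1$ from (ii), and the full $(q-1)^2$-block (x) at $a_2 = q - 2$; combining all of these yields the explicit value $q^4 - 11 q^3 + 51 q^2 - 131 q + 161$ in the last branch. Aside from this careful case analysis, no new ideas beyond those used in Theorem \ref{t3.2} are required.
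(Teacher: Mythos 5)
Your strategy coincides with the paper's: the paper justifies Theorem \ref{t3.3} only by the remark that it follows from combining Lemma \ref{l3.2}, Propositions \ref{p3.1} and \ref{p3.3}, so the whole content is the counting you outline, and your handling of $q=3$, $q=4$, the multiples of $q$, the exception $q^{2}-q+1$, and the coset sizes (including $|C_{41}|=2$ for $q=4$) is correct. The gap is in the inclusion--exclusion step: your claim that the only double-listed elements are the family (iv)$\,\cap\,$(vi) coincidences, ``exactly one overlap per slice,'' is false once $q\geq 7$. Writing elements in standard digits $(a_{2},a_{1},a_{0})$, family (ix) at $a_{1}=q-1-a_{2}$ gives the element $(a_{2},\,q-1-a_{2},\,a_{2})$, which also lies in family (iv) with $a_{0}=a_{2}\leq\min\{2a_{2},q-1\}$; the constraint $a_{1}\leq a_{2}-2$ makes this happen in every slice with $\lceil\frac{q+1}{2}\rceil\leq a_{2}\leq q-3$. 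Moreover, for even $q$ the element $(q/2,\,q/2-1,\,q/2)$ is listed in both (iv) and (viii), and the family (ii) element $(q/2,\,q/2,\,q/2-1)$ enters the range already at $\ell=\lceil\frac{q+1}{2}\rceil$, not only at $\ell=q-1$ as you assert.

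Concretely, for $q=7$, $\ell=5$ (so $n=2101$), the integer $214=(4,2,4)$ is produced both by (iv) with $a_{2}=4,a_{0}=4$ and by (ix) with $a_{2}=4,a_{1}=2$; subtracting only the four (iv)$\,\cap\,$(vi) overlaps $86,129,172,215$ gives $|N_{5}|=42$ and hence $\dim=2101-10(210-1-42)=431$, whereas the theorem's third branch (confirmed by directly listing the families of Proposition \ref{p3.3}, which yields $|N_{5}|=41$) gives $421$. So to recover the stated formulas you must also subtract one (iv)$\,\cap\,$(ix) overlap per slice with $\lceil\frac{q+1}{2}\rceil\leq a_{2}\leq \ell-1$ and, for even $q$, the single (iv)$\,\cap\,$(viii) overlap at $a_{2}=q/2$, while adding the (ii) element from $\ell=\lceil\frac{q+1}{2}\rceil$ onward; note that at $\ell=\lceil\frac{q+1}{2}\rceil$ for even $q$ your two omissions cancel, and for $q=5$ both extra intersections are empty, so small-$q$ spot checks would not expose the error, which affects the third and fourth branches for every $q\geq 7$. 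The approach itself is sound; only this bookkeeping needs to be redone.
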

	
	\begin{Example}{\rm
		Take $q=5$ in Theorem \ref{t3.3}. Then $n=521$ and $2\leq \ell \leq q-1=4$. Let $\ell=4$, then $\delta =101$. By Theorem \ref{t3.3}, the codes $\mathcal{C}_{(q,n,\delta,1)}$ and $\mathcal{C}_{(q,n,\delta+1,0)}$ have parameters $[521,31, \geq 101]$ and $[521,30,\geq 202]$, respectively.}
	\end{Example}
	
	Below we consider the case that $m>5$. Set $h=\frac{m-1}{2}$, then $h>2$. When $h$ is odd, we have the following result. Since the proof is long and technical, we defer it to Appendix.
		\begin{Proposition}\label{p3.4}
		Let $n=\frac{q^m+1}{q+1}$, where $m>5$ is odd. Set $h=\frac{m-1}{2}$. Let $a$ be an integer with $q^{h}+1\leq a\leq (q-1)q^{h}$ and $a\not\equiv0~({\rm mod}~q)$. Denote the $q$-adic expansion of $a$ by $\sum_{i=0}^{h}a_{i}q^{i}$. Assume that $h>2$ is odd, then $a$ is a coset leader except in the following cases:
		
		\noindent {\rm (1)} $a=(a_{h}+1)\frac{q^{h+1}+q}{q+1}-q+a_{0}$, where $1\leq a_{h}\leq q-2$ and ${\rm max}\{1,q-1-2a_{h}\}\leq a_{0}\leq q-1$.
		
		\noindent {\rm (2)} $a=(a_{h}+1)\frac{q^{h+1}+q}{q+1}-2q+a_{0}$, where $\lceil\frac{q+1}{2}\rceil \leq a_{h}\leq q-2$ and $2q-2a_{h}\leq a_{0}\leq q-1$.
		
		\noindent {\rm (3)} $a=a_{h}q^{h}+(a_{h}+1)\frac{q^{h}+1}{q+1}$, where $\lceil\frac{q-1}{2}\rceil\leq a_{h}\leq q-2$,
		
		\noindent {\rm (4)} $a=a_{h}q^{h}+(a_{h-1}+1)\frac{q^{h}+1}{q+1}$, where $0\leq a_{h-1}< a_{h}\leq q-2$.
		
		\noindent {\rm (5)} $a=a_{h}q^{h}+(a_{h-1}+1)\frac{q^{h}+1}{q+1}-1$, where $1\leq a_{h}\leq q-2$ and $q-a_{h}\leq a_{h-1}\leq q-1$.
		\end{Proposition}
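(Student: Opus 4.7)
The plan is to run the same case-by-case strategy as in Propositions \ref{p3.2} and \ref{p3.3}, now adapted to arbitrary odd $h > 2$. By definition, an integer $a$ in the stated range fails to be a coset leader if and only if $[aq^t]_n < a$ for some $t \in \{1,2,\ldots,2m-1\}$. Since $q^m \equiv -1 \pmod n$, we have $[aq^{m+j}]_n = n - [aq^j]_n$ for every $1 \leq j \leq m-1$, so it suffices to compute $[aq^t]_n$ for $t \in \{1,\ldots,m-1\}$ and test the two inequalities $[aq^t]_n < a$ and $n - [aq^t]_n < a$.

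For each such $t$, write $a = \sum_{i=0}^{h} a_i q^i$. The product $aq^t = \sum_{i=0}^{h} a_i q^{i+t}$ splits into a non-wrapped part (indices with $i+t \leq m-1$) and a wrapped part (indices with $i+t \geq m$, each contributing $-a_i q^{i+t-m}$ modulo $n$). When $t \leq h$ the wrapped part is empty and $aq^t$ is simply a digit-shift of $a$; when $h+1 \leq t \leq m-1$ the top digits wrap and produce negative digit contributions. After normalising to a canonical representative in $[0,n-1]$ via borrows, and possibly one further subtraction of $n = q^{m-1} - q^{m-2} + \cdots - q + 1$, the integer $[aq^t]_n$ becomes a piecewise-linear function of $a_0, \ldots, a_h$. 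Each of the two inequalities above then translates into explicit linear conditions on $a_0, \ldots, a_h$, whose solution sets can be listed coordinate by coordinate.

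Taking the union of all these solution sets over $t \in \{1,\ldots,m-1\}$ and discarding duplicates is expected to collapse into exactly the five families (1)--(5). The hypothesis that $h$ is odd enters through the middle range $t \in \{h, h+1\}$: once $a_h$ wraps, further borrows propagate through the alternating-sign expansion of $n$, and the sign reached at the low-order positions depends on the parity of $h$ — this is precisely what forces the appearance of the constants $\frac{q^h+1}{q+1}$ and $\frac{q^{h+1}+q}{q+1}$ in the stated families. The principal obstacle is the bookkeeping: for several values of $t$ the formula for $[aq^t]_n$ itself breaks into subcases according to the signs of auxiliary quantities such as $a_h + a_{h-1} - (q-1)$, $a_0 - a_h$, and $a_{h-1} + a_0 - (q-1)$, and one has to verify that the combined union of all subcases is precisely the disjoint union of the five families stated — no more and no less. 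Because this enumeration is lengthy and largely mechanical, it is recorded in the Appendix as indicated in the statement.
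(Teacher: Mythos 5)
Your outline follows the same global strategy as the paper's Appendix proof: split over $t\in\{1,\dots,2m-1\}$, use $q^{m}\equiv -1 \pmod n$ to pair $t$ with $t+m$, reduce $aq^{t}$ modulo $n$ and compare with $a$. But as written it is a plan, not a proof, and the decisive idea that makes the plan work is missing. The inequality $[aq^{t}]_{n}<a$ is \emph{not} a condition that can be ``listed coordinate by coordinate'': it couples all of the digits $a_{0},\dots,a_{h}$ at once. What actually collapses the answer into five thin one- and two-parameter families is a forcing argument that your sketch never identifies. In the paper's proof, after rewriting $a_{\cdot}q^{2h}$ via $q^{2h}\equiv\sum_{i=0}^{2h-1}(-1)^{i+1}q^{i}\pmod n$ (so that the reduced quantity $M_{t}$ lies in $(0,2n)$ --- note your ``possibly one further subtraction of $n$'' is not accurate for $t=h$, where the raw integer $aq^{h}$ can be nearly $q\,n$), the representative $[aq^{t}]_{n}$ is expressed as $\sum_{i}\Delta_{i,t}q^{2i-1}$ plus terms involving $(a_{2h-t}+1)\frac{q^{\cdot}+1}{q+1}$, with $\Delta_{i,t}=\pm\big[(a_{2h-t}+1-a_{2i-t})q-(a_{2h-t}+1+a_{2i-1-t})\big]$. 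One then shows by size comparisons that if $a_{2h-t}+a_{2h-1-t}\neq q-1$, or if any $\Delta_{l,t}\neq 0$ above the lowest index, then either $[aq^{t}]_{n}<0$ (impossible) or $[aq^{t}]_{n}>(q-1)q^{h}\geq a$; hence almost all digits are pinned to the alternating pattern $a_{2i-t}=a_{2h-t}$, $a_{2i-1-t}=q-1-a_{2h-t}$. This forcing step is the core of the argument and is exactly what produces the constants $\frac{q^{h}+1}{q+1}$ and $\frac{q^{h+1}+q}{q+1}$ in families (1)--(5); without it, the ``bookkeeping'' you defer to is not merely lengthy, it is unstructured.

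In addition, your proposal simply asserts that the union over all $t$ ``is expected to collapse into exactly the five families,'' which is the statement to be proved. The paper must (and does) verify both directions: that the ranges $1\leq t\leq h-1$, $t=2h-1$, $t=2h$, $2h+1\leq t\leq 3h$, $t=4h$, $t=4h+1$ contribute nothing, that within $h+1\leq t\leq 2h-2$ and $3h+2\leq t\leq 4h-1$ only $t=h+1$ and $t=3h+2$ contribute, and that the surviving conditions for $t\in\{h,h+1,3h+1,3h+2\}$ are precisely (1)--(5). None of these verifications appears in your proposal, so the gap is not a matter of omitted routine detail but of the main mechanism and the case eliminations that constitute the proof.
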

		
		The case that $h$ is even can be treated similarly as the case that $h$ is odd and the proof of the following proposition is thus omitted. 
		\begin{Proposition}
		Let $n=\frac{q^m+1}{q+1}$, where $m>5$ is odd. Set $h=\frac{m-1}{2}$. Let $a$ be an integer with $q^{h}+1\leq a\leq (q-1)q^{h}$ and $a\not\equiv0~({\rm mod}~q)$. Denote the $q$-adic expansion of $a$ by $\sum_{i=0}^{h}a_{i}q^{i}$. Assume that $h>2$ is even, then $a$ is a coset leader except in the following cases:
			
			\noindent {\rm (1)} $a=(a_{h}+1)\frac{q^{h+1}-q}{q+1}+a_{0}$, where $1\leq a_{h}\leq q-2$ and $1\leq a_{0}\leq {\rm min}\{2a_{h}+1,q-1\}$,
			
			\noindent {\rm (2)} $a=(a_{h}+1)\frac{q^{h+1}-q}{q+1}+q+a_{0}$, where $\lceil\frac{q+1}{2}\rceil\leq a_{h}\leq q-2$ and $1\leq a_{0}\leq 2a_{h}-q$,
			
			\noindent {\rm (3)} $a=a_{h}q^{h}+(a_{h}+1)\frac{q^{h}-1}{q+1}$, where $\lceil\frac{q}{2}\rceil\leq a_{h}\leq q-2$,
			
			\noindent {\rm (4)} $a=a_{h}q^{h}+(a_{h-1}+1)\frac{q^{h}-1}{q+1}$, where $0\leq a_{h-1}< a_{h}\leq q-2$,
			
			\noindent {\rm (5)} $a=a_{h}q^{h}+(a_{h-1}+1)\frac{q^{h}-1}{q+1}+1$, where $1\leq a_{h}\leq q-2$ and $q-1-a_{h}\leq a_{h-1}\leq q-1$.
		\end{Proposition}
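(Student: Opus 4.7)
The plan is to prove the proposition by mirroring the argument of Proposition~\ref{p3.4}, making the sign adjustments forced when $h=(m-1)/2$ is even. I start by fixing $a=\sum_{i=0}^{h}a_{i}q^{i}$ with $q^{h}+1\le a\le (q-1)q^{h}$ and $a\not\equiv 0\pmod{q}$, and recalling that $a$ is a coset leader modulo $n$ exactly when $[aq^{t}]_{n}\ge a$ for every $t\in\{1,2,\dots,2m-1\}$. Because $q^{m}\equiv -1\pmod{n}$, the values at $t$ and $m+t$ are linked via $[aq^{m+t}]_{n}=n-[aq^{t}]_{n}$, so the analysis reduces to $t\in\{1,\dots,m-1\}$ together with $t=m$; the bound $a\le (q-1)q^{h}<n/2$ handles $t=m$ immediately.

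Next, for each remaining $t$ I would compute $aq^{t}\bmod n$ by expanding $aq^{t}=\sum_{i=0}^{h}a_{i}q^{i+t}$, reducing any power $q^{j}$ with $j\ge m$ via $q^{j}\equiv -q^{j-m}\pmod{n}$, and handling the ensuing carries through the alternating expansion $n=q^{m-1}-q^{m-2}+\cdots-q+1$. The crucial difference from the proof of Proposition~\ref{p3.4} is that, when $h$ is even, $q+1$ divides $q^{h}-1$ rather than $q^{h}+1$, and the alternating sum terminates with the opposite sign. Consequently, the quantities $\frac{q^{h+1}+q}{q+1}$ and $\frac{q^{h}+1}{q+1}$ appearing in the odd-$h$ case are systematically replaced by $\frac{q^{h+1}-q}{q+1}$ and $\frac{q^{h}-1}{q+1}$, which is exactly what one sees in the candidate patterns (1)--(5). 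Translating $[aq^{t}]_{n}<a$ into digitwise inequalities on $(a_{0},\dots,a_{h})$, patterns (1)--(2) arise from small $t$ where the leading digit $a_{h}$ dominates the reduction, pattern (3) arises from the self-paired value $t\approx h+1$, and patterns (4)--(5) come from the reductions where the carry structure couples $a_{h}$ and $a_{h-1}$.

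The main obstacle will be bookkeeping: showing that the union of the exception sets obtained from the various $t$ is exactly (1)--(5), with no overlap miscounted and no case missed. This is most delicate at the boundary values of $a_{h}$ and $a_{h-1}$ where several inequalities become tight simultaneously, and requires a careful case split according to the ranges $a_{h}<\lceil\frac{q+1}{2}\rceil$, $a_{h}=\lceil\frac{q+1}{2}\rceil$, and $a_{h}>\lceil\frac{q+1}{2}\rceil$, together with the sub-ranges for $a_{h-1}$ corresponding to whether the reduction overshoots $n$. Once this verification is completed, every other step of the appendix proof for Proposition~\ref{p3.4} carries over verbatim after the sign flip described above, which is precisely why the authors declare the proof omitted.
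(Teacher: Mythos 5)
Your overall strategy is the right one, and it is exactly the route the paper intends: the paper gives no separate argument for the even-$h$ case but declares it ``similar'' to the Appendix proof of Proposition \ref{p3.4}, and you correctly identify the one structural change, namely that for even $h$ the divisibility $q+1\mid q^{h}-1$ (rather than $q+1\mid q^{h}+1$) replaces the quantities $\frac{q^{h}+1}{q+1}$ and $\frac{q^{h+1}+q}{q+1}$ by $\frac{q^{h}-1}{q+1}$ and $\frac{q^{h+1}-q}{q+1}$ in the exceptional families. However, as written your proposal has a genuine gap: the entire content of the proposition is the determination that the non-coset-leaders in the stated range are \emph{precisely} the five families (1)--(5), and this is never derived. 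In the odd-$h$ model proof this determination occupies ten cases of digit-by-digit computation: one must show that for all but four values of $t$ (namely $t=h$, $h+1$, $3h+1$, $3h+2$ in the odd case) the estimate $[aq^{t}]_{n}>(q-1)q^{h}\geq a$ holds, and for the remaining four values one must solve the inequality $[aq^{t}]_{n}<a$ exactly, including the forced relations $a_{2i-h}=a_{h}$, $a_{2i-1-h}=q-1-a_{h}$ coming from the vanishing of the quantities $\Delta_{i,t}$. Your plan defers all of this as ``bookkeeping'' and simply asserts that the outcome is (1)--(5); that assertion is the theorem, not a step toward it.

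Two further points of looseness would need repair in a complete write-up. First, reducing to $t\in\{1,\dots,m\}$ via $[aq^{m+t}]_{n}=n-[aq^{t}]_{n}$ is legitimate, but you must then analyse the condition $n-[aq^{t}]_{n}<a$ separately from $[aq^{t}]_{n}<a$; in the odd-$h$ proof the families analogous to your (2) and (5) arise exclusively from this second branch (Cases 7 and 8 of the Appendix, i.e.\ $t=3h+1$ and $t=3h+2$), so attributing ``patterns (1)--(2)'' to small $t$ and lumping (4)--(5) together as ``carry coupling'' does not match where the exceptions actually come from and suggests the case structure has not been traced through. Second, the boundary bookkeeping you flag (tightness at $a_{h}=\lceil\frac{q+1}{2}\rceil$, the ranges of $a_{h-1}$, and overlaps between the families) is indeed where errors would hide, but naming the difficulty is not the same as resolving it; until the ten-case analysis is actually executed with the sign flips in place, the proposition is not proved.
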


		\begin{Theorem}\label{t3.4}
			Let $n=\frac{q^{m}+1}{q+1}$, where $m>5$ is an odd integer. For $\delta=\ell q^{\frac{m-1}{2}}+1$ with $2\leq \ell\leq q-1$, the dimension of the code $\mathcal{C}_{(q,n,\delta,1)}$ is given as follows:
			\begin{align*}
				{\rm dim}(\mathcal{C}_{(q,n,\delta,1)})=\begin{cases}
					n-2m\big(\ell(q-1)q^{\frac{m-3}{2}}-2\ell^{2}+\ell \big), & {\rm if}~2\leq \ell\leq \lceil\frac{q-1}{2}\rceil,\\
					n-2m\big(\ell(q-1)q^{\frac{m-3}{2}}-2\lceil\frac{q+1}{2}\rceil \lceil\frac{q-1}{2}\rceil-\lceil\frac{q-1}{2}\rceil\big), & {\rm if}~\ell=\lceil\frac{q+1}{2}\rceil,\\
					n-2m\big(\ell(q-1)q^{\frac{m-3}{2}}-2\ell^{2}+3\ell-q\big), & {\rm if}~\lceil\frac{q+3}{2}\rceil \leq \ell\leq q-1.
				\end{cases}
			\end{align*}
		\end{Theorem}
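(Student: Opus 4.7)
The plan is to use the identity $\dim(\mathcal{C}_{(q,n,\delta,1)}) = n - |C_{1}\cup C_{2}\cup\cdots \cup C_{\delta-1}|$ and reduce the computation to counting the $q$-coset leaders in $[1,\ell q^{h}]$, where $h = \frac{m-1}{2}$. Since the hypotheses $q>2$ and $m>5$ rule out every exceptional pair in Proposition \ref{p3.1}, every coset leader $a\leq q^{\frac{m+1}{2}}$ satisfies $|C_{a}|=2m$; as $\delta-1=\ell q^{h}\leq (q-1)q^{h}<q^{\frac{m+1}{2}}$, every coset meeting $[1,\delta-1]$ has size exactly $2m$. Thus $\dim(\mathcal{C}_{(q,n,\delta,1)})=n-2m\,N_{\ell}$, where $N_{\ell}$ is the number of coset leaders in $[1,\ell q^{h}]$, and it remains to evaluate $N_{\ell}$.

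To count $N_{\ell}$, I would start from the $\ell(q-1)q^{h-1}$ integers in $[1,\ell q^{h}]$ that are not multiples of $q$ (multiples of $q$ are never coset leaders because $\gcd(q,n)=1$ makes $q^{-1}\equiv q^{2m-1}\pmod{n}$ produce a smaller representative in the same coset). From these I would subtract the number of non-coset-leaders, comprising (i) the single value $\frac{q^{h+1}-(-1)^{(m+1)/2}}{q+1}\in[1,q^{h}]$ from Lemma \ref{l3.2}, and (ii) the elements in $[q^{h}+1,\ell q^{h}]$ enumerated in the five families (1)--(5) of Proposition \ref{p3.4} (if $h$ is odd) or of the companion proposition for $h$ even. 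A short $q$-adic comparison shows that in each family the constraint $a\leq \ell q^{h}$ forces the leading digit $a_{h}$ to lie in $[1,\ell-1]$, so each family's contribution reduces to a one- or two-parameter sum over the explicit ranges of $a_{h}$ and the auxiliary digit $a_{0}$ or $a_{h-1}$.

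I would then split on the range of $\ell$. For $2\leq \ell\leq \lceil\frac{q-1}{2}\rceil$, families (2) and (3) are vacuous (their lower thresholds $a_{h}\geq \lceil\frac{q+1}{2}\rceil$ and $a_{h}\geq \lceil\frac{q-1}{2}\rceil$ are not reached), so only families (1), (4), (5) contribute, yielding
$\sum_{a_{h}=1}^{\ell-1}(2a_{h}+1)+2\sum_{a_{h}=1}^{\ell-1}a_{h}=2\ell^{2}-\ell-1$
non-coset-leaders from Proposition \ref{p3.4}; adding the Lemma \ref{l3.2} exception yields a total deficit of $2\ell^{2}-\ell$ and the first branch. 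For $\ell=\lceil\frac{q+1}{2}\rceil$, family (3) first contributes its single value at $a_{h}=\lceil\frac{q-1}{2}\rceil$, but this element coincides with the already-counted family-(1) element at $(a_{h},a_{0})=(\lceil\frac{q-1}{2}\rceil,\lceil\frac{q+1}{2}\rceil)$, so the net adjustment is $-1$, giving deficit $2\ell^{2}-\ell-1$ and the middle branch. For $\lceil\frac{q+3}{2}\rceil\leq \ell\leq q-1$, all five families contribute in their full allowed ranges, with family (2) summing $\sum_{a_{h}=\lceil(q+1)/2\rceil}^{\ell-1}(2a_{h}-q)$; in addition, a new coincidence appears between families (1) and (4) at the triples $(a_{h},a_{h-1},a_{0})=(a_{h},q-1-a_{h},q-a_{h})$ with $a_{h}\geq \lceil\frac{q+1}{2}\rceil$, and after inclusion-exclusion the deficit simplifies to $2\ell^{2}-3\ell+q$.

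The main obstacle is this inclusion-exclusion bookkeeping: the five families in Proposition \ref{p3.4} and its $h$-even counterpart are not pairwise disjoint, and the piecewise shape of the theorem precisely tracks when two families first share an element (at $\ell=\lceil\frac{q+1}{2}\rceil$) and when a second coincidence appears (at $\ell=\lceil\frac{q+3}{2}\rceil$). The two parities of $h$ are handled in parallel by the corresponding propositions; one verifies that the family-by-family sums and overlap corrections give the same expression, so a single formula covers both.
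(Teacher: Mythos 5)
Your proposal follows essentially the same route as the paper's proof: the same reduction ${\rm dim}(\mathcal{C}_{(q,n,\delta,1)})=n-2m\big(\ell(q-1)q^{\frac{m-3}{2}}-|N_{\ell}|-1\big)$ via Proposition \ref{p3.1}, Lemma \ref{l3.2} and the count of non-multiples of $q$, followed by inclusion--exclusion over the five families of Proposition \ref{p3.4} (and its even-$h$ companion), and your three deficits $2\ell^{2}-\ell$, $2\ell^{2}-\ell-1$ and $2\ell^{2}-3\ell+q$ agree with the paper's piecewise formula. One caveat on the bookkeeping you flag as the main obstacle: the coincidence you invoke at $\ell=\lceil\frac{q+1}{2}\rceil$ (family (1) meeting family (3), i.e. the element $\frac{q^{h+1}+1}{2}$) exists only for odd $q$; for even $q$ the relevant overlap is $N_{1,\ell}\cap N_{4,\ell}$, which already occurs at this $\ell$ since $\lceil q/2\rceil=\ell-1$, and the truncation of the $a_{0}$-range in family (1) at $a_{h}=\lceil\frac{q-1}{2}\rceil$ (where ${\rm max}\{1,q-1-2a_{h}\}=1$) also enters, so the parity of $q$ must be tracked explicitly, as the paper does by recording both intersections $N_{1,\ell}\cap N_{3,\ell}$ and $N_{1,\ell}\cap N_{4,\ell}$ with their exact ranges.
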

		
		\begin{proof}
			Set $h=\frac{m-1}{2}$. We first assume that $h$ is odd. For $2\leq \ell \leq q-1$, denote by $N_{\ell}$ the set consisting of all integers $a$ with $q^{h}+1\leq a\leq \delta-1=\ell q^{h}$ and $a\not\equiv 0~({\rm mod}~q)$ that are not coset leaders. It follows from Proposition \ref{p3.4} that $N_{\ell}=N_{1,\ell}\cup N_{2,\ell}\cup N_{3,\ell}\cup N_{4,\ell}\cup N_{5,\ell}$, where
			{\small \begin{align*}
				N_{1,\ell}&=\big\{a=(a_{h}+1)\frac{q^{h+1}+q}{q+1}-q+a_{0}~\big|~1\leq a_{h}\leq \ell-1,~{\rm max}\{1,q-1-2a_{h}\}\leq a_{0}\leq q-1\big\},\\
				N_{2,\ell}&=\big\{a=(a_{h}+1)\frac{q^{h+1}+q}{q+1}-2q+a_{0}~\big|~\lceil\frac{q+1}{2}\rceil \leq a_{h}\leq \ell-1,~2q-2a_{h}\leq a_{0}\leq q-1\big\},\\
				N_{3,\ell}&=\big\{a=a_{h}q^{h}+(a_{h}+1)\frac{q^{h}+1}{q+1}~\big|~\lceil\frac{q-1}{2}\rceil\leq a_{h}\leq \ell-1\big\},\\
				N_{4,\ell}&=\big\{a=a_{h}q^{h}+(a_{h-1}+1)\frac{q^{h}+1}{q+1}~\big|~0\leq a_{h-1}< a_{h}\leq \ell-1\big\},\\
				N_{5,\ell}&=\big\{a=a_{h}q^{h}+(a_{h-1}+1)\frac{q^{h}+1}{q+1}-1~\big|~1\leq a_{h}\leq \ell-1,~q-a_{h}\leq a_{h-1}\leq q-1\big\}.
			\end{align*}}
		 For $1\leq i<j\leq 5$, one can check that $N_{i,\ell}\cap N_{j,\ell}=\emptyset$ except that 
			$$N_{1,\ell}\cap N_{3,\ell}=\big\{a=\frac{q^{h+1}+1}{2}~\big|~q\geq 3~{\rm is~odd},~\frac{q+1}{2}\leq \ell \leq q-1\big\},$$
			and
			$$N_{1,\ell}\cap N_{4,\ell}=\big\{a=(a_{h}+1)\frac{q^{h+1}-1}{q+1}+1~\big|~\lceil\frac{q}{2}\rceil\leq a_{h}\leq \ell-1\big\}.$$
			Applying the inclusion-exclusion principle we obtain
			\begin{align*}
				|N_{\ell}|&=|N_{1,\ell}\cup N_{2,\ell}\cup N_{3,\ell}\cup N_{4,\ell}\cup N_{5,\ell}|\\
				&=|N_{1,\ell}|+|N_{2,\ell}|+|N_{3,\ell}|+|N_{4,\ell}|+|N_{5,\ell}|-|N_{1,\ell}\cap N_{3,\ell}|-|N_{1,\ell}\cap N_{4,\ell}|\\
				&=\begin{cases}
					2\ell^{2}-\ell-1, & {\rm if}~2\leq \ell\leq \lceil\frac{q-1}{2}\rceil,\\
					2\lceil\frac{q+1}{2}\rceil\lceil\frac{q-1}{2}\rceil+\lceil\frac{q-1}{2}\rceil-1, & {\rm if}~\ell=\lceil\frac{q+1}{2}\rceil,\\
					2\ell^{2}-3\ell+q-1, & {\rm if}~\lceil\frac{q+3}{2}\rceil \leq \ell\leq q-1.
					\end{cases}
			\end{align*}
			By Lemma \ref{l3.2}, Propositions \ref{p3.1} and \ref{p3.4}, we have 
			$${\rm dim}(\mathcal{C}_{(q,n,\delta,1)})=n-2m\big(\delta-1-\lfloor\frac{\delta-1}{q}\rfloor-|N_{\ell}|-1\big)=n-2m\big(\ell(q-1)q^{\frac{m-3}{2}}-|N_{\ell}|-1\big).$$
			The desired conclusion then follows.
			
			When $h$ is even, the proof is very similar as above, and thus omitted here.
	    \end{proof}
	  
	  \begin{Example}{\rm
	  		Take $q=3$ and $m=7$. Then $n=547$ and $\delta=55$. By Theorem \ref{t3.4}, the codes $\mathcal{C}_{(q,n,\delta,1)}$ and $\mathcal{C}_{(q,n,\delta+1,0)}$ have parameters $[547,113,\geq 55]$ and $[547,112,\geq 110]$, respectively.}
	 \end{Example}
	 
	 \begin{Remark}{\rm
	 	Using the similar methods as above, we also can find out all coset leaders in the range $\big[(q-1)q^{\frac{m-1}{2}}+1, q^{\frac{m+1}{2}}\big]$, where $q\geq 2$.}
	 \end{Remark}
	 
	\section{Parameters of some binary BCH codes with length $\frac{2^m+1}{3}$}
	In this section, we study binary BCH codes of length $n=\frac{2^m+1}{3}$ with small dimensions. We will describe the first two largest $2$-coset leaders $\delta_{1}$ and $\delta_{2}$ modulo $n$, and then investigate the parameters of the BCH codes related to $\delta_{1}$ and $\delta_{2}$.
	 
	To determine the values of $\delta_{1}$ and $\delta_{2}$, we need the following lemma.
	
	\begin{Lemma}\label{l4.1}
		Suppose that $m$ is odd. Let $s$ be an integer with $0<s \leq 2^{m-1}-1$ and $s\not\equiv 0~({\rm mod}~2)$. Assume that in the $2$-adic expansion of $s$, there are no three consecutive $0$'s nor three consecutive $1$'s. Let $l$ be the number of pairs of $0$'s and $1$'s that appear together in the $2$-adic expansion of $s$, then $l$ is odd.
	\end{Lemma}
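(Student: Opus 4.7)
The plan is to work directly with the $m$-bit $2$-adic expansion $s=(s_{m-1},s_{m-2},\ldots,s_0)$ in the convention fixed in Section~2. The two hypotheses pin down the extreme bits immediately: $s\leq 2^{m-1}-1$ forces $s_{m-1}=0$, while $s\not\equiv 0~({\rm mod}~2)$ forces $s_0=1$. I read ``pairs of $0$'s and $1$'s that appear together'' as maximal length-$2$ blocks of equal bits, i.e.\ the count of ``$00$'' blocks plus the count of ``$11$'' blocks; under the no-three-consecutive hypothesis this exhausts every way in which a pair of equal bits can appear. With this reading the statement becomes a purely combinatorial assertion about the run structure of a specific binary string, and the ``no three consecutive'' hypothesis is what keeps the argument clean.

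To carry it out, I would partition $(s_{m-1},\ldots,s_0)$ into maximal runs of equal bits. Because consecutive runs alternate between $0$-blocks and $1$-blocks and the sequence both begins with $0$ and ends with $1$, the number of $0$-runs equals the number of $1$-runs; call this common value $R$, so that the total number of runs is $2R$. By hypothesis each run has length $1$ or $2$, and if $l$ denotes the number of length-$2$ runs, then the remaining $2R-l$ runs have length $1$. Counting the total number of bits in two different ways gives
$$m \;=\; (2R-l)\cdot 1 + l\cdot 2 \;=\; 2R + l,$$
which rearranges to $l=m-2R$. Since $m$ is odd while $2R$ is even, $l$ is odd, as claimed.

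I do not anticipate any real obstacle here: once the run decomposition is in place the whole argument is a one-line parity count. The only subtle point is committing to the full $m$-bit expansion rather than the minimal-width one; without the forced $s_{m-1}=0$ the numbers of $0$-runs and $1$-runs need not agree, the total run count need not be even, and the parity of $l=m-2R$ could no longer be read off from the parity of $m$ alone.
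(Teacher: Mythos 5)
Your proof is correct and follows essentially the same route as the paper: both arguments use $s_{m-1}=0$ and $s_{0}=1$ to see that the maximal runs (each of length $1$ or $2$ by hypothesis) alternate starting with a $0$-run and ending with a $1$-run, so the number of runs is even, and then a bit count gives $m-l$ even, hence $l$ odd since $m$ is odd. The paper phrases this as contracting each pair to a single symbol to obtain an alternating string $(0,1,\dots,0,1)$ of even length $m-l$, which is the same parity count you perform via $m=2R+l$.
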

	
	\begin{proof}
		The $2$-adic expansion of the integer $s$ has the following form:
		 $$s=(s_{m-1}=0, s_{m-2},\cdots,s_{1},s_{0}=1),$$
		 where $s_{j}\in\{0,1\}$ for $1\leq j\leq m-2$. In the $2$-adic expansion of $s$, if all the pairs of $0$'s and $1$'s are treated simply as $0$'s and $1$'s, respectively, then we obtain a sequence $(0,1,0,1,\cdots,0,1)$, which is of even length; thus $m-l$ is even implying that $l$ is odd.
	\end{proof}
	
	\begin{Proposition}\label{p4.1}
		Let $q=2$ and $n=\frac{2^{m}+1}{3}$, where $m$ is an odd integer. Suppose $m\equiv 0~({\rm mod}~3)$, then:
		
		\noindent {\rm (1)} if $m=3$, then $\delta_{1}=1$ with $|C_{\delta_{1}}|=2$.
		
		\noindent {\rm (2)} if $m>3$, then $\delta_{1}=\frac{2^m+1}{9}$ and $\delta_{2}=\frac{2^{m-1}-31}{9}$; moreover, $|C_{\delta_{1}}|=2$ and $|C_{\delta_{2}}|=2m$.
	\end{Proposition}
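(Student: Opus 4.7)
The plan is to split on $m=3$ versus $m>3$ (so $m\in\{9,15,21,\ldots\}$); the $m=3$ case is immediate from the cosets $\{0\}$ and $\{1,2\}$ modulo $n=3$. Throughout the $m>3$ analysis I will use that $2^m\equiv -1\pmod n$ (so $n-a$ and $n-2a\bmod n$ always lie in $C_a$) and that $\mathrm{ord}_n(2)=2m$ by Lemma~\ref{l3.1}.

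For $\delta_1=n/3=(2^m+1)/9$, integrality follows from $m\equiv 3\pmod 6$ together with $2^3\equiv -1\pmod 9$. The orbit $C_{n/3}=\{n/3,2n/3\}$ has size $2$ since $2(n/3)<n$ and $4(n/3)\equiv n/3\pmod n$. To confirm that this is the largest coset leader, I would rule out every $a\in(n/3,n)$: $a>n/2$ is killed by $n-a<a$, any $a\in(n/3,n/2)$ is killed by $n-2a\in(0,n/3)$, and $n$ odd forbids $a=n/2$.

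For $\delta_2=(2^{m-1}-31)/9$, writing $m=3k$ with $k\ge 3$ odd gives $m-1\equiv 2\pmod 6$ and hence $2^{m-1}\equiv 4\equiv 31\pmod 9$, so $\delta_2\in\mathbb Z$. To show $|C_{\delta_2}|=2m$, I note $|C_{\delta_2}|\mid 2m$; the value $m$ is excluded because it would force $n\mid 2\delta_2$, impossible since $n$ is odd and $0<\delta_2<n$. To rule out the remaining proper divisors, I use the clean identity $6\delta_2=n-21$ to obtain $\gcd(n,\delta_2)\mid\gcd(n,21)=\gcd(n,3)=3$ (since $\mathrm{ord}_7(2)=3$ forbids $7\mid 2^m+1$), so the orbit size is the order of $2$ modulo either $n$ itself or $\delta_1=n/3$. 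Since $\delta_1\mid n$ implies $2^m\equiv-1\pmod{\delta_1}$, the order modulo $\delta_1$ cannot divide $m$; combining this with $\mathrm{ord}_n(2)=2m$ and $m$ odd forces both orders to equal $2m$. For $\delta_2$ itself being the leader of $C_{\delta_2}$, I would use $6\delta_2\equiv-21\pmod n$ to express each $2^j\delta_2\bmod n$ inductively and verify that every residue is at least $\delta_2$.

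The delicate step is showing no coset leader lies in $(\delta_2,\delta_1)$. Using the chain $\delta_2<2n/9<n/4<n/3=\delta_1$, I split into three subranges. (i) For $a\in[\lceil n/4\rceil,\delta_1)$: $b_2=4a-n\in C_a$ satisfies $4a-n<a$ since $3a<n$. (ii) For $a\in(2n/9,\lfloor n/4\rfloor]$: $2n-8a\equiv -8a\equiv 2^{m+3}a\pmod n$ is positive (as $8a<2n$) and smaller than $a$ (as $9a>2n$), so it supplies a strictly smaller element of $C_a$. (iii) For $a\in(\delta_2,2n/9]$: this is the main obstacle. The shifts $b_3$ and $n-b_3$ no longer force $a$ out, and the argument must descend into a finer case analysis on the $2$-adic expansion of $a$. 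I expect Lemma~\ref{l4.1} to enter decisively, extracting a parity constraint on the pair-count of any candidate leader whose expansion avoids $000$ and $111$; combined with case analysis on the higher shifts $b_4,b_5,\ldots$, this should eliminate every integer in $(\delta_2,2n/9]$, pinning $\delta_2$ as the next coset leader below $\delta_1$.
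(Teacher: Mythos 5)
Your reductions for the upper ranges are correct and genuinely simpler than the paper's: using $2^{m}\equiv-1~({\rm mod}~n)$, the coset elements $n-a$ (for $a>n/2$), $n-2a$ (for $n/3<a<n/2$), $4a-n$ (for $n/4<a<n/3$) and $2n-8a$ (for $2n/9<a<n/4$) each lie in $C_a$ and are smaller than $a$, so no coset leader exists above $2n/9$ other than $\delta_1=n/3$, whose coset $\{n/3,2n/3\}$ visibly has size $2$. This replaces the paper's binary-expansion casework for the range $(n/3,n)$ and for its Case I.2, and is a nice shortcut.

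However, the proposal stops short exactly where the real work lies, so it has genuine gaps. First, the interval $(\delta_2,\,2n/9]$, which contains roughly $2^{m-1}/27$ integers, is not eliminated at all: you only ``expect'' Lemma \ref{l4.1} and higher shifts to do it. This is precisely the paper's hardest part (Subcases I.1.1--I.1.3): one must analyse the binary expansion of $3a$ modulo $3n=2^m+1$, split on whether it contains three consecutive equal bits, and in the remaining subcase invoke Lemma \ref{l4.1} together with a comparison of the numbers of $00$-pairs and $11$-pairs to exhibit a shift below $\frac{2^{m-1}-13}{3}$; none of this is supplied, and without it $\delta_2$ is not identified. Second, the claim that $\delta_2$ is the smallest element of its own coset is only a statement of intent; the paper's proof is a ten-case verification over all $t\in[1,2m-1]$ (Cases II.1--II.10), and the identity $6\delta_2=n-21$ by itself does not visibly control all $2m$ residues. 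Third, your argument that $|C_{\delta_2}|=2m$ contains a faulty inference: since $\gcd(n,\delta_2)$ can equal $3$ (this happens, e.g., when $m\equiv 3~({\rm mod}~18)$), the orbit size is then ${\rm ord}_{n/3}(2)$, and knowing that this order divides $2m$ but not $m$ does not force it to equal $2m$ --- because $3\mid m$ it could a priori be $2m/3$. The conclusion is true, but it needs an extra step, e.g.\ Lemma \ref{l3.3} giving $\gcd(2^m+1,2^{2d}-1)=2^d+1$ and then $\frac{2^m+1}{9}\nmid 2^d+1$ for proper divisors $d$ of $m$; alternatively, once the leadership verification over all $2m$ shifts is actually carried out, the orbit size $2m$ follows for free, which is how the paper obtains it.
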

	\begin{proof}
		We first verify the value of $\delta_{1}$. Since the $2$-coset modulo $n$ containing $\frac{2^{m}+1}{9}$ is $\{\frac{2^{m}+1}{9},\frac{2^{m+1}+2}{9}\}$, $\frac{2^m+1}{9}$ is a $2$-coset leader modulo $n$ and $|C_{\frac{2^m+1}{9}}|=2$. In order to prove $\delta_{1}=\frac{2^m+1}{9}$, we need to show that for every integer $i$ with $\frac{2^m+1}{9}<i\leq n-1$, $i$ is not a $2$-coset leader modulo $n$, which is equivalent to prove that $3i$ is not a $2$-coset leader modulo $3n$. In other words, we should prove that for $\frac{2^m+1}{9}<i\leq n-1$, there exists an integer $t$ with $1\leq t\leq 2m-1$ such that $[3i\cdot 2^{t}]_{3n}<3i$. Note that $C_{i}=C_{-i}$ for $0\leq i\leq n-1$; thus we only need consider $i$ with $\frac{2^{m}+1}{9}<i\leq \frac{n-1}{2}=\frac{2^{m-1}-1}{3}$.
		
		Let $i$ be an integer with $\frac{2^{m}+1}{9}<i\leq \frac{2^{m-1}-1}{3}$ and $i\not\equiv 0~({\rm mod}~2)$, then $\frac{2^{m}+1}{3}=(0,1,0,1,\cdots,0,1,1)<3i\leq 2^{m-1}-1=(0,1,1,\cdots,1)$. The $2$-adic expansion of $3i$ can be written as 
		$$3i=(0,i_{m-2}=1,i_{m-3},\cdots,i_{s+1},i_{s}=1,i_{s-1}=1,i_{s-2},\cdots,i_{1},i_{0}=1),$$
		where $3\leq s\leq m-2$, and $i_{j}\in\{0,1\}$ for $1\leq j\leq s-2$ and $s+1\leq j\leq m-3$. It follows that
		\begin{align*}
			[3i\cdot 2^{2m-1-s}]_{3n}&=3n-[3i\cdot 2^{m-1-s}]_{3n}\\
			&=(1,1,\cdots,1,3)-(1,1,i_{s-2},\cdots,i_{1},1,0,-i_{m-2},\cdots,-i_{s+1})\\
			&=(0,0,1-i_{s-2},\cdots,1-i_{1},0,1,1+i_{m-2},\cdots,1+i_{s+2},3+i_{s+1})\\
			&\leq (0,0,1-i_{s-2},\cdots,1-i_{1},0,1,\overset{m-3-s}{\overbrace{2,2,\cdots,2}},4)\\
			&=(0,0,1-i_{s-2},\cdots,1-i_{1},1,1,\overset{m-2-s}{\overbrace{0,0,\cdots,0}})<3i.
		\end{align*}
		Thus for $\frac{2^{m}+1}{9}<i\leq \frac{2^{m-1}-1}{3}$, $3i$ is not a $2$-coset leader modulo $3n$, and hence $i$ is not a $2$-coset leader modulo $n$, proving $\delta_{1}=\frac{2^m+1}{9}$.
		
		To show that $\delta_{2}=\frac{2^{m-1}-31}{9}$ and $|C_{\delta_{2}}|=2m$, we need to prove the following two statements:
		
		\noindent {\bf I.} For $\frac{2^{m-1}-31}{9}< i<\delta_{1}=\frac{2^{m}+1}{9}$ with $i\not\equiv 0~({\rm mod}~2)$, $i$ is not a $2$-coset leader modulo $n$.
		
		\noindent {\bf II.} $i=\frac{2^{m-1}-31}{9}$ is a $2$-coset leader modulo $n$ and $|C_{i}|=2m$.
		
		We first prove the statement {\bf I}, and we divide the proof into two cases: $\frac{2^{m-1}-13}{9}\leq i\leq \frac{2^{m-2}-2}{3}$ and $\frac{2^{m-2}+1}{3}\leq i<\frac{2^{m}+1}{9}$.
		
		{\bf Case I.1.} When $\frac{2^{m-1}-13}{9}\leq i\leq \frac{2^{m-2}-2}{3}$, $\frac{2^{m-1}-13}{3}=(0,0,\overset{m-5}{\overbrace{1,0,\cdots,1,0}},0,0,1)\leq 3i<2^{m-2}-1=(0,0,1,1,\cdots,1)$. The $2$-adic expansion of $3i$ is of the following form:
		\begin{equation}\label{s3}
			3i=(0,0,i_{m-3}=1,i_{m-4},\cdots,i_{1},i_{0}=1),
		\end{equation}
		where $i_{j}\in\{0,1\}$ for $1\leq j\leq m-4$. 
		
		{\bf Subcase I.1.1.} Suppose that (\ref{s3}) has at least three consecutive $0$'s, that is,
		$$3i=(0,0,1,i_{m-4},\cdots,i_{s}=0,i_{s-1}=0,i_{s-2}=0,\cdots,i_{1},1),$$
		where $3\leq s\leq m-4$. Then
		$$[3i\cdot 2^{m-1-s}]=(0,0,0,i_{s-3},\cdots,i_{1},1,0,0,-1,-i_{m-4},\cdots,-i_{s+1})<3i.$$
		
		{\bf Subcase I.1.2.} Suppose that (\ref{s3}) has at least three consecutive $1$'s, namely,
		$$3i=(0,0,i_{m-3},\cdots,i_{s}=1,i_{s-1}=1,i_{s-2}=1,\cdots,i_{0}),$$
		where $2\leq s\leq m-3$. Then
		\begin{align*}
			[3i\cdot 2^{2m-1-s}]_{3n}&=3n-[3i\cdot 2^{m-1-s}]_{3n}\\
			&=(1,1,\cdots,1,3)-(1,1,1,i_{s-3},\cdots,i_{0},0,0,-i_{m-3},\cdots,-i_{s+1})\\
			&=(0,0,0,1-i_{s-3},\cdots,1-i_{0},1,1,1+i_{m-3},\cdots,1+i_{s+2},3+i_{s+1})\\
			&\leq (0,0,0,\overset{s-2}{\overbrace{1,1,\cdots,1}},1,1,\overset{m-4-s}{\overbrace{2,2,\cdots,2}},4)\\
			&=(0,0,1,\overset{s-1}{\overbrace{0,0,\cdots,0}},1,\overset{m-3-s}{\overbrace{0,0,\cdots,0}})\\
			&\leq (0,0,1,0,1,\overset{m-5}{\overbrace{0,0,\cdots,0}})<\frac{2^{m-1}-13}{3}\leq 3i.
		\end{align*}
		
		{\bf Subcase I.1.3.} Suppose that (\ref{s3}) has no three consecutive $0$'s nor three consecutive $1$'s. Then $3i$ can have pairs of $0$'s, pairs of $1$'s, $0$'s and $1$'s in its $2$-adic expansion. Let $l$ be the number of pairs of $0$'s and pairs of $1$'s together in (\ref{s3}). By Lemma \ref{l4.1}, $l$ is odd. 
		
		If $l=1$, then $3i=(0,0,1,0,1,0,\cdots,1,0,1)$, which is impossible. Otherwise, $3i=\frac{2^{m-1}-1}{3}$ implying $2^{m-1}\equiv 1~({\rm mod}~9)$; thus $m\equiv 1~({\rm mod}~6)$ as ${\rm ord}_{9}(2)=6$, and hence $m\equiv 1~({\rm mod}~3)$, a contradiction to the assumption that $m\equiv 0~({\rm mod}~3)$.
		
		Assume $l\geq 3$. Let $l_{0}$ and $l_{1}$ be the number of pairs of $0$'s and pairs of $1$'s in (\ref{s3}), respectively. As $l_{0}+l_{1}=l$ and $l$ is odd, $l_{0}\neq l_{1}$ implying $l_{0}<l_{1}$ or $l_{0}>l_{1}$.
		
		Suppose $l_{0}<l_{1}$. The $2$-adic expansion of $3i$ must contain $(1,1,\overset{\ell}{\overbrace{0,1,\cdots,0,1}},0,1,1)$, where $0\leq \ell\leq m-7$. More precisely, 
		$$3i=(0,0,i_{m-3},\cdots,i_{s}=1,1,\overset{\ell}{\overbrace{0,1,\cdots,0,1}},0,1,i_{s-\ell-4}=1,\cdots,i_{0}),$$
		where $\ell+4\leq s\leq m-3$. If $\ell=m-7$, then $3i=(0,0,1,1,\overset{m-7}{\overbrace{0,1,\cdots,0,1}},0,1,1)$,
		which is impossible. Otherwise, $3i=3(2^{m-4}+1)+\frac{2^{m-4}-8}{3}$ implying $2^{m-4}\equiv -1~({\rm mod}~9)$; so $2^{m-1}\equiv 1~({\rm mod}~9)$, and then $m\equiv 1~({\rm mod}~3)$, contradicting $m\equiv 0~({\rm mod}~3)$. If $\ell=m-9$, then
		$$3i=(0,0,1,0,1,1,\overset{m-9}{\overbrace{0,1,\cdots,0,1}},0,1,1)~{\rm or}~3i=(0,0,1,1,\overset{m-9}{\overbrace{0,1,\cdots,0,1}},0,1,1,0,1).$$
		One can check that these two forms of $3i$ are all imposiible as they cannot be divided by $3$. If $\ell<m-9$, then
		\begin{align*}
			&[3i\cdot 2^{2m-1-s}]_{3n}=3n-[3i\cdot 2^{m-1-s}]_{3n}\\
			=&(1,1,\cdots,1,3)-(1,1,\overset{\ell}{\overbrace{0,1,\cdots,0,1}},0,1,1,i_{s-\ell-5}\cdots,i_{0},0,0,-i_{m-3},\cdots,-i_{s+1})\\
			=&(0,0,\overset{\ell}{\overbrace{1,0,\cdots,1,0}},1,0,0,1-i_{s-\ell-5},\cdots,1-i_{0},1,1,1+i_{m-3},\cdots,1+i_{s+2},3+i_{s+1})\\
			\leq &(0,0,\overset{\ell}{\overbrace{1,0,\cdots,1,0}},1,0,0,\overset{s-\ell-4}{\overbrace{1,1,\cdots,1}},1,1,\overset{m-4-s}{\overbrace{2,2,\cdots,2}},4)\\
			=&(0,0,\overset{\ell}{\overbrace{1,0,\cdots,1,0}},1,0,1,\overset{s-\ell-3}{\overbrace{0,0,\cdots,0}},1,\overset{m-3-s}{\overbrace{0,0,\cdots,0}})\\
			\leq &(0,0,\overset{m-11}{\overbrace{1,0,\cdots,1,0}},1,0,1,0,1,0,0,0,0)<\frac{2^{m-1}-13}{3}\leq 3i.
		\end{align*}
		
		Suppose $l_{0}>l_{1}$. Then 
		$$3i=(0,0,i_{m-3},\cdots,i_{s+1},i_{s}=0,0,\overset{\ell}{\overbrace{1,0,\cdots,1,0}},1)~{\rm with}~s\leq m-5,~{\rm or}$$
		$$3i=(0,0,i_{m-3},\cdots,i_{s+1},i_{s}=0,0,\overset{\ell}{\overbrace{1,0,\cdots,1,0}},1,0,0,i_{s-t-5},\cdots,i_{0})~{\rm with}~t+5\leq s\leq m-1,$$
		where $0\leq \ell\leq m-7$. If $3i$ is of the first form, then
		$$[3i\cdot 2^{m-1-s}]_{3n}=(0,0,\overset{\ell}{\overbrace{1,0,\cdots,1,0}},1,0,0,-i_{m-3},\cdots,-i_{s+1})<\frac{2^{m-1}-13}{3}\leq 3i.$$ 
		Assume that $3i$ is of the second form. If $\ell=m-7$, then $3i=(0,0,\overset{m-7}{\overbrace{1,0,\cdots,1,0}},1,0,0,1,1)=\frac{2^{m-1}-7}{3}$, which is imposiible as $3\nmid \frac{2^{m-1}-7}{3}$. If $\ell\leq m-9$, then 
		\begin{align*}
			[3i\cdot 2^{m-1-s}]_{3n}&=(0,0,\overset{\ell}{\overbrace{1,0,\cdots,1,0}},1,0,0,i_{s-t-5},\cdots,i_{0},0,0,-i_{m-3},\cdots,-i_{s+1})\\
			&<\frac{2^{m-1}-13}{3}\leq 3i.
		\end{align*}
		
		{\bf Case I.2.} When $\frac{2^{m-2}+1}{3}\leq i<\frac{2^{m}+1}{9}$, $2^{m-2}+1=(0,1,0,0,\cdots,0,1)\leq 3i< \frac{2^{m}-2}{3}=(0,1,0,1,0,\cdots,1,0)$. The $2$-adic expansion of $3i$ is of the following form:
		\begin{equation}\label{s2}
			3i=(0,i_{m-2}=1,i_{m-3},\cdots,i_{1},i_{0}=1),
		\end{equation}
		where $i_{j}\in\{0,1\}$ for $1\leq j\leq m-3$.  If $(0,0)$ appears in (\ref{s2}), then there exists an integer $t$ with $1\leq t\leq m-1$ such that $[3i\cdot 2^{t}]_{3n}<3i$. Suppose that (\ref{s2}) has no two consecutive $0$'s, then $3i=(0,1,0,1,0,\cdots,1,0)$, contradicting $i_{0}=1$. 
		
		We conclude that for every integer $i$ with $\frac{2^{m-1}-31}{9}< i<\frac{2^{m}+1}{9}$ and $i\not\equiv 0~({\rm mod}~2)$, $3i$ is not a $2$-coset leader modulo $3n$, and hence $i$ is not a $2$-coset leader modulo $n$, proving {\bf I}.
		
		To prove the statement {\bf II}, that is, to prove that $i=\frac{2^{m-1}-31}{9}$ is a $2$-coset leader modulo $n$ and $|C_{i}|=2m$, it suffices to show that $[i\cdot 2^{t}]_{n}>i$ for $1\leq t\leq 2m-1$, which is equivalent to show that $[3i\cdot 2^{t}]_{3n}>3i$ for $1\leq t\leq 2m-1$. Note that $3i=\frac{2^{m-1}-31}{3}=(0,0,\overset{m-9}{\overbrace{1,0,\cdots,1,0}},1,0,0,1,0,1,1)$. We organize the proof into ten cases.
		
		{\bf Case II.1.} When $t=1$, $[3i\cdot 2]_{3n}=(0,\overset{m-9}{\overbrace{1,0,\cdots,1,0}},1,0,0,1,0,1,1,0)>3i$.
		
		{\bf Case II.2.} When $t=2$, $[3i\cdot 2^{2}]_{3n}=(\overset{m-9}{\overbrace{1,0,\cdots,1,0}},1,0,0,1,0,1,1,0,0)>3i$.
		
		{\bf Case II.3.} Suppose $3\leq t\leq m-7$. If $t$ is odd, then
		\begin{align*}
			[3i\cdot 2^{t}]_{3n}&=(0,\overset{m-8-t}{\overbrace{1,0,\cdots,1,0}},1,0,0,1,0,1,1,0,0,-1,\overset{t-3}{\overbrace{0,-1,\cdots,0,-1}})\\
			&>(0,\overset{m-8-t}{\overbrace{1,0,\cdots,1,0}},1,0,0,1,0,1, 0,1,1,0,\overset{t-3}{\overbrace{1,0,\cdots,1,0}})>3i.
		\end{align*}
		If $t$ is even, then
		\begin{align*}
			[3i\cdot 2^{t}]_{3n}&=(\overset{m-7-t}{\overbrace{1,0,\cdots,1,0}},1,0,0,1,0,1,1,0,0,\overset{t-2}{\overbrace{-1,0,\cdots,-1,0}})\\
			&>(\overset{m-7-t}{\overbrace{1,0,\cdots,1,0}},1,0,0,1,0,1,0,1,1,\overset{t-2}{\overbrace{0,1,\cdots,0,1}})>3i.
		\end{align*}
		
		{\bf Case II.4.} When $t=m-6$, we have 
		\begin{align*}
			[3i\cdot 2^{m-6}]_{3n}&=(0,0,1,0,1,1, 0,0,\overset{m-9}{\overbrace{-1,0,\cdots,-1,0}},-1)\\
			&>(0,0,1,0,1,0, 1,1,\overset{m-9}{\overbrace{0,1,\cdots,0,1}},0)>3i.
		\end{align*}
		
		{\bf Case II.5.} When $m-5\leq t\leq m-1$, similar discussion as Case II.4 shows that $[3i\cdot 2^{t}]_{3n}>3i$.
		
		{\bf Case II.6.} When $t=m$, we have
		\begin{align*}
			[3i\cdot 2^{m}]_{3n}&=3n-3i=(1,1,\cdots,1,3)-(0,0,\overset{m-9}{\overbrace{1,0,\cdots,1,0}},1,0,0,1,0,1,1)\\
			&=(1,1,\overset{m-9}{\overbrace{0,1,\cdots,0,1}},0,1,1,0,1,0,2)>3i.
		\end{align*}
		
		{\bf Case II.7.} When $m+1\leq t\leq m+2$, similar discussion as Case II.6 shows that $[3i\cdot 2^{t}]_{3n}>3i$.
		
		{\bf Case II.8.} Suppose $m+3\leq t\leq 2m-7$. Let $\ell=t-m$, then $3\leq \ell\leq m-7$. If $t$ is even, then $\ell$ is odd and
		\begin{align*}
			[3i\cdot 2^{t}]_{3n}&=3n-[3i\cdot 2^{\ell}]_{3n}\\
			&=(1,1,\cdots,1,3)-(0,\overset{m-8-\ell}{\overbrace{1,0,\cdots,1,0}},1,0,0,1,0,1,1,0,0,-1,\overset{\ell-3}{\overbrace{0,-1,\cdots,0,-1}})\\
			&\geq (1,1,\cdots,1,3)-(0,\overset{m-11}{\overbrace{1,0,\cdots,1,0}},1,0,0,1,0,1,1,0,0,-1)\\
			&=(1,\overset{m-11}{\overbrace{0,1,\cdots,0,1}},0,1,1,0,1,0,0,1,1,4)>3i.
		\end{align*}
		If $t$ is odd, then $\ell$ is even and
		\begin{align*}
			[3i\cdot 2^{t}]_{3n}&=3n-[3i\cdot 2^{l}]_{3n}\\
			&=(1,1,\cdots,1,3)-(\overset{m-7-\ell}{\overbrace{1,0,\cdots,1,0}},1,0,0,1,0,1,1,0,0,\overset{\ell-2}{\overbrace{-1,0,\cdots,-1,0}})\\
			&\geq (1,1,\cdots,1,3)-(\overset{m-11}{\overbrace{1,0,\cdots,1,0}},1,0,0,1,0,1,1,0,0,-1,0)\\
			&=(\overset{m-11}{\overbrace{0,1,\cdots,0,1}},0,1,1,0,1,0,0,1,1,2,3)>3i.
		\end{align*}
		
		{\bf Case II.9.} When $t=2m-6$, we have
		\begin{align*}
			[3i\cdot 2^{2m-6}]_{3n}&=3n-[3i\cdot 2^{m-6}]_{3n}\\
			&=(1,1,\cdots,1,3)-(0,0,1,0,1,1,0,0,\overset{m-9}{\overbrace{-1,0,\cdots,-1,0}},-1)\\
			&=(1,1,0,1,0,0,1,1,\overset{m-9}{\overbrace{2,1,\cdots,2,1}},4)>3i.
		\end{align*}
		
		{\bf Case II.10.} When $2m-5\leq t\leq 2m-1$, similar discussion as Case II.9 shows that $[3i\cdot 2^{t}]_{3n}>3i$.
		
		The proof is then completed.
	\end{proof}
	
		\begin{Theorem}\label{t4.1}
		Let $q=2$ and $n=\frac{2^{m}+1}{3}$, where $m$ is an odd integer. Suppose $m\equiv 0~({\rm mod}~3)$, and let $\delta_{1}$ and $\delta_{2}$ be as given in Proposition \ref{p4.1}.
		\begin{itemize}[align=left,leftmargin=*]
			\item[{\rm (1)}] For $\delta_{2}+1\leq \delta\leq \delta_{1}$, the codes $\mathcal{C}_{(q,n,\delta,1)}$ and $\mathcal{C}_{(q,n,\delta+1,0)}$ have parameters $[\frac{2^{m}+1}{3},3,\frac{2^{m}+1}{9}]$ and $[\frac{2^{m}+1}{3},2,\frac{2^{m+1}+2}{9}]$, respectively.
			
			\item[{\rm (2)}] The codes $\mathcal{C}_{(q,n,\delta_{2},1)}$ and and $\mathcal{C}_{(q,n,\delta_{2}+1,0)}$ have parameters $[\frac{2^{m}+1}{3}, 2m+3, \geq \frac{2^{m-1}-31}{9}]$ and
			$[\frac{2^{m}+1}{3}, 2m+2, \geq \frac{2^{m}-62}{9}]$, respectively.
		\end{itemize}
	\end{Theorem}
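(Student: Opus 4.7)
The plan is to obtain Theorem \ref{t4.1} as a short bookkeeping consequence of Proposition \ref{p4.1} together with Lemmas \ref{l2.3}--\ref{l2.5}, following exactly the pattern of the proof of Theorem \ref{t3.1}. The key structural fact to exploit is that, by Proposition \ref{p4.1}, $\delta_{1}$ is the largest $2$-coset leader and $\delta_{2}$ is the second largest, so \emph{no} integer in the open interval $(\delta_{2},\delta_{1})$ is a coset leader. Consequently, for every $\delta$ with $\delta_{2}+1\leq \delta\leq \delta_{1}$, the defining sets $C_{1}\cup\cdots\cup C_{\delta-1}$ and $C_{0}\cup C_{1}\cup\cdots\cup C_{\delta-1}$ are independent of $\delta$ throughout this range, so the codes $\mathcal{C}_{(q,n,\delta,1)}$ (resp.\ $\mathcal{C}_{(q,n,\delta+1,0)}$) all coincide with $\mathcal{C}_{(q,n,\delta_{1},1)}$ (resp.\ $\mathcal{C}_{(q,n,\delta_{1}+1,0)}$), and it suffices to treat these two specific codes.

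For the dimensions in part (1), I would use that the distinct $q$-cosets partition $\mathbb{Z}_{n}$, together with $|C_{0}|=1$ and $|C_{\delta_{1}}|=2$ (from Proposition \ref{p4.1}), to conclude $|C_{1}\cup\cdots\cup C_{\delta_{1}-1}|=n-3$; this yields dimensions $3$ and $2$, respectively. For the minimum distances, the decisive observation is that $\delta_{1}=\frac{2^{m}+1}{9}=\frac{n}{3}$ divides $n$, so Lemma \ref{l2.4} gives $d=\delta_{1}=\frac{2^{m}+1}{9}$ for $\mathcal{C}_{(q,n,\delta_{1},1)}$ and Lemma \ref{l2.5} gives $d=2\delta_{1}=\frac{2^{m+1}+2}{9}$ for the even-like subcode $\mathcal{C}_{(q,n,\delta_{1}+1,0)}$.

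For part (2), since $\delta_{2}$ is itself a coset leader, removing $C_{\delta_{2}}$ from $C_{1}\cup\cdots\cup C_{\delta_{1}-1}$ drops the cardinality by exactly $|C_{\delta_{2}}|=2m$; hence $|C_{1}\cup\cdots\cup C_{\delta_{2}-1}|=n-3-2m$, producing dimensions $2m+3$ and $2m+2$. The lower bounds on the minimum distances follow immediately from the BCH bound (Lemma \ref{l2.3}) applied to $\mathcal{C}_{(q,n,\delta_{2},1)}$, which has $\delta_{2}-1$ consecutive elements in its defining set, and from Lemma \ref{l2.5} applied with $\delta=\delta_{2}$ for $\mathcal{C}_{(q,n,\delta_{2}+1,0)}$, giving $d\geq 2\delta_{2}=\frac{2^{m}-62}{9}$. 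There is no substantive obstacle in this proof: all the combinatorial effort has already been invested in the coset-leader analysis of Proposition \ref{p4.1}, so only an accounting of cardinalities and two invocations of the divisibility lemmas remain.
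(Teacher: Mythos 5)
Your proposal is correct and follows exactly the route the paper intends: its proof of Theorem \ref{t4.1} is the one-line appeal to Lemmas \ref{l2.3}--\ref{l2.5} and Proposition \ref{p4.1}, and your write-up simply makes explicit the same coset-counting (defining set of size $n-3$, resp.\ $n-3-2m$) and the same invocations of Lemma \ref{l2.4} and Lemma \ref{l2.5} with $\delta_1=n/3\mid n$ for the exact distances in part (1) and the BCH-type bounds for part (2). No discrepancy with the paper's argument.
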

	\begin{proof}
		The proof follows directly from Lemmas \ref{l2.3}-\ref{l2.5} and Proposition \ref{p4.1}.
	\end{proof}
	
	\begin{Example}{\rm
		Take $m=9$ in Theorem \ref{t4.1}, then $n=171$. According to Proposition \ref{p4.1}, $\delta_{1}=57$ and $\delta_{2}=25$. 
		\begin{itemize}[align=left,leftmargin=*]
		\item[(i)] By Theorem \ref{t4.1}, the codes $\mathcal{C}_{(2,n,\delta_{1},1)}$ and $\mathcal{C}_{(2,n,\delta_{1}+1,0)}$ have parameters $[171,3,57]$ and $[171,2,114]$, respectively. The code $\mathcal{C}_{(2,n,\delta_{1}+1,0)}$ meets the Griesmer bound and has the same parameters as the optimal binary linear code known in the Database.
		
		\item[(ii)] By Theorem \ref{t4.1}, the codes $\mathcal{C}_{(2,n,\delta_{2},1)}$ and $\mathcal{C}_{(2,n,\delta_{2}+1,0)}$ have parameters $[171,21,\geq 25]$ and $[171,20,\geq 50]$, respectively. 
		\end{itemize}}
	\end{Example}
	
	The following proposition uses similar proof methods as in the proof of Proposition \ref{p4.1}.
		\begin{Proposition}\label{p4.2}
		Let $q=2$ and $n=\frac{2^{m}+1}{3}$, where $m$ is an odd integer. Suppose $m\equiv 1~({\rm mod}~3)$, then:
		
		\noindent {\rm (1)} if $m=7$, then $\delta_{1}=7$ and $\delta_{2}=3$; moreover, $|C_{\delta_{1}}|=|C_{\delta_{2}}|=14$.
		
		\noindent {\rm (2)} if $m>7$, then $\delta_{1}=\frac{2^{m-1}-1}{9}$ and $\delta_{2}=\frac{2^{m-1}-127}{9}$; moreover, $|C_{\delta_{1}}|=|C_{\delta_{2}}|=2m$.
	\end{Proposition}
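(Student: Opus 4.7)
The plan mirrors the strategy of Proposition \ref{p4.1}: switch from $\mathbb{Z}_{n}$ to $\mathbb{Z}_{3n}=\mathbb{Z}_{2^{m}+1}$ via the correspondence $i\leftrightarrow 3i$ (the unlabeled lemma of Section~2 attributed to \cite{zc}), and analyze the binary expansions of $3\delta_{1}$ and $3\delta_{2}$ in $\mathbb{Z}_{2^{m}+1}$. The hypotheses ``$m\equiv 1\pmod 3$ and $m$ odd'' force $m\equiv 1\pmod 6$, hence $2^{m-1}\equiv 1\pmod 9$ and $\mathrm{ord}_{9}(2)=6$, so both $\delta_{1}=(2^{m-1}-1)/9$ and $\delta_{2}=(2^{m-1}-127)/9$ are nonnegative integers (with $\delta_{2}>0$ precisely when $m\geq 13$). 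The case $m=7$ is handled by direct enumeration: the nonzero $2$-cyclotomic cosets modulo $n=43$ are $C_{1}, C_{3}, C_{7}$, each of size $14=2m$, which yields $\delta_{1}=7$, $\delta_{2}=3$, and $|C_{\delta_{1}}|=|C_{\delta_{2}}|=14$.

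For $m>7$ I would first write out the $m$-bit forms $3\delta_{1}=(0,0,1,0,1,0,\ldots,1,0,1)$ (ones exactly at the even positions $0,2,\ldots,m-3$) and $3\delta_{2}=3\delta_{1}-42$, which agrees with $3\delta_{1}$ in all bits above position~$6$ and has low seven bits $(0,1,0,1,0,1,1)$. Then I would execute four steps. Step~1: using $C_{i}=C_{-i}$, restrict to odd $i$ with $\delta_{1}<i\leq (n-1)/2$ and, following Case~I of Proposition~\ref{p4.1}, split the binary form of $3i$ into three sub-cases (three consecutive $0$'s, three consecutive $1$'s, or neither). In the first two cases an appropriate shift $[3i\cdot 2^{t}]_{3n}<3i$ is produced by direct manipulation of bit strings; in the third, Lemma~\ref{l4.1} forces $3i$ to be strictly alternating, and together with $\mathrm{ord}_{9}(2)=6$ and $m\equiv 1\pmod 6$ this yields a divisibility contradiction. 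Step~2: verify $[3\delta_{1}\cdot 2^{t}]_{3n}>3\delta_{1}$ for all $1\leq t\leq 2m-1$ by splitting into $1\leq t\leq m-1$ and $m\leq t\leq 2m-1$, using $[3\delta_{1}\cdot 2^{m}]_{3n}=3n-3\delta_{1}$ to fold the second half. Step~3: prove that no odd $i$ with $\delta_{2}<i<\delta_{1}$ is a coset leader, by an entirely parallel case analysis on the bits of $3i$, again invoking Lemma~\ref{l4.1} and the congruences modulo~$9$ to eliminate the alternating exceptions. Step~4: verify $[3\delta_{2}\cdot 2^{t}]_{3n}>3\delta_{2}$ for $1\leq t\leq 2m-1$ by a group-of-ten case split analogous to Cases~II.1--II.10 in the proof of Proposition~\ref{p4.1}. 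The cardinality claims $|C_{\delta_{1}}|=|C_{\delta_{2}}|=2m$ then follow immediately from Steps~2 and~4.

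The principal obstacle will be Steps~1 and~3, where one must rule out, for every odd $i$ in the two subintervals $(\delta_{1},(n-1)/2]$ and $(\delta_{2},\delta_{1})$, the possibility that $i$ itself is a coset leader. The routine sub-cases (where $3i$ contains three consecutive identical bits) are dispatched by an explicit shift that visibly produces a smaller residue; the delicate sub-cases are the strictly alternating binary strings, which have no ``long block'' to exploit, and here the arithmetic of $2$ modulo $9$ combined with Lemma~\ref{l4.1} and $m\equiv 1\pmod 6$ must be used to force a numerical contradiction, exactly as in Subcase~I.1.3 of the proof of Proposition~\ref{p4.1}. Once these two exclusion results are in place, Steps~2 and~4 are bookkeeping on bit strings and yield the two exceptional values $\delta_{1}=(2^{m-1}-1)/9$, $\delta_{2}=(2^{m-1}-127)/9$ together with the full-orbit assertion.
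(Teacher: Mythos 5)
Your skeleton is the right one and matches what the paper intends (the paper gives no separate argument for Proposition \ref{p4.2}, saying only that it follows the method of Proposition \ref{p4.1}): pass to $\mathbb{Z}_{2^m+1}$ via $i\leftrightarrow 3i$, use $C_i=C_{-i}$ to restrict to odd $i\leq (n-1)/2$, settle $m=7$ by listing the three nonzero cosets $C_1,C_3,C_7$ of size $14$, and for $m>7$ prove two exclusion statements plus two full-orbit verifications; your identifications of the bit strings of $3\delta_1$ and $3\delta_2$ and the relation $3\delta_2=3\delta_1-42$ are also correct.

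The gap is in the mechanism you propose for the only hard sub-case, where $3i$ has neither three consecutive $0$'s nor three consecutive $1$'s. First, Lemma \ref{l4.1} does not force such a $3i$ to be strictly alternating (strict alternation is impossible for odd $m$ with $s_{m-1}=0$, $s_0=1$); it only says the number of double blocks is odd, so all near-alternating patterns with $l=1,3,\dots$ doubles must be confronted, as in Subcase I.1.3 of Proposition \ref{p4.1}. Second, and more seriously, the ``divisibility contradiction modulo $9$'' you invoke to kill these patterns is an artifact of $m\equiv 0\pmod 3$; under your hypothesis $m\equiv 1\pmod 6$ it is precisely such near-alternating strings that are divisible by $3$ --- that is why $\delta_1=\frac{2^{m-1}-1}{9}$ and $\delta_2=\frac{2^{m-1}-127}{9}$ exist in the first place. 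Concretely, for $m=13$ the triple-free strings $3\cdot 569=(0,0,1,1,0,1,0,1,0,1,0,1,1)$ (inside your Step 1 range) and $3\cdot 451=(0,0,1,0,1,0,1,0,0,1,0,0,1)$ (inside your Step 3 window $\delta_2<i<\delta_1$) are honest multiples of $3$, so no congruence contradiction is available; each must instead be eliminated by exhibiting an explicit shift, e.g. $569\cdot2^{15}\equiv 455$ and $451\cdot 2^{7}\equiv 377\pmod{2731}$. Producing such shifts uniformly in $m$ for every surviving triple-free pattern other than $3\delta_1$ and $3\delta_2$ themselves is the actual content of adapting Proposition \ref{p4.1} to this congruence class, and your proposal currently asserts it away rather than supplying it.
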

	
	    	\begin{Theorem}\label{t4.2}
		Let $q=2$ and $n=\frac{2^{m}+1}{3}$, where $m$ is an odd integer. Suppose $m\equiv 1~({\rm mod}~3)$, and let $\delta_{1}$ and $\delta_{2}$ be as given in Proposition \ref{p4.2}.
		\begin{itemize}[align=left,leftmargin=*]
			\item[{\rm (1)}] For $\delta_{2}+1\leq \delta\leq \delta_{1}$, the codes $\mathcal{C}_{(q,n,\delta,1)}$ and $\mathcal{C}_{(q,n,\delta+1,0)}$ have parameters $[\frac{2^{m}+1}{3},2m+1,\geq \frac{2^{m-1}-1}{9}]$ and $[\frac{2^{m}+1}{3},2m,\geq \frac{2^{m}-2}{9}]$, respectively. 
			
			\item[{\rm (2)}] When $m=7$, the codes $\mathcal{C}_{(q,n,\delta_{2},1)}$ and $\mathcal{C}_{(q,n,\delta_{2}+1,0)}$ have parameters $[43,29,\geq 3]$ and $[43,28,\geq 6]$, respectively. When $m>7$, the codes $\mathcal{C}_{(q,n,\delta_{2},1)}$ and $\mathcal{C}_{(q,n,\delta_{2}+1,0)}$ have parameters $[\frac{2^{m}+1}{3}, 4m+1, \geq \frac{2^{m-1}-127}{9}]$ and $[\frac{2^{m}+1}{3}, 4m, \geq \frac{2^{m}-254}{9}]$, respectively.
		\end{itemize}
	\end{Theorem}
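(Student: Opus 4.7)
The plan is to mimic the proof of Theorem~\ref{t4.1} step by step, replacing Proposition~\ref{p4.1} by Proposition~\ref{p4.2}. The only inputs needed beyond the general bounds in Lemmas~\ref{l2.3}--\ref{l2.5} are the values of $\delta_1,\delta_2$ and the coset sizes $|C_{\delta_1}|=|C_{\delta_2}|=2m$ (specializing to $14$ when $m=7$), all of which have already been supplied by Proposition~\ref{p4.2}.

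For part~(1), I would first argue that the defining set of $\mathcal{C}_{(2,n,\delta,1)}$ does not depend on $\delta$ for $\delta\in[\delta_2+1,\delta_1]$. Since $\delta_1>\delta_2$ are the two largest $2$-coset leaders modulo $n$, no integer of $(\delta_2,\delta_1)$ is a coset leader, so as $\delta-1$ sweeps $[\delta_2,\delta_1-1]$ the union $C_1\cup\cdots\cup C_{\delta-1}$ always equals the union of the cosets whose leaders lie in $[1,\delta_2]$. The only cosets omitted from that union are $C_0$ and $C_{\delta_1}$, yielding
\[
\dim\mathcal{C}_{(2,n,\delta,1)}=|C_0|+|C_{\delta_1}|=1+2m.
\]
The BCH bound (Lemma~\ref{l2.3}) applied with the $\delta_1-1$ consecutive elements $1,2,\dots,\delta_1-1$ in the defining set then gives $d\geq \delta_1=\frac{2^{m-1}-1}{9}$, which is valid in both cases since $\frac{2^{6}-1}{9}=7$. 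Adjoining $C_0$ to the defining set produces the even-like subcode $\mathcal{C}_{(2,n,\delta+1,0)}$, whose dimension is $2m$ and whose minimum distance is at least $2\delta_1=\frac{2^{m}-2}{9}$ by Lemma~\ref{l2.5}.

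For part~(2), with $\delta=\delta_2$, exactly three cosets, namely $C_0$, $C_{\delta_2}$ and $C_{\delta_1}$, sit outside the defining set, so
\[
\dim\mathcal{C}_{(2,n,\delta_2,1)}=1+|C_{\delta_2}|+|C_{\delta_1}|=4m+1,
\]
which specializes to $29$ when $m=7$. The BCH bound then gives $d\geq\delta_2$, i.e.\ $d\geq 3$ when $m=7$ and $d\geq\frac{2^{m-1}-127}{9}$ when $m>7$. The corresponding even-like subcode is handled exactly as in part~(1): adjoining $C_0$ drops the dimension by one, and Lemma~\ref{l2.5} doubles the distance bound to $2\delta_2=\frac{2^{m}-254}{9}$. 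No real obstacle remains, since the heavy combinatorial work of pinning down $\delta_1,\delta_2$ and their coset sizes has already been completed in Proposition~\ref{p4.2}; the only small point requiring care is the case split $m=7$ versus $m>7$, which amounts to verifying that the general formulas for $\delta_1,\delta_2$ reproduce the stated literal values in the small case.
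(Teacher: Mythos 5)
Your proposal is correct and follows essentially the same route as the paper: it is the argument of Theorem \ref{t4.1} transplanted verbatim, using Proposition \ref{p4.2} to identify the two largest coset leaders and their coset sizes, computing the dimensions $2m+1$, $2m$, $4m+1$, $4m$ from the cosets missing from the defining set, and getting the distance bounds from the BCH bound together with the even-like subcode argument of Lemma \ref{l2.5}. The paper leaves these details implicit ("straightforward from Lemmas \ref{l2.3}, \ref{l2.4} and Proposition \ref{p4.1}" -- the latter evidently a slip for Proposition \ref{p4.2}), and your write-up supplies exactly the intended steps, including the correct $m=7$ versus $m>7$ case split.
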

	\begin{proof}
		The conclusions are straightforward from Lemmas \ref{l2.3}, \ref{l2.4} and Proposition \ref{p4.1}.
	\end{proof}
	
	\begin{Example}{\rm
		Take $m=7$ in Theorem \ref{t4.2}, then $n=43$. According to Proposiyion \ref{p4.2}, $\delta_{1}=7$ and $\delta_{2}=3$.
		\begin{itemize}[align=left,leftmargin=*]
			\item[(i)] By Theorem \ref{t4.2}, the codes $\mathcal{C}_{(2,n,\delta_{1},1)}$ and $\mathcal{C}_{(2,n,\delta_{1}+1,0)}$ have parameters $[43,15,\geq 7]$ and $[43,14,\geq 14]$, respectively. By use of Magma, $\mathcal{C}_{(2,n,\delta_{1},1)}$ and $\mathcal{C}_{(2,n,\delta_{1}+1,0)}$ have parameters $[43,15,13]$ and $[43,14,14]$, respectively. The code $\mathcal{C}_{(2,n,\delta_{1},1)}$ has the same parameters as the best binary linear code in the Database, and the code $\mathcal{C}_{(2,n,\delta_{1}+1,0)}$ has the same parameters as the optimal binary linear code in the Database.
			
			\item[(ii)] By Theorem \ref{t4.2}, the codes $\mathcal{C}_{(2,n,\delta_{2},1)}$ and $\mathcal{C}_{(2,n,\delta_{2}+1,0)}$ have parameters $[43,29,\geq 3]$ and $[43,28,\geq 6]$, respectively. By use of Magma, $\mathcal{C}_{(2,n,\delta_{2},1)}$ and $\mathcal{C}_{(2,n,\delta_{2}+1,0)}$ have parameters $[43,29,6]$ and $[43,28,6]$, respectively. The code $\mathcal{C}_{(2,n,\delta_{2},1)}$ has the same parameters as the optimal binary linear code in the Database, which is not known to be cyclic. The code $\mathcal{C}_{(2,n,\delta_{2}+1,0)}$ has the same parameters as the best binary linear code in the Database, which is not known to be cyclic.
		\end{itemize}}
	\end{Example}

	The proof of the proposition below follows the same path as that of Proposition \ref{p4.2}.
	\begin{Proposition}\label{p4.3}
	Let $q=2$ and $n=\frac{2^{m}+1}{3}$, where $m$ is an odd integer. Suppose $m\equiv 2~({\rm mod}~3)$, then:
	
	\noindent {\rm (1)} if $m=5$, then $\delta_{1}=1$ with $|C_{\delta_{1}}|=10$.
	
	\noindent {\rm (2)} if $m>5$, then $\delta_{1}=\frac{2^{m-1}-7}{9}$ and $\delta_{2}=\frac{2^{m-1}-25}{9}$; moreover, $|C_{\delta_{1}}|=|C_{\delta_{2}}|=2m$.
	\end{Proposition}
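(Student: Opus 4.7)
The plan is to mirror the proof of Proposition~\ref{p4.1}, changing only the constants and subcases to reflect the residue class $m\equiv 5~({\rm mod}~6)$ (equivalent to $m$ odd with $m\equiv 2~({\rm mod}~3)$). First, for $m=5$ I handle the statement by direct computation: $n=11$, and listing the $2$-cyclotomic cosets modulo $11$ gives $C_{0}=\{0\}$ and $C_{1}=\{1,2,3,4,5,6,7,8,9,10\}$, whence $\delta_{1}=1$ and $|C_{\delta_{1}}|=10=2m$.

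For $m>5$ (so $m\geq 11$), I transfer from $n$ to $3n=2^{m}+1$ using the lemma of \cite{zc} that identifies $2$-coset leaders $s$ modulo $n$ with $2$-coset leaders $3s$ modulo $3n$ and preserves coset sizes, and I exploit $C_{i}=C_{-i}$ in $\mathbb{Z}_{n}$ to restrict attention to $0<i\leq (n-1)/2=(2^{m-1}-1)/3$. The congruence $m\equiv 5~({\rm mod}~6)$ yields $2^{m-1}\equiv 7\equiv 25~({\rm mod}~9)$, so $3\delta_{1}=(2^{m-1}-7)/3$ and $3\delta_{2}=(2^{m-1}-25)/3$ are integers whose length-$m$ $2$-adic expansions take the explicit forms
$$3\delta_{1}=(0,0,\overset{m-7}{\overbrace{1,0,\ldots,1,0}},1,0,0,1,1),\qquad 3\delta_{2}=(0,0,\overset{m-9}{\overbrace{1,0,\ldots,1,0}},1,0,0,1,1,0,1),$$
which I compute once and treat as fixed data throughout. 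I then verify that $\delta_{1}$ and $\delta_{2}$ are $2$-coset leaders of full size $2m$ by checking $[3\delta_{j}\cdot 2^{t}]_{3n}>3\delta_{j}$ for every $1\leq t\leq 2m-1$ and $j=1,2$, organised into subcases by the position of $t$ relative to $m$, exactly as in Cases~II.1--II.10 of the proof of Proposition~\ref{p4.1}.

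It remains to rule out every odd integer $i\in(\delta_{2},(n-1)/2]\setminus\{\delta_{1}\}$ as a $2$-coset leader modulo $n$; equivalently, for each such $i$ I must exhibit some $1\leq t\leq 2m-1$ with $[3i\cdot 2^{t}]_{3n}<3i$. I split the admissible range of $3i$ into the two analogues of Cases~I.1 and~I.2 in Proposition~\ref{p4.1}, and within each I split further according to whether the $2$-adic expansion of $3i$ contains three consecutive $0$'s, three consecutive $1$'s, or neither. The first two situations are handled by a direct componentwise computation of $[3i\cdot 2^{t}]_{3n}$ for the shift $t$ that cycles the ``long run'' into the top two positions. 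In the third situation I invoke Lemma~\ref{l4.1}: the number of $(01)/(10)$ pair-blocks is odd, so after the generic shift argument only finitely many strictly alternating patterns survive, each of which I rule out using $3\mid 3i$ together with $\mathrm{ord}_{9}(2)=6$ and $m\equiv 5~({\rm mod}~6)$.

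The main obstacle is this final block of case analysis: enumerating the handful of degenerate alternating patterns that escape the generic shift argument, and showing that each contradicts $3\mid 3i$ modulo $9$ under $m\equiv 5~({\rm mod}~6)$. This is the precise analogue of Subcase~I.1.3 of Proposition~\ref{p4.1} (where $m\equiv 0~({\rm mod}~3)$ was used to exclude the bad patterns); the structure of the argument transfers directly, but the specific congruence classes modulo $9$ and the exact forms of the surviving alternating patterns must be recomputed by hand for the new residue class.
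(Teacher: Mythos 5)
Your proposal is correct and follows essentially the same route the paper intends: the paper omits the proof of Proposition \ref{p4.3}, stating it follows the same path as Propositions \ref{p4.1} and \ref{p4.2}, and your plan is exactly that transfer to cosets modulo $3n$, the explicit $2$-adic expansions of $3\delta_{1}$ and $3\delta_{2}$ (which check out, e.g. $m=11$ gives $3\delta_1=339$, $3\delta_2=333$), the shift-by-$t$ case analysis, and the Lemma \ref{l4.1} argument with the surviving alternating patterns excluded via $2^{m-1}\equiv 7\pmod 9$ for $m\equiv 5\pmod 6$. The $m=5$ computation ($n=11$, $\delta_1=1$, $|C_1|=10$) is also correct.
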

    
    The following results can be deduced from Lemmas \ref{l2.3}, \ref{l2.4} and Proposition \ref{p4.3}.
    	\begin{Theorem}\label{t4.3}
    	Let $q=2$ and $n=\frac{2^{m}+1}{3}$, where $m$ is an odd integer. Suppose $m\equiv 2~({\rm mod}~3)$, and let $\delta_{1}$ and $\delta_{2}$ be as given in Proposition \ref{p4.3}.
    	\begin{itemize}[align=left,leftmargin=*]
    	\item[{\rm (1)}] For $\delta_{2}+1\leq \delta\leq \delta_{1}$, the codes $\mathcal{C}_{(q,n,\delta,1)}$ and $\mathcal{C}_{(q,n,\delta+1,0)}$ have parameters $[\frac{2^{m}+1}{3},2m+1,\geq \frac{2^{m-1}-7}{9}]$ and $[\frac{2^{m}+1}{3},2m,\geq \frac{2^{m}-14}{9}]$, respectively. 
    	
    	\item[{\rm (2)}] The codes $\mathcal{C}_{(q,n,\delta_{2},1)}$ and $\mathcal{C}_{(q,n,\delta_{2}+1,0)}$ have parameters $[\frac{2^{m}+1}{3}, 4m+1, \geq \frac{2^{m-1}-25}{9}]$ and $[\frac{2^{m}+1}{3}, 4m, \geq \frac{2^{m}-50}{9}]$, respectively.
    \end{itemize}
    \end{Theorem}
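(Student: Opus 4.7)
The plan is to reduce everything to a clean bookkeeping calculation on top of Proposition \ref{p4.3}, combined with the BCH bound (Lemma \ref{l2.3}) and the even-like subcode bound (Lemma \ref{l2.5}). The crucial observation supplied by Proposition \ref{p4.3} is that, when $m>5$ and $m\equiv 2~({\rm mod}~3)$, there are no $2$-coset leaders strictly between $\delta_{2}$ and $\delta_{1}$, and moreover $|C_{\delta_{1}}|=|C_{\delta_{2}}|=2m$. From this alone the dimensions fall out, and the distance bounds are immediate from the two general lemmas.

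For Part (1), I first note that for every $\delta$ in the interval $[\delta_{2}+1,\delta_{1}]$, the defining set $T_{\delta}=C_{1}\cup\cdots\cup C_{\delta-1}$ equals the union of all cosets whose leaders lie in $[1,\delta_{2}]$, since there are no additional leaders in $(\delta_{2},\delta_{1}-1]$. Consequently all the codes $\mathcal{C}_{(q,n,\delta,1)}$ in this range coincide, and so do all the codes $\mathcal{C}_{(q,n,\delta+1,0)}$. The cosets of $\mathbb{Z}_{n}$ partition it, and the complement of $T_{\delta_{1}}$ consists of exactly $C_{0}\sqcup C_{\delta_{1}}$, of total size $1+2m$. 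Thus $|T_{\delta_{1}}|=n-(2m+1)$ and $\dim\mathcal{C}_{(q,n,\delta,1)}=2m+1$. Adjoining $C_{0}$ to form the even-like subcode $\mathcal{C}_{(q,n,\delta+1,0)}$ drops the dimension by $1$, giving $2m$. The BCH bound, applied with the largest permissible designed distance $\delta=\delta_{1}$, yields $d\geq \delta_{1}=\frac{2^{m-1}-7}{9}$ for the narrow-sense code, and Lemma \ref{l2.5} with $\delta=\delta_{1}$ gives $d\geq 2\delta_{1}=\frac{2^{m}-14}{9}$ for the even-like subcode.

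For Part (2), I would run exactly the same argument with $\delta=\delta_{2}$. Now the defining set of $\mathcal{C}_{(q,n,\delta_{2},1)}$ is $T_{\delta_{1}}\setminus C_{\delta_{2}}$, which has size $n-(2m+1)-2m=n-(4m+1)$, so $\dim\mathcal{C}_{(q,n,\delta_{2},1)}=4m+1$ and the even-like subcode has dimension $4m$. The BCH bound gives $d\geq \delta_{2}=\frac{2^{m-1}-25}{9}$, and Lemma \ref{l2.5} gives $d\geq 2\delta_{2}=\frac{2^{m}-50}{9}$ for $\mathcal{C}_{(q,n,\delta_{2}+1,0)}$.

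Honestly, no step here is substantial: the only real content is that $\delta_{1},\delta_{2}$ are the top two leaders with the stated coset sizes, and that is precisely Proposition \ref{p4.3}. The ``hard part'' of this section lies entirely in proving Proposition \ref{p4.3} by the same intricate $2$-adic expansion argument used for Proposition \ref{p4.1}; once that is in place, Theorem \ref{t4.3} is literally an application of the BCH bound and Lemma \ref{l2.5}.
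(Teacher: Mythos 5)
Your proposal is correct and follows essentially the same route as the paper, which proves Theorem \ref{t4.3} in one line by combining the BCH bound, the even-like subcode bound, and Proposition \ref{p4.3}: the dimension counts $2m+1$ and $4m+1$ come exactly from the complement of the defining set being $C_{0}\cup C_{\delta_{1}}$ (resp.\ $C_{0}\cup C_{\delta_{1}}\cup C_{\delta_{2}}$) with $|C_{\delta_{1}}|=|C_{\delta_{2}}|=2m$, and the distance bounds from $\delta_{1}$, $\delta_{2}$ and Lemma \ref{l2.5}. Your closing remark is also accurate: all substantive work resides in Proposition \ref{p4.3}, not in the theorem itself.
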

    
    \begin{Example}{\rm
    	Take $m=11$ in Theorem \ref{t4.3}, then $n=683$. According to Proposition \ref{p4.3}, $\delta_{1}=113$ and $\delta_{2}=111$. By Theorem \ref{t4.3}, the codes $\mathcal{C}_{(2,n,\delta_{1},1)}$ and $\mathcal{C}_{(2,n,\delta_{1}+1,0)}$ have parameters $[683,23,\geq 113]$ and $[683, 22,\geq 226]$, respectively; the codes $\mathcal{C}_{(2,n,\delta_{2},1)}$ and $\mathcal{C}_{(2,n,\delta_{2}+1,0)}$ have parameters $[683, 45, \geq 111]$ and $[683, 44, \geq 222]$, respectively.}
    \end{Example}
    
    \section{The dual codes of ternary antiprimitive BCH codes}
    In this section, we always assume that $q=3$ and $n=3^m+1$, where $m>1$ is an integer. Our task in this section is to develop a lower bound on the minimum distance of the dual code $\mathcal{C}_{(q,n,\delta,1)}^{\bot}$. To this end, we need the lemma below.
    
    \begin{Lemma}{\rm \cite{zswh}}
    	Suppose $q=3$ and $n=3^m+1$, where $m>1$. The largest $q$-coset leader modulo $n$ is $\delta_{1}=\frac{3^m+1}{2}$.
    \end{Lemma}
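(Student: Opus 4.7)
The plan is to exploit the single key fact $3^m \equiv -1 \pmod{n}$, which forces every coset to be symmetric about $n/2$, and to observe that $n/2$ is an integer since $n=3^m+1$ is even.

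First I would show that for every $a \in \{1,2,\ldots,n-1\}$, the integer $n-a$ lies in the same $3$-cyclotomic coset as $a$. Indeed, $a\cdot 3^m \equiv -a \equiv n-a \pmod{n}$, so $n-a \in C_a$. Consequently, if $a > n/2$, then $n-a < a$ is an element of $C_a$ strictly smaller than $a$, so $a$ cannot be the coset leader of $C_a$. This proves the upper bound $\delta_1 \leq \lfloor n/2\rfloor = \frac{3^m+1}{2}$.

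Next I would verify that $a_0 := \frac{3^m+1}{2}$ is indeed a coset leader. Since $3 a_0 = \frac{3^{m+1}+3}{2} = n + \frac{3^m+1}{2} = n + a_0$, we have $3a_0 \equiv a_0 \pmod{n}$, so $C_{a_0} = \{a_0\}$ is a singleton coset of size $1$, and trivially $a_0$ is its own leader. Combining this with the upper bound gives $\delta_1 = \frac{3^m+1}{2}$.

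There is no real obstacle here; the argument is entirely driven by the involution $a \mapsto n-a$ coming from $3^m \equiv -1 \pmod n$, together with the arithmetic check $3a_0 \equiv a_0 \pmod n$. The same reasoning in fact applies verbatim whenever $q$ is odd and $n$ is even with $q^m\equiv -1\pmod n$, so the lemma is essentially a structural observation rather than a computation.
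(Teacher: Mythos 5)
Your argument is correct and complete: the involution $a\mapsto n-a$ induced by $3^m\equiv -1\pmod n$ rules out any coset leader exceeding $n/2$, and the computation $3\cdot\frac{3^m+1}{2}=n+\frac{3^m+1}{2}$ shows that $\frac{3^m+1}{2}$ sits in a singleton coset, hence is itself a leader, giving $\delta_1=\frac{3^m+1}{2}$. Note that the paper does not prove this statement at all --- it imports it from the reference [zswh] --- so there is no in-paper proof to compare against; your short self-contained derivation is exactly the standard argument one would expect behind that citation, and your closing remark is also right that it works whenever $q$ is odd and $-1$ is a power of $q$ modulo an even $n$ (the condition $q$ odd is what makes $(q-1)\frac{n}{2}\equiv 0\pmod n$, i.e.\ $C_{n/2}=\{n/2\}$), which is precisely the antiprimitive setting $n=q^m+1$ used in Section 5.
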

    
    Recall that the defining set of $\mathcal{C}_{(q,n,\delta,1)}$ is $T=C_{1}\cup C_{2}\cup\cdots \cup C_{\delta-1}$, where $2\leq \delta \leq n$. Denote by $T^{\bot}$ the defining set of $\mathcal{C}_{(q,n,\delta,1)}^{\bot}$. It is clear that $T^{\bot}=\mathbb{Z}_{n}\setminus (-T)=\mathbb{Z}_{n}\setminus T$.

    For $2\leq \delta \leq \delta_1$, let $1\leq I_{1}(\delta)<\delta_1$ be the integer such that $I_{1}(\delta) \in T$ and $\{I_{1}(\delta)+1, I_{1}(\delta)+2,\cdots, \delta_1\}\subseteq \mathbb{Z}_{n}\setminus T$, and let $\delta_1+1\leq I_{2}(\delta)\leq n-1$ be the integer such that $I_{2}(\delta) \in T$ and $\{\delta_1,\delta_1+1,\cdots, I_{2}(\delta)-1\}\subseteq \mathbb{Z}_{n}\setminus T$. By the BCH bound for cyclic codes, we have $d(\mathcal{C}_{(q,n,\delta,1)}^{\bot})\geq I_{2}(\delta)-I_{1}(\delta)$. Next we determine the values of $I_{1}(\delta)$ and $I_{2}(\delta)$.
    
    \begin{Proposition}\label{p5.1}
    	Suppose $q=3$ and $n=3^m+1$, where $m>1$. Let $1\leq \ell\leq m-1$. For $2\leq \delta\leq \frac{3^m+1}{2}$, we have
    	\begin{align*}
    		I_{1}(\delta)=\begin{cases}
    			\frac{3^m-3^{m-\ell}}{2},& {\rm if}~\delta=\frac{3^\ell+1}{2},\\
    			\frac{3^m-3^{m-\ell}}{2}+1,& {\rm if}~\frac{3^\ell+3}{2}\leq \delta\leq \frac{3^{\ell+1}-1}{2},\\
    			\frac{3^m-1}{2},& {\rm if}~\delta=\frac{3^m+1}{2},
    			\end{cases}
    	\end{align*}
    	and $I_{2}(\delta)=n-I_{1}(\delta)$.
    \end{Proposition}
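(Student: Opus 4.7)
The plan is to reduce the problem to a single quantity via symmetry and then identify $I_{1}(\delta)$ in three steps. First, since $3^{m} \equiv -1 \pmod{n}$, every $3$-cyclotomic coset satisfies $C_{a} = C_{-a}$, so $T = n - T$. Because $n = 2\delta_{1}$, the involution $a \mapsto n - a$ is an order-reversing bijection $[1, \delta_{1} - 1] \leftrightarrow [\delta_{1} + 1, n - 1]$ preserving $T$; hence the smallest element of $T \cap [\delta_{1} + 1, n - 1]$ equals $n$ minus the largest element of $T \cap [1, \delta_{1} - 1]$, which gives $I_{2}(\delta) = n - I_{1}(\delta)$ as soon as the formula for $I_{1}(\delta)$ is established.

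For the formula for $I_{1}(\delta)$, I would verify in each case that the claimed value $a^{\star}$ belongs to $T$ by producing a small element of $C_{a^{\star}}$. In Case~(1), $a^{\star} = \tfrac{3^{m} - 3^{m-\ell}}{2}$ satisfies $3^{\ell} a^{\star} = \tfrac{3^{m}(3^{\ell} - 1)}{2} \equiv -\tfrac{3^{\ell} - 1}{2} \pmod{n}$, so via $C_{a} = C_{-a}$ the element $\tfrac{3^{\ell} - 1}{2} = \delta - 1$ lies in $C_{a^{\star}}$. In Case~(2), the analogous computation yields $3^{\ell} a^{\star} \equiv \tfrac{3^{\ell} + 1}{2} \pmod{n}$, and $\tfrac{3^{\ell} + 1}{2} \leq \delta - 1$ by the hypothesis $\delta \geq \tfrac{3^{\ell} + 3}{2}$. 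In Case~(3), $3 \cdot \tfrac{3^{m} - 1}{2} \equiv -3 \pmod{n}$ places $3 \in C_{a^{\star}}$, which is well below $\delta - 1$ for $m \geq 2$.

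The main obstacle is ruling out every integer $a$ with $a^{\star} < a < \delta_{1}$ from $T$. The constraints $\tfrac{3^{m} - 3^{m-\ell}}{2} < a < \tfrac{3^{m} + 1}{2}$ force the top $\ell$ base-$3$ digits of $a$ to all equal $1$: any smaller lead block would drop $a$ below $\tfrac{3^{m} - 3^{m-\ell}}{2}$, and any larger one would push $a$ beyond $\delta_{1}$. Writing $a = \tfrac{3^{m} - 3^{m-\ell}}{2} + r$ with $0 < r < \tfrac{3^{m-\ell} + 1}{2}$ (and $r \geq 2$ in Case~(2)), I would compute $[3^{j} a]_{n}$ for each $0 \leq j \leq m - 1$ using the twisted-shift rule $3 \cdot (a_{m-1}, \ldots, a_{0}) \equiv (a_{m-2}, \ldots, a_{0}, -a_{m-1}) \pmod{n}$, and check that each residue lies in the symmetric band $[\delta, n - \delta]$; by $C_{a} = C_{-a}$, this is equivalent to $C_{a}$ containing no element of $[1, \delta - 1]$. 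The delicate point is controlling the carries produced when the shifted top block meets the digits coming from $r$: for $r = 1$ the coset hits the threshold $\tfrac{3^{\ell} + 1}{2}$ exactly, which is why $a^{\star}$ shifts by $1$ between Cases~(1) and~(2); and for $r \geq 2$ one must rule out any value in $C_{a}$ smaller than $\tfrac{3^{\ell+1} - 1}{2}$, which is exactly what pins down the upper endpoint $\tfrac{3^{\ell+1} - 1}{2}$ of the constancy range in Case~(2). The boundary case~(3) is then the $\ell = m$ extension of Case~(1), handled directly using the computation for $a^{\star} = \delta_{1} - 1$ above.
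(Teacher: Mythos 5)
Your overall strategy coincides with the paper's: the membership computations $\frac{3^m-3^{m-\ell}}{2}\in C_{(3^\ell-1)/2}$, $\frac{3^m-3^{m-\ell}}{2}+1\in C_{(3^\ell+1)/2}$ and (for the last case) a small element in the coset of $\frac{3^m-1}{2}$ are exactly what the paper uses to show $I_1(\delta)\in T$, your reduction of $I_2(\delta)=n-I_1(\delta)$ via $C_a=C_{-a}$ and $n=2\delta_1$ is the paper's closing argument, and your observation that any $a$ with $\frac{3^m-3^{m-\ell}}{2}<a\leq\delta_1$ has its top $\ell$ ternary digits equal to $1$ is precisely the paper's parametrization $i=\frac{3^m-3^{m-\ell}}{2}+u$, $1\leq u\leq\frac{3^{m-\ell}+1}{2}$. (A small omission: you should also note $\delta_1\notin T$, which follows from $\delta_1$ being the largest coset leader; the paper cites this as Lemma 5.1.)

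The genuine gap is that the heart of the proposition is announced rather than proved. You say you ``would compute $[3^ja]_n$ for each $j$ and check that each residue lies in the band $[\delta,n-\delta]$,'' and you yourself flag the carry interaction between the shifted block of $1$'s and the digits of $r$ as ``the delicate point,'' but you never resolve it. That verification is where essentially all the work lies: in the paper it takes a five-case analysis over $t=0$, $1\leq t\leq\ell-1$, $\ell\leq t\leq m-1$, $m\leq t\leq m+\ell$ and $m+\ell+1\leq t\leq 2m-1$, the last two handling the wrap-around through $[i\cdot3^t]_n=n-[i\cdot3^{t-m}]_n$ and requiring digit-by-digit lower bounds with negative digits to conclude $[i\cdot3^t]_n>\frac{3^\ell+1}{2}$. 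Likewise, the strictly stronger statement needed for the range $\frac{3^\ell+3}{2}\leq\delta\leq\frac{3^{\ell+1}-1}{2}$ (the cosets of all $a\geq\frac{3^m-3^{m-\ell}}{2}+2$ must avoid everything below $\frac{3^{\ell+1}-1}{2}$) is only gestured at. Until these estimates are carried out, the claim that no integer strictly between $I_1(\delta)$ and $\delta_1$ lies in $T$ --- the nontrivial content of the proposition --- remains unestablished, even though the route you describe is the right one and would succeed if executed.
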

    \begin{proof}
    	When $\delta=\frac{3^\ell+1}{2}$, it is easy to see that $\frac{3^m-3^{m-\ell}}{2}=\frac{(3^{\ell}-1)3^{m-\ell}}{2}\in C_{\frac{3^{\ell}-1}{2}}\subseteq T$. Now we are going to show that $\{\frac{3^m-3^{m-\ell}}{2}+1,\frac{3^m-3^{m-\ell}}{2}+2,\cdots,\delta_{1}\}\subseteq \mathbb{Z}_{n}\setminus T$. For every integer $i$ with $\frac{3^m-3^{m-\ell}}{2}+1\leq i\leq \delta_{1}$, we need to prove that the coset leader of the $q$-coset containing $i$ is larger than or equal to $\frac{3^\ell+1}{2}$. To this end, we next prove that $[i\cdot 3^{t}]_{n}\geq \frac{3^\ell+1}{2}$ for every $0\leq t\leq 2m-1$.
    	
    	For $i\in \{\frac{3^m-3^{m-\ell}}{2}+1,\frac{3^m-3^{m-\ell}}{2}+2,\cdots,\delta_{1}\}$, write $i=\frac{3^m-3^{m-\ell}}{2}+u$, where $1\leq u\leq \delta_{1}-\frac{3^m-3^{m-\ell}}{2}=\frac{3^{m-\ell}+1}{2}$. Note that
    	$$\frac{3^m-3^{m-\ell}}{2}=(\overset{\ell}{\overbrace{1,1,\cdots,1}},\overset{m-\ell}{\overbrace{0,0,\cdots,0}}),$$
    	$$\frac{3^{m-\ell}+1}{2}=(\overset{\ell}{\overbrace{0,0,\cdots,0}},\overset{m-\ell-1}{\overbrace{1,1,\cdots,1}},2),~{\rm and}~\frac{3^\ell+1}{2}=(\overset{m-\ell}{\overbrace{0,0,\cdots,0}},\overset{\ell-1}{\overbrace{1,1,\cdots,1}},2).$$
    	Write $u=(\overset{\ell}{\overbrace{0,0,\cdots,0}},i_{m-\ell-1},i_{m-\ell-2},\cdots,i_{0})$, where $i_{j}\in\{0,1,2\}$ for $0\leq j\leq m-\ell-1$. Then $i=\frac{3^m-3^{m-\ell}}{2}+u=(\overset{\ell}{\overbrace{1,1,\cdots,1}},i_{m-\ell-1},i_{m-\ell-2},\cdots,i_{0})$. We organize the proof into five cases.
    	
    	{\bf Case 1.} When $t=0$, $[i\cdot 3^0]_{n}=i>\frac{3^\ell+1}{2}$.
    	
    	{\bf Case 2.} When $1\leq t\leq \ell-1$, we have
    	\begin{align*}
    		[i\cdot 3^t]_{n}&=(\overset{\ell-t}{\overbrace{1,1,\cdots,1}},i_{m-\ell-1},i_{m-\ell-2},\cdots,i_{0},\overset{t}{\overbrace{-1,-1,\cdots,-1}})\\
    		&\geq (1,\overset{m-\ell-1}{\overbrace{0,0,\cdots,0}},1,\overset{\ell-1}{\overbrace{-1,-1,\cdots,-1}})=(1,\overset{m-\ell}{\overbrace{0,0,\cdots,0}},\overset{\ell-2}{\overbrace{1,1,\cdots,1}},2)>\frac{3^\ell+1}{2}.
    	\end{align*}
    	
    	{\bf Case 3.} When $\ell\leq t\leq m-1$, we have
    	\begin{align*}
    		[i\cdot 3^t]_{n}&=(i_{m-t-1},\cdots,i_{1},i_{0},\overset{\ell}{\overbrace{-1,-1,\cdots,-1}},-i_{m-\ell-1},\cdots,-i_{m-t})\\
    		&\geq (\overset{m-t-1}{\overbrace{0,0,\cdots,0}},1,\overset{t}{\overbrace{-1,-1,\cdots,-1}})\\
    		&=(\overset{m-t}{\overbrace{0,0,\cdots,0}},\overset{t-1}{\overbrace{1,1,\cdots,1}},2)\geq (\overset{m-\ell}{\overbrace{0,0,\cdots,0}},\overset{\ell-1}{\overbrace{1,1,\cdots,1}},2)>\frac{3^\ell+1}{2}.
    	\end{align*}
    	
    	{\bf Case 4.} When $m\leq t\leq m+\ell$, let $l=t-m$, then $0\leq l\leq \ell$ and
    		\begin{align*}
    		[i\cdot 3^t]_{n}&=n-[i\cdot 3^l]_{n}\\
    		&=(2,2,\cdots,2,4)-(\overset{\ell-l}{\overbrace{1,1,\cdots,1}},i_{m-\ell-1},i_{m-\ell-2},\cdots,i_{0},\overset{l}{\overbrace{-1,-1,\cdots,-1}})\\
    		&\geq (2,2,\cdots,2,4)-(1,1,\cdots,1,2)=(1,1,\cdots,1,2)>\frac{3^\ell+1}{2}.
    		\end{align*}
    		
    		{\bf Case 5.} When $m+\ell+1\leq t\leq 2m-1$, let $l=t-m$, then $\ell+1\leq l\leq m-1$ and
    		\begin{align*}
    			[i\cdot 3^t]_{n}&=n-[i\cdot 3^l]_{n}\\
    			&=(2,2,\cdots,2,4)-(i_{m-l-1},\cdots,i_{1},i_{0},\overset{\ell}{\overbrace{-1,-1,\cdots,-1}},-i_{m-\ell-1},\cdots,-i_{m-l})\\
    			&\geq (2,2,\cdots,2,4)-(\overset{m-l}{\overbrace{2,2,\cdots,2}},\overset{\ell}{\overbrace{-1,-1,\cdots,-1}},\overset{l-\ell}{\overbrace{0,0,\cdots,0}})\\
    			&=(\overset{m-l}{\overbrace{0,0,\cdots,0}},\overset{\ell}{\overbrace{3,3,\cdots,3}},\overset{l-\ell-1}{\overbrace{2,2,\cdots,2}},4)\\
    			&=(\overset{m-l-1}{\overbrace{0,0,\cdots,0}},\overset{\ell+1}{\overbrace{1,1,\cdots,1}},\overset{l-\ell-1}{\overbrace{0,0,\cdots,0}},1)\geq (\overset{m-\ell-2}{\overbrace{0,0,\cdots,0}},\overset{\ell+2}{\overbrace{1,1,\cdots,1}})>\frac{3^\ell+1}{2}.
    		\end{align*}
    		Hence $I_{1}(\delta)=\frac{3^m-3^{m-\ell}}{2}$ when $\delta=\frac{3^\ell+1}{2}$.
    		
    		Suppose $\frac{3^\ell+3}{2}\leq \delta\leq \frac{3^{\ell+1}-1}{2}$. It is easily seen that $\frac{3^m-3^{m-\ell}}{2}+1\in C_{n-\frac{3^m-3^{m-\ell}}{2}-1}$. As $n-\frac{3^m-3^{m-\ell}}{2}-1=\frac{3^m+3^{m-\ell}}{2}=\frac{(3^{\ell}+1)3^{m-\ell}}{2}$, $\frac{3^m-3^{m-\ell}}{2}+1\in C_{\frac{3^{\ell}+1}{2}}\subseteq T$. To verify  $\{\frac{3^m-3^{m-\ell}}{2}+2,\frac{3^m-3^{m-\ell}}{2}+3,\cdots,\delta_{1}\}\subseteq \mathbb{Z}_{n}\setminus T$, we need to show that for every integer $i$ with $\frac{3^m-3^{m-\ell}}{2}+2\leq i\leq \delta_{1}$, the coset leader of the $q$-coset containing $i$ is larger than or equal to $\frac{3^{\ell+1}-1}{2}$, whose proof is similar as above and so omitted here. When $\delta=\frac{3^m+1}{2}$, $\frac{3^m-1}{2}\in T$ and $\frac{3^m+1}{2}\in \mathbb{Z}_{n}\setminus T$, thus $I_{1}(\delta)=\frac{3^m-1}{2}$.
    		
    		As for the value of $I_{2}(\delta)$, we first have $n-I_{1}(\delta)\in C_{I_{1}(\delta)}\subseteq T$. Note that $\delta_{1}=\frac{n}{2}$. For every integer $i\in \{\delta_{1},\delta_{1}+1,\cdots,n-I_{1}(\delta)-1\}$, $n-i\in \{I_{1}(\delta)+1,I_{1}(\delta)+2,\cdots,\delta_{1}\}\subseteq \mathbb{Z}_{n}\setminus T$, and hence $i\in C_{n-i}\subseteq \mathbb{Z}_{n}\setminus T$. It follows that $I_{2}(\delta)=n-I_{1}(\delta)$.
    \end{proof}

    \begin{Theorem}\label{t5.1}
    	Suppose $q=3$ and $n=3^m+1$, where $m>1$. Let $1\leq \ell\leq m-1$. For $2\leq \delta\leq n$, we have
    	\begin{align*}
    		d(\mathcal{C}_{(q,n,\delta,1)}^{\bot})\geq \begin{cases}
    			3^{m-\ell}+1, & {\rm if}~\delta=\frac{3^\ell+1}{2},\\
    			3^{m-\ell}-1, & {\rm if}~\frac{3^\ell+3}{2}\leq \delta\leq \frac{3^{\ell+1}-1}{2},\\
    			2, & {\rm if}~\frac{3^m+1}{2}\leq \delta \leq n.
    			\end{cases}
    	\end{align*}
    \end{Theorem}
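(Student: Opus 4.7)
The plan is to reduce the entire statement to a direct application of the BCH bound (Lemma~\ref{l2.3}) to $\mathcal{C}_{(q,n,\delta,1)}^{\perp}$, whose defining set is $T^{\perp}=\mathbb{Z}_{n}\setminus T$; the relation $-T=T$ needed here holds because $q^{m}\equiv -1\pmod{n}$ forces each coset $C_{i}$ to contain $-i$. The task therefore reduces to locating a long block of consecutive integers inside $\mathbb{Z}_{n}\setminus T$, and Proposition~\ref{p5.1} has essentially already done this: by its very construction, the interval $\{I_{1}(\delta)+1,I_{1}(\delta)+2,\ldots,I_{2}(\delta)-1\}$ sits inside $\mathbb{Z}_{n}\setminus T$ and consists of $I_{2}(\delta)-I_{1}(\delta)-1$ consecutive elements. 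Lemma~\ref{l2.3} then yields
\[
d\bigl(\mathcal{C}_{(q,n,\delta,1)}^{\perp}\bigr)\ \geq\ I_{2}(\delta)-I_{1}(\delta)\ =\ n-2I_{1}(\delta),
\]
using $I_{2}(\delta)=n-I_{1}(\delta)$.

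For each of the three cases of $I_{1}(\delta)$ recorded in Proposition~\ref{p5.1}, I simply substitute into $n-2I_{1}(\delta)$. When $\delta=\frac{3^{\ell}+1}{2}$, I obtain $n-(3^{m}-3^{m-\ell})=3^{m-\ell}+1$; when $\frac{3^{\ell}+3}{2}\leq \delta\leq \frac{3^{\ell+1}-1}{2}$, I obtain $n-(3^{m}-3^{m-\ell}+2)=3^{m-\ell}-1$; and when $\delta=\frac{3^{m}+1}{2}$, I obtain $n-(3^{m}-1)=2$. These match the three lines of the stated bound throughout $2\leq \delta\leq \frac{3^{m}+1}{2}$.

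To cover the remaining range $\frac{3^{m}+3}{2}\leq \delta\leq n$ (which is folded into the third case of the statement), I invoke the fact that $\delta_{1}=\frac{3^{m}+1}{2}$ is the \emph{largest} $3$-coset leader. Since every nonzero element of $\mathbb{Z}_{n}$ lies in some coset whose leader is at most $\delta_{1}\leq \delta-1$, the defining set becomes $T=\mathbb{Z}_{n}\setminus\{0\}$, hence $T^{\perp}=\{0\}$. The defining set of $\mathcal{C}^{\perp}$ then contains $0$ as a single consecutive element, and the BCH bound yields $d\geq 2$. Given the heavy lifting already performed in Proposition~\ref{p5.1}, no real obstacle remains; the only care required is the bookkeeping between the three cases and the observation that $T=\mathbb{Z}_{n}\setminus\{0\}$ once $\delta$ exceeds $\delta_{1}$.
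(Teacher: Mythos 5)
Your proposal is correct and follows essentially the same route as the paper: the paper's proof is exactly the application of the BCH bound (Lemma \ref{l2.3}) to the defining set $T^{\perp}=\mathbb{Z}_{n}\setminus T$ using the consecutive run $\{I_{1}(\delta)+1,\ldots,I_{2}(\delta)-1\}$ provided by Proposition \ref{p5.1}, with the same substitutions $n-2I_{1}(\delta)$. Your explicit treatment of the range $\frac{3^{m}+3}{2}\leq\delta\leq n$ via the largest coset leader $\delta_{1}=\frac{3^{m}+1}{2}$ is a correct piece of bookkeeping that the paper leaves implicit.
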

    \begin{proof}
    	The desired result follows from Lemma \ref{l2.3} and Proposition \ref{p5.1}.
    \end{proof}
    
    The following example shows that the lower bound in Theorem \ref{t5.1} is good.
    \begin{Example}{\rm
    	Take $m=3$ in Theorem \ref{t5.1}, then $n=28$. For $2\leq \delta \leq n$, the lower bound on the minimum distance of $\mathcal{C}^{\perp}_{(3,n,\delta,1)}$ given by Theorem \ref{t5.1} and the minimum distance of $\mathcal{C}^{\perp}_{(3,n,\delta,1)}$ obtained by use of Magma are listed in Table 1.
    	\begin{longtable}{|c|c|c|}
    		\hline
    		$\delta$ & $d(\mathcal{C}_{(3,n,\delta,1)}^{\bot})\geq$ & $d(\mathcal{C}_{(3,n,\delta,1)}^{\bot})$  \\
    		\hline
    		$2$ & $10$ & $12$ \\
    		\hline
    		$3\sim4$ & $8$ & $8$ \\
    		\hline
    		$5$ & $4$ & $4$ \\
    		\hline
    		$6\sim28$ & $2$ & $2$ \\
    		\hline
    		\caption{A lower bound on the minimum distance of $\mathcal{C}^{\perp}_{(3,n,\delta,1)}$}
    	\end{longtable}}
    \end{Example}
    
    \section{Concluding remarks and future works}
    In this paper, we extend the previous works and develop some new results on $q$-ary BCH codes with lengths $n=\frac{q^m+1}{q+1}$ and $n=q^m+1$. For $n=\frac{q^m+1}{q+1}$, we settle the dimensions of narrow-sense  BCH codes with designed distance $\delta=\ell q^{\frac{m-1}{2}}+1$, where $q>2$ and $2\leq \ell \leq q-1$. Moreover, we determine the largest coset leader for $m=3$ and also determine the first two largest coset leaders for $q=2$. Based on these results, we investigate the parameters of BCH codes with large designed distance and we find out some optimal BCH codes. For tenary narrow-sense BCH codes of length $n=3^m+1$, we derive a tight lower bound on the minimum distance of their dual codes. Some techniques presented in this paper may be helpful to the further research on $q$-ary BCH codes of length $n=\frac{q^m+1}{N}$. 
    
    A possible direction for future work is to find the first few largest coset leaders modulo $n=\frac{q^m+1}{q+1}$ for general $q$ and $m$.
    
    	\noindent\textbf{Acknowledgement.}
    	
    	This work was supported by National Natural Science Foundation of China
    	under Grant Nos.12271199 and 12171191 and The Fundamental Research Funds for the Central Universities
    	30106220482.

	 \section*{Appendix}
	
	\noindent{\bf Proof of Proposition \ref{p3.4}.}
	We are now going to find out all integers $a$ with $q^{h}\leq a\leq (q-1)q^{h}$ and $a\not\equiv 0~({\rm mod}~q)$ that are not coset leaders. In other words, for every $1\leq t\leq 2m-1=4h+1$, we need to find out all integers $a$ with $q^{h}\leq a\leq (q-1)q^{h}$ and $a\not\equiv 0~({\rm mod}~q)$ satisfying $[aq^{t}]_{n}<a$. By assumption we have $1\leq a_{h}\leq q-2$ and $1\leq a_{0}\leq q-1$. We organize the proof into ten cases.
	
	{\bf Case 1.} When $1\leq t\leq h-1$, it is clear that $a<aq^{t}\leq (q-1)q^{h}\cdot q^{h-1}=(q-1)q^{2h-1}<n.$
	
	{\bf Case 2.} When $t=h$, we have
	\begin{align*}
		aq^{h}=\big(\sum_{i=0}^{h}a_{i}q^{i}\big)q^{h}=\sum_{i=h}^{2h}a_{i-h}q^{i}\equiv a_{h}\sum_{i=0}^{2h-1}(-1)^{i+1}q^{i}+\sum_{i=h}^{2h-1}a_{i-h}q^{i}~({\rm mod}~n).
	\end{align*}
	Denote $M_{h}=a_{h}\sum\limits_{i=0}^{2h-1}(-1)^{i+1}q^{i}+\sum\limits_{i=h}^{2h-1}a_{i-h}q^{i}$. It is obvious that $M_{h}>0$. Moreover, 
	$$M_{h}\leq (q-2)\sum_{i=0}^{2h-1}(-1)^{i+1}q^{i}+\sum_{i=h}^{2h-1}(q-1)q^{i}=\frac{2q^{2h+1}-q^{2h}-q^{h+1}-q^{h}-q+2}{q+1}<2n.$$
	If $0<M_{h}<n$, we have
	$$[aq^{h}]_{n}=M_{h}\geq \sum_{i=0}^{2h-1}(-1)^{i+1}q^{i}+q^{h}=\frac{q^{2h}+q^{h+1}+q^{h}-1}{q+1}>(q-1)q^{h}>a.$$
	Suppose $n<M_{h}<2n$. Then 
	\begin{align*}
		[aq^{h}]_{n}&=M_{h}-n=a_{h}\sum_{i=0}^{2h-1}(-1)^{i+1}q^{i}+\sum_{i=h}^{2h-1}a_{i-h}q^{i}+\sum_{i=0}^{2h}(-1)^{i+1}q^{i}\\
		&=(a_{h}+a_{h-1}-q+1)q^{2h-1}+(a_{h}+1)\sum_{i=0}^{2h-2}(-1)^{i+1}q^{i}+\sum_{i=h}^{2h-2}a_{i-h}q^{i}.
	\end{align*}
	If $a_{h}+a_{h-1}<q-1$, then
	$$[aq^{h}]_{n}\leq -q^{2h-1}+2\sum_{i=0}^{2h-2}(-1)^{i+1}q^{i}+\sum_{i=h}^{2h-2}(q-1)q^{i}=-\frac{2q^{2h-1}+q^{h+1}+q^{h}+2}{q+1}<0,$$
	which is impossible. If $a_{h-1}+a_{h}>q-1$, then
	$$[aq^{h}]_{n}\geq q^{2h-1}+(q-1)\sum_{i=0}^{2h-2}(-1)^{i+1}q^{i}+q^{h}=\frac{2q^{2h-1}+q^{h+1}+q^{h}-q+1}{q+1}>(q-1)q^{h}>a.$$
	So $a_{h}+a_{h-1}=q-1$, and therefore
	\begin{align*}
		[aq^{h}]_{n}&=(a_{h}+1)\sum_{i=0}^{2h-2}(-1)^{i+1}q^{i}+\sum_{i=h}^{2h-2}a_{i-h}q^{i}\\
		&=\sum_{i=h}^{2h-2}\big[(-1)^{i+1}(a_{h}+1)+a_{i-h}\big]q^{i}+(a_{h}+1)\sum_{i=0}^{h-1}(-1)^{i+1}q^{i}\\
		&=\sum_{i=(h+1)/2}^{h-1}\Delta_{i,h}q^{2i-1}-(a_{h}+1)\frac{q^{h}+1}{q+1},
	\end{align*}
	where $\Delta_{i,h}=(-a_{h}-1+a_{2i-h})q+a_{h}+1+a_{2i-1-h}$. For $\frac{h+1}{2}\leq i\leq h-1$, one can check that $$-(q-1)^{2}\leq \Delta_{i,h}=-(a_{h}+1)(q-1)+a_{2i-h}q+a_{2i-1-h}\leq (q-1)^{2}.$$
	
	Suppose that $\Delta_{i,h}\neq 0$ for some $\frac{h+3}{2}\leq i\leq h-1$. Let $\frac{h+3}{2}\leq l\leq h-1$ be the largest integer such that $\Delta_{l,h}\neq 0$. If $\Delta_{l,h}<0$, then
    $$[aq^{h}]_{n}\leq -q^{2l-1}+\sum_{i=(h+1)/2}^{l-1}(q-1)^{2}q^{2i-1}-\frac{2(q^{h}+1)}{q+1}=-\frac{2q^{2l-1}+q^{h+1}+q^{h}+2}{q+1}<0,$$
	which is impossible. If $\Delta_{l,h}>0$, then
	\begin{align*}
		[aq^{h}]_{n}&\geq q^{2l-1}-\sum_{i=(h+1)/2}^{l-1}(q-1)^{2}q^{2i-1}-\frac{(q-1)(q^{h}+1)}{q+1}\\
		&=\frac{2q^{2l-1}-q+1}{q+1}\geq \frac{2q^{h+2}-q+1}{q+1}>(q-1)q^{h}> a.
	\end{align*}
	Hence for $\frac{h+3}{2}\leq i\leq h-1$, $\Delta_{i,h}=0$, or equivalently, $(a_{h}+1-a_{2i-h})q=a_{h}+1+a_{2i-1-h}$. As $0<a_{h}+a_{2i-1-h}+1\leq 2q-2$, $a_{h}+1-a_{2i-h}=1$ and $a_{h}+a_{2i-1-h}+1=q$ implying $a_{2i-h}=a_{h}$ and $a_{2i-1-h}=q-1-a_{h}$. Consequently, 
	$$[aq^{h}]_{n}=\Delta_{\frac{h+1}{2},h}q^{h}-(a_{h}+1)\frac{q^{h}+1}{q+1},$$
	where $\Delta_{\frac{h+1}{2},h}=(-a_{h}-1+a_{1})q+a_{h}+1+a_{0}$, and 
	\begin{align*}
		a&=a_{h}(q^{h}+q^{h-2}+\cdots+q^{3})+(q-1-a_{h})(q^{h-1}+q^{h-3}+\cdots+q^{2})+a_{1}q+a_{0}\\
		&=(a_{h}+1)\frac{q^{h+1}-q^{2}}{q+1}+a_{1}q+a_{0}.
	\end{align*}
	If $\Delta_{\frac{h+1}{2},h}\leq 0$, then $[aq^{h}]_{n}<0$, which is impossible; so $\Delta_{\frac{h+1}{2},h}>0$, or equivalently, $(a_{h}+1-a_{1})q<a_{h}+1+a_{0}$. Since $3\leq a_{h}+1+a_{0}\leq 2q-2$, $\Delta_{\frac{h+1}{2},h}>0$ if and only if one of the following holds:
	
	\noindent (i) $a_{h}+1-a_{1}=1$ and $q<a_{h}+1+a_{0}$,
	
	\noindent (ii) $a_{h}+1-a_{1}\leq 0$.
	
	Assume (i). Then $a_{1}=a_{h}$ and $a_{h}+a_{0}\geq q$. As a consequence,
	$$[aq^{h}]_{n}=(-q+a_{h}+1+a_{0})q^{h}-(a_{h}+1)\frac{q^{h}+1}{q+1}=\frac{(a_{h}+a_{0}-q)q^{h+1}+a_{0}q^{h}-a_{h}-1}{q+1},$$
	and
	$$a=(a_{h}+1)\frac{q^{h+1}-q^{2}}{q+1}+a_{h}q+a_{0}=\frac{(a_{h}+1)q^{h+1}-q^{2}+(a_{h}+a_{0})q+a_{0}}{q+1}.$$
	It follows that
	$$[aq^{h}]_{n}<a~\Leftrightarrow~-(q-a_{0})q^{h+1}-(q-a_{0})q^{h}-(a_{h}+a_{0}-q)q-a_{h}-a_{0}-1<0.$$
	The right-hand side inequality above is trivially true.
	
	Assume (ii). If $a_{h}+1-a_{1}\leq -1$, then $\Delta_{\frac{h+1}{2},h}\geq q+3$, which leads to
	$$[aq^{h}]_{n}\geq (q+3)q^{h}-(a_{h}+1)\frac{q^{h}+1}{q+1}>(q-1)q^{h}>a.$$
	So $a_{h}+1-a_{1}=0$. Hence
	$$[aq^{h}]_{n}=(a_{h}+1+a_{0})q^{h}-(a_{h}+1)\frac{q^{h}+1}{q+1}=\frac{(a_{h}+1+a_{0})q^{h+1}+a_{0}q^{h}-a_{h}-1}{q+1},$$
	and
    $$a=(a_{h}+1)\frac{q^{h+1}-q^{2}}{q+1}+(a_{h}+1)q+a_{0}=\frac{(a_{h}+1)q^{h+1}+(a_{h}+1+a_{0})q+a_{0}}{q+1}.$$
    However, it is easy to see that $[aq^{h}]_{n}>a$.
	
	In conclusion, $[aq^{h}]_{n}<a$ if and only if $a$ satisfies the following condition:
	
	\noindent {\bf c1)} $a=(a_{h}+1)\frac{q^{h+1}+q}{q+1}-q+a_{0}$ with $1\leq a_{h}\leq q-2$ and $q-a_{h}\leq a_{0}\leq q-1$.
	
	{\bf Case 3.} When $h+1\leq t\leq 2h-2$, we have
	\begin{align*}
		aq^{t}&=\big(\sum_{i=0}^{h}a_{i}q^{i}\big)q^{t}=\sum_{i=t}^{h+t}a_{i-t}q^{i}=\sum_{i=2h+1}^{h+t}a_{i-t}q^{i}+a_{2h-t}q^{2h}+\sum_{i=t}^{2h-1}a_{i-t}q^{i}\\
		&\equiv -\sum_{i=0}^{t-h-1}a_{i+2h+1-t}q^{i}+a_{2h-t}\sum_{i=0}^{2h-1}(-1)^{i+1}q^{i}+\sum_{i=t}^{2h-1}a_{i-t}q^{i}~({\rm mod}~n).
	\end{align*}
	Denote $M_{t}=-\sum\limits_{i=0}^{t-h-1}a_{i+2h+1-t}q^{i}+a_{2h-t}\sum\limits_{i=0}^{2h-1}(-1)^{i+1}q^{i}+\sum\limits_{i=t}^{2h-1}a_{i-t}q^{i}$. One can check that $0<M_{t}<2n$ and if $0<M_{t}<n$, then $[aq^{t}]_{n}=M_{t}>(q-1)q^{h}>a$.
	Suppose $n<M_{t}<2n$. Then
	\begin{align*}
		[aq^{t}]_{n}&=M_{t}-n= -\sum_{i=0}^{t-h-1}a_{i+2h+1-t}q^{i}+a_{2h-t}\sum_{i=0}^{2h-1}(-1)^{i+1}q^{i}+\sum_{i=t}^{2h-1}a_{i-t}q^{i}+\sum_{i=0}^{2h}(-1)^{i+1}q^{i}\\
		&=(a_{2h-t}+a_{2h-1-t}-q+1)q^{2h-1}+(a_{2h-t}+1)\sum_{i=0}^{2h-2}(-1)^{i+1}q^{i}+\sum_{i=t}^{2h-2}a_{i-t}q^{i}\\
		&\quad -\sum_{i=0}^{t-h-1}a_{i+2h+1-t}q^{i}.
	\end{align*}
	It is easy to verify that $0<[aq^{t}]_{n}<a$ requires $a_{2h-t}+a_{2h-1-t}=q-1$, and so
	$$[aq^{t}]_{n}=(a_{2h-t}+1)\sum_{i=0}^{2h-2}(-1)^{i+1}q^{i}+\sum_{i=t}^{2h-2}a_{i-t}q^{i}-\sum_{i=0}^{t-h-1}a_{i+2h+1-t}q^{i}.$$
	
		{\bf Subcase 3.1.} Suppose $t$ is even. Then
	\begin{align*}
		[aq^{t}]_{n}&=\sum_{i=t+1}^{2h-2}\big[(-1)^{i+1}(a_{2h-t}+1)+a_{i-t}\big]q^{i}+(a_{2h-t}+1)\sum_{i=0}^{t}(-1)^{i+1}q^{i}+a_{0}q^{t}\\
		&\quad -\sum_{i=0}^{t-h-1}a_{i+2h+1-t}q^{i}\\
		&=\sum_{i=(t+2)/2}^{h-1}\Delta_{i,t}q^{2i-1}-(a_{2h-t}+1)\frac{q^{t+1}+1}{q+1}+a_{0}q^{t}-\sum_{i=0}^{t-h-1}a_{i+2h+1-t}q^{i},
	\end{align*}
	where $\Delta_{i,t}=(-a_{2h-t}-1+a_{2i-t})q+a_{2h-t}+1+a_{2i-1-t}$. Similar to the proof in Case 2, we can prove that $0<[aq^{t}]_{n}<a$ only if $\Delta_{i,t}=0$ for $\frac{t+2}{2}\leq i\leq h-1$. Therefore,
	\begin{align*}
		[aq^{t}]_{n}&=-(a_{2h-t}+1)\frac{q^{t+1}+1}{q+1}+a_{0}q^{t}-\sum_{i=0}^{t-h-1}a_{i+2h+1-t}q^{i}\\
		&=\frac{(a_{0}-a_{2h-t}-1)q^{t+1}+a_{0}q^{t}-\sum\limits_{i=0}^{t-h-1}a_{i+2h+1-t}q^{i}(q+1)-a_{2h-t}-1}{q+1}.
	\end{align*}
	In order to make sure $[aq^{t}]_{n}>0$, we must have $a_{0}-a_{2h-t}-1\geq 0$.
	
	 If $t=h+1$, we have
	$$[aq^{h+1}]_{n}=\frac{(a_{0}-a_{h-1}-1)q^{h+2}+a_{0}q^{h+1}-a_{h}q-a_{h}-a_{h-1}-1}{q+1}.$$
	If $a_{0}-a_{h-1}-1\geq 1$, then $[aq^{h+1}]_{n}\geq \frac{q^{h+2}+q^{h+1}-q^{2}+2}{q+1}>(q-1)q^{h}>a$; so $a_{0}-a_{h-1}-1=0$ implying $a_{0}=a_{h-1}+1$ and
	$$[aq^{h+1}]_{n}=\frac{(a_{h-1}+1)q^{h+1}-a_{h}q-a_{h}-a_{h-1}-1}{q+1}.$$
	We know from above that $a_{h-1}=a_{2i-h-1}$ and $q-1-a_{h-1}=a_{h-2}=a_{2i-h-2}$ for $\frac{h+3}{2}\leq i\leq h-1$; thus
	\begin{align*}
	a&=a_{h}q^{h}+a_{h-1}(q^{h-1}+q^{h-3}+\cdots+q^{2})+(q-1-a_{h-1})(q^{h-2}+q^{h-4}+\cdots+q)+a_{h-1}+1\\
	&=\frac{a_{h}q^{h+1}+(a_{h}+a_{h-1}+1)q^{h}+a_{h-1}+1}{q+1}.
	\end{align*}
	It follows that
	\begin{align*}
		[aq^{h+1}]_{n}<a~&\Leftrightarrow~(a_{h-1}+1-a_{h})q^{h+1}-(a_{h}+a_{h-1}+1)q^{h}-a_{h}q-a_{h}-2a_{h-1}-2<0\\
		&\Leftrightarrow~(a_{h-1}+1-a_{h})q\leq a_{h}+a_{h-1}+1\\
		&\Leftrightarrow~a_{h-1}+1-a_{h}=1~{\rm and}~q\leq a_{h}+a_{h-1}+1,~{\rm or}~a_{h-1}+1-a_{h}\leq 0\\
		&\Leftrightarrow~a_{h-1}=a_{h}\geq \lceil\frac{q-1}{2}\rceil,~{\rm or}~a_{h-1}\leq a_{h}-1.
	\end{align*}
	
	 If $t\geq h+3$, we have
	\begin{align*}
		[aq^{t}]_{n}&\geq \frac{q^{t}-\sum\limits_{i=0}^{t-h-1}(q-1)q^{i}(q+1)-q}{q+1}=\frac{q^{t}-q^{t-h+1}-q^{t-h}+1}{q+1}=\frac{q^{t-h}(q^{h}-q-1)+1}{q+1}\\
		&\geq \frac{q^{3}(q^{h}-q-1)+1}{q+1}=\frac{q^{h+3}-q^{4}-q^{3}+1}{q+1}>(q-1)q^{h}>a.
	\end{align*}
	
	{\bf Subcase 3.2.} Suppose $t$ is odd. Then
	\begin{align*}
		[aq^{t}]_{n}&=\sum_{i=t}^{2h-2}\big[(-1)^{i+1}(a_{2h-t}+1)+a_{i-t}\big]q^{i}+(a_{2h-t}+1)\sum_{i=0}^{t-1}(-1)^{i+1}q^{i}-\sum_{i=0}^{t-h-1}a_{i+2h+1-t}q^{i}\\
		&=\sum_{i=(t+1)/2}^{h-1}\Delta_{i,t}q^{2i-1}-(a_{2h-t}+1)\frac{q^{t}+1}{q+1}-\sum_{i=0}^{t-h-1}a_{i+2h+1-t}q^{i},
	\end{align*}
	where $\Delta_{i,t}=(-a_{2h-t}-1+a_{2i-t})q+a_{2h-t}+1+a_{2i-1-t}$. Similarly as in the proof of Case 2, $0<[aq^{t}]_{n}<a$ only if $\Delta_{i,t}=0$ for $\frac{t+3}{2}\leq i\leq h-1$, and hence
	$$[aq^{t}]_{n}=\Delta_{\frac{t+1}{2},t}q^{t}-(a_{2h-t}+1)\frac{q^{t}+1}{q+1}-\sum_{i=0}^{t-h-1}a_{i+2h+1-t}q^{i}.$$
	It is easy to see that $[aq^{t}]_{n}>0$ if and only if $\Delta_{\frac{t+1}{2},t}>0$.
	
	 If $t= h+2$, then 
	$$[aq^{h+2}]_{n}=\Delta_{\frac{h+3}{2},h+2}q^{h+2}-(a_{h-2}+1)\frac{q^{h+2}+1}{q+1}-a_{h}q-a_{h-1},$$
	where $\Delta_{\frac{h+3}{2},h+2}=(-a_{h-2}-1+a_{1})q+a_{h-2}+1+a_{0}$. If $\Delta_{\frac{h+3}{2},h+2}\geq 2$, we have
	\begin{align*}
		[aq^{h+2}]_{n}&\geq 2q^{h+2}-\frac{q(q^{h+2}+1)}{q+1}-(q-2)q-q+1\\
		&=\frac{q^{h+3}+2q^{h+2}-q^{3}+q+1}{q+1}>(q-1)q^{h}>a.
		\end{align*}
	So $\Delta_{\frac{h+3}{2},h+2}=1$, and thus
	\begin{align*}
		[aq^{h+2}]_{n}&=q^{h+2}-(a_{h-2}+1)\frac{q^{h+2}+1}{q+1}-a_{h}q-a_{h-1}\\
		&=\frac{q^{h+3}-a_{h-2}q^{h+2}-a_{h}q^{2}-a_{h}q-a_{h-1}(q+1)-a_{h-2}-1}{q+1}.
	\end{align*}
	If $a_{h-2}\leq q-2$, then 
		\begin{align*}
		[aq^{h+2}]_{n}&\geq \frac{ 2q^{h+2}-(q-2)q^{2}-(q-2)q-(q-1)(q+1)-q+1}{q+1}\\
		&=\frac{2q^{h+2}-q^{3}+q+2}{q+1}>(q-1)q^{h}>a,
	\end{align*}
	which forces $a_{h-2}=q-1$, and therefore
	$$[aq^{h+2}]_{n}=\frac{q^{h+2}-a_{h}q^{2}-(a_{h}+a_{h-1}+1)q-a_{h-1}}{q+1}.$$
	However, as $a_{h}\leq q-2$ we have
	$$[aq^{h+2}]_{n}\geq \frac{q^{h+2}-(q-2)q^{2}-(2q-2)q-q+1}{q+1}=\frac{q^{h+2}-q^{3}+q+1}{q+1}>(q-1)q^{h}>a.$$
	
	 If $t\geq h+4$, then 
	\begin{align*}
		[aq^{t}]_{n}&\geq q^{t}-\frac{q(q^{t}+1)}{q+1}-\sum_{i=0}^{t-h-1}(q-1)q^{i}=\frac{q^{t}-q^{t-h+1}-q^{t-h}+1}{q+1}=\frac{q^{t-h}(q^{h}-q-1)+1}{q+1}\\
		&\geq \frac{q^{4}(q^{h}-q-1)+1}{q+1}=\frac{q^{h+4}-q^{5}-q^{4}+1}{q+1}>(q-1)q^{h}>a.
	\end{align*}
	
	In conclusion, for $h+1\leq t\leq 2h-2$, $[aq^{t}]_{n}<a$ if and only if $t=h+1$ and $a$ satisfies one of the following two conditions:
	
	\noindent {\bf c2)} $a=a_{h}q^{h}+(a_{h}+1)\frac{q^{h}+1}{q+1}$ with $\lceil\frac{q-1}{2}\rceil\leq a_{h}\leq q-2$,
	
   \noindent {\bf c3)} $a=a_{h}q^{h}+(a_{h-1}+1)\frac{q^{h}+1}{q+1}$ with $0\leq a_{h-1}<a_{h}\leq q-2$.
	
	{\bf Case 4.} When $t=2h-1$, we have
	\begin{align*}
		aq^{2h-1}&=\big(\sum_{i=0}^{h}a_{i}q^{i}\big)q^{2h-1}=\sum_{i=2h-1}^{3h-1}a_{i-2h+1}q^{i}=\sum_{i=2h+1}^{3h-1}a_{i-2h+1}q^{i}+a_{1}q^{2h}+a_{0}q^{2h-1}\\
		&\equiv -\sum_{i=0}^{h-2}a_{i+2}q^{i}+a_{1}\sum_{i=0}^{2h-1}(-1)^{i+1}q^{i}+a_{0}q^{2h-1}~({\rm mod}~n).
	\end{align*}
	Denote $M_{2h-1}=-\sum\limits_{i=0}^{h-2}a_{i+2}q^{i}+a_{1}\sum\limits_{i=0}^{2h-1}(-1)^{i+1}q^{i}+a_{0}q^{2h-1}$. It is easy to verify that $0<M_{2h-1}<2n$ and if $0<M_{2h-1}<n$, then $[aq^{2h-1}]_{n}=M_{2h-1}>(q-1)q^{h}>a$. Suppose $n<M_{2h-1}<2n$. Then
	\begin{align*}
		[aq^{2h-1}]_{n}&=M_{2h-1}-n=-\sum_{i=0}^{h-2}a_{i+2}q^{i}+a_{1}\sum_{i=0}^{2h-1}(-1)^{i+1}q^{i}+a_{0}q^{2h-1}+\sum_{i=0}^{2h}(-1)^{i+1}q^{i}\\
		&=(a_{1}+a_{0}-q+1)q^{2h-1}-(a_{1}+1)\frac{q^{2h-1}+1}{q+1}-\sum_{i=0}^{h-2}a_{i+2}q^{i}.
	\end{align*}
	If $a_{1}+a_{0}\leq q-1$, then $[aq^{2h-1}]_{n}<0$, which is impossible. If $a_{1}+a_{0}>q-1$, we have
	\begin{align*}
		[aq^{2h-1}]_{n}&\geq q^{2h-1}-\frac{q(q^{2h-1}+1)}{q+1}-(q-2)q^{h-2}-\sum_{i=0}^{h-3}(q-1)q^{i}\\
		&=\frac{q^{2h-1}-q^{h}+q^{h-2}+1}{q+1}>(q-1)q^{h}>a.
		\end{align*}
	
	{\bf Case 5.} When $t=2h$, we have
	\begin{align*}
		aq^{2h}&=\big(\sum_{i=0}^{h}a_{i}q^{i}\big)q^{2h}=\sum_{i=2h}^{3h}a_{i-2h}q^{i}=\sum_{i=2h+1}^{3h}a_{i-2h}q^{i}+a_{0}q^{2h}\\
		&\equiv -\sum_{i=0}^{h-1}a_{i+1}q^{i}+a_{0}\sum\limits_{i=0}^{2h-1}(-1)^{i+1}q^{i}~({\rm mod}~n).
	\end{align*}
	Denote $M_{2h}=-\sum\limits_{i=0}^{h-1}a_{i+1}q^{i}+a_{0}\sum\limits_{i=0}^{2h-1}(-1)^{i+1}q^{i}$. One has 
	$$M_{2h}<(q-1)\sum_{i=0}^{2h-1}(-1)^{i+1}q^{i}= \frac{(q-1)(q^{2h}-1)}{q+1}=\frac{q^{2h+1}-q^{2h}-q+1}{q+1}<n,$$ and
	$$M_{2h}\geq -(q-2)q^{h-1}-\sum_{i=0}^{h-2}(q-1)q^{i}+\frac{q^{2h}-1}{q+1}=\frac{q^{2h}-q^{h+1}+q^{h-1}+q}{q+1}>(q-1)q^{h}>a.$$
	
	{\bf Case 6.} When $2h+1\leq t\leq 3h$, we have $aq^{t}\equiv -aq^{t-2h-1}~({\rm mod}~n)$. Note that $$0<aq^{t-2h-1}\leq (q-1)q^{h}\cdot q^{t-2h-1}=(q-1)q^{t-h-1}\leq (q-1)q^{2h-1}<n.$$ So $[aq^{t}]_{n}=n-aq^{t-2h-1}\geq \frac{q^{2h+1}+1}{q+1}-(q-1)q^{2h-1}=\frac{q^{2h-1}+1}{q+1}>(q-1)q^{h}>a$.
	
	{\bf Case 7.} When $t=3h+1$, we know from the proof of Case 2 that $aq^{3h+1}\equiv -aq^{h}\equiv-M_{h}~({\rm mod}~n)$, where $M_{h}=a_{h}\sum\limits_{i=0}^{2h-1}(-1)^{i+1}q^{i}+\sum\limits_{i=h}^{2h-1}a_{i-h}q^{i}$ and $0<M_{h}<2n$. If $n<M_{h}<2n$, we have
	\begin{align*}
		[aq^{3h+1}]_{n}&=2n-M_{h}=\frac{2q^{2h+1}+2}{q+1}-a_{h}\sum_{i=0}^{2h-1}(-1)^{i+1}q^{i}-\sum_{i=h}^{2h-1}a_{i-h}q^{i}\\
		&\geq\frac{2q^{2h+1}+2}{q+1}-(q-2)\sum_{i=0}^{2h-1}(-1)^{i+1}q^{i}-\sum_{i=h}^{2h-1}(q-1)q^{i}\\
		&=\frac{q^{2h}+q^{h+1}+q^{h}+q}{q+1}>(q-1)q^{h}>a.
	\end{align*}
	Suppose $0<M_{h}<n$. Then 
	\begin{align*}
		[aq^{3h+1}]_{n}&=n-M_{h}=-(M_{h}-n)\\
		&=(q-1-a_{h}-a_{h-1})q^{2h-1}+(a_{h}+1)\sum_{i=0}^{2h-2}(-1)^{i}q^{i}-\sum_{i=h}^{2h-2}a_{i-h}q^{i}\\
		&=(q-1-a_{h}-a_{h-1})q^{2h-1}+\sum_{i=h}^{2h-2}\big[(-1)^{i}(a_{h}+1)-a_{i-h}\big]q^{i}+(a_{h}+1)\sum_{i=0}^{h-1}(-1)^{i}q^{i}\\
		&=(q-1-a_{h}-a_{h-1})q^{2h-1}+\sum_{i=(h+1)/2}^{h-1}\Delta_{i,3h+1}q^{2i-1}+(a_{h}+1)\frac{q^{h}+1}{q+1},
	\end{align*}
	where $\Delta_{i,3h+1}=(a_{h}+1-a_{2i-h})q-a_{h}-1-a_{2i-1-h}$. It is easy to verify that $0<[aq^{3h+1}]_{n}<a$ only if $a_{h}+a_{h-1}=q-1$ and $\Delta_{i,3h+1}=0$, or equivalently, $a_{2i-h}=a_{h}$ and $a_{2i-1-h}=q-1-a_{h}$, for $\frac{h+3}{2}\leq i\leq h-1$. Hence
	$$[aq^{3h+1}]_{n}=\Delta_{\frac{h+1}{2},3h+1}q^{h}+(a_{h}+1)\frac{q^{h}+1}{q+1},$$
	where $\Delta_{\frac{h+1}{2},3h+1}=(a_{h}+1-a_{1})q-a_{h}-1-a_{0}$, and
	\begin{align*}
		a&=a_{h}(q^{h}+q^{h-2}+\cdots+q^{3})+(q-1-a_{h})(q^{h-1}+q^{h-3}+\cdots+q^{2})+a_{1}q+a_{0}\\
		&=(a_{h}+1)\frac{q^{h+1}-q^{2}}{q+1}+a_{1}q+a_{0}.
	\end{align*}
	 If $\Delta_{\frac{h+1}{2},3h+1}< 0$, then $[aq^{3h+1}]_{n}\leq-q^{h}+\frac{(q-1)(q^{h}+1)}{q+1}=-\frac{2q^{h}-q-1}{q+1} <0$, which is impossible. So $\Delta_{\frac{h+1}{2},3h+1}\geq0$, that is, $(a_{h}+1-a_{1})q\geq a_{h}+1+a_{0}$, which is equivalent to one of the following:
	 
	 \noindent (i) $a_{h}+1-a_{1}=1$ and $q\geq a_{h}+1+a_{0}$,
	 
	 \noindent (ii) $a_{h}+1-a_{1}\geq 2$.
	
	Assume (i), then $a_{1}=a_{h}$ and $a_{h}+a_{0}\leq q-1$. Thus
    $$[aq^{3h+1}]_{n}=(q-a_{h}-1-a_{0})q^{h}+(a_{h}+1)\frac{q^{h}+1}{q+1}=\frac{(q-a_{h}-a_{0})q^{h+1}-a_{0}q^{h}+a_{h}+1}{q+1},$$
	and
	$$a=(a_{h}+1)\frac{q^{h+1}-q^{2}}{q+1}+a_{h}q+a_{0}=\frac{(a_{h}+1)q^{h+1}-q^{2}+(a_{h}+a_{0})q+a_{0}}{q+1}.$$
	It follows that
	\begin{align*}
		[aq^{3h+1}]_{n}<a~&\Leftrightarrow~(q-2a_{h}-a_{0}-1)q^{h+1}-a_{0}q^{h}+q^{2}-(a_{h}+a_{0})q+a_{h}-a_{0}+1<0\\
		&\Leftrightarrow~(q-2a_{h}-a_{0}-1)q-a_{0}<0\\
		&\Leftrightarrow~q-2a_{h}-a_{0}-1\leq 0.
	\end{align*}
	
	Assume (ii), then $a_{h}-a_{1}\geq 1$. If $a_{h}-a_{1}\geq 2$, then $\Delta_{\frac{h+1}{2},3h+1}\geq 3q-(q-2)-1-(q-1)=q+2$; thus $[aq^{3h+1}]_{n}\geq (q+2)q^{h}+\frac{3(q^{h}+1)}{q+1}>(q-1)q^{h}>a$. So $a_{h}-a_{1}=1$, therefore
	\begin{align*}
		[aq^{3h+1}]_{n}&=(2q-a_{h}-1-a_{0})q^{h}+(a_{h}+1)\frac{q^{h}+1}{q+1}\\
		&=\frac{(2q+1-a_{h}-a_{0})q^{h+1}-a_{0}q^{h}+a_{h}+1}{q+1}.
	\end{align*}
	and
	$$a=(a_{h}+1)\frac{q^{h+1}-q^{2}}{q+1}+(a_{h}-1)q+a_{0}=\frac{(a_{h}+1)q^{h+1}-2q^{2}+(a_{h}+a_{0}-1)q+a_{0}}{q+1}.$$
	We then have
	\begin{align*}
		[aq^{3h+1}]_{n}<a~&\Leftrightarrow~(2q-2a_{h}-a_{0})q^{h+1}-a_{0}q^{h}+2q^{2}-(a_{h}+a_{0}-1)q+a_{h}-a_{0}+1<0\\
		&\Leftrightarrow~(2q-2a_{h}-a_{0})q-a_{0}<0\\
		&\Leftrightarrow~2q-2a_{h}-a_{0}\leq 0.
	\end{align*}
	
	In conclusion, $[aq^{3h+1}]_{n}<a$ if and only if $a$ satisfies one of the following two conditions:
	
	\noindent {\bf c4)} $a=(a_{h}+1)\frac{q^{h+1}+q}{q+1}-q+a_{0}$ with $1\leq a_{h}\leq q-2$ and ${\rm max}\{1,q-1-2a_{h}\}\leq a_{0}\leq q-1-a_{h}$,
	
	\noindent {\bf c5)} $a=(a_{h}+1)\frac{q^{h+1}+q}{q+1}-2q+a_{0}$ with $\lceil\frac{q+1}{2}\rceil \leq a_{h}\leq q-2$ and $2q-2a_{h}\leq a_{0}\leq q-1$.
	
	{\bf Case 8.} When $3h+2\leq t\leq 4h-1$, let $\ell=t-2h-1$, then $h+1\leq \ell \leq 2h-2$, and it follows from the proof of Case 3 that $$aq^{t}\equiv-aq^{\ell}\equiv -M_{\ell}~({\rm mod}~n),$$ where $M_{\ell}=-\sum\limits_{i=0}^{\ell-h-1}a_{i+2h+1-\ell}q^{i}+a_{2h-\ell}\sum\limits_{i=0}^{2h-1}(-1)^{i+1}q^{i}+\sum\limits_{i=\ell}^{2h-1}a_{i-\ell}q^{i}$ and $0<M_{\ell}<2n$. If $n<M_{\ell}<2n$, one can check that $[aq^{t}]_{n}=2n-M_{\ell}>(q-1)q^{h}>a$. Suppose $0<M_{\ell}<n$. Then
	\begin{align*}
		[aq^{t}]_{n}&=n-M_{\ell}=-(M_{\ell}-n)\\
		&=(q-1-a_{2h-\ell}-a_{2h-1-\ell})q^{2h-1}+(a_{2h-\ell}+1)\sum_{i=0}^{2h-2}(-1)^{i}q^{i}-\sum_{i=\ell}^{2h-2}a_{i-\ell}q^{i}\\
		&\quad +\sum_{i=0}^{\ell-h-1}a_{i+2h+1-\ell}q^{i}.
	\end{align*}
	Direct calculation shows that $0<[aq^{t}]_{n}<a$ only if $a_{2h-\ell}+a_{2h-1-\ell}=q-1$, and thus
	$$[aq^{t}]_{n}=(a_{2h-\ell}+1)\sum_{i=0}^{2h-2}(-1)^{i}q^{i}-\sum_{i=\ell}^{2h-2}a_{i-\ell}q^{i}+\sum_{i=0}^{\ell-h-1}a_{i+2h+1-\ell}q^{i}.$$
	
	{\bf Subcase 8.1.} Suppose $t$ is odd, then $\ell$ is even, and
	$$[aq^{t}]_{n}=\sum_{i=(\ell+2)/2}^{h-1}\Delta_{i,t}q^{2i-1}+(a_{2h-\ell}+1)\frac{q^{\ell+1}+1}{q+1}-a_{0}q^{\ell}+\sum_{i=0}^{\ell-h-1}a_{i+2h+1-\ell}q^{i},$$
	where $\Delta_{i,t}=(a_{2h-\ell}+1-a_{2i-\ell})q-a_{2h-\ell}-1-a_{2i-1-\ell}$.
	Similarly as in the proof of Case 2, we can prove that $0<[aq^{t}]_{n}<a$ requires $\Delta_{i,t}=0$ for $\frac{\ell+2}{2}\leq i\leq h-1$. Hence
	\begin{align*}
		[aq^{t}]_{n}&=(a_{2h-\ell}+1)\frac{q^{\ell+1}+1}{q+1}-a_{0}q^{\ell}+\sum_{i=0}^{\ell-h-1}a_{i+2h+1-\ell}q^{i}\\
		&=\frac{(a_{2h-\ell}+1-a_{0})q^{\ell+1}-a_{0}q^{\ell}+\sum\limits_{i=0}^{\ell-h-1}a_{i+2h+1-\ell}q^{i}(q+1)+a_{2h-\ell}+1}{q+1}.
	\end{align*}
	One can check that $[aq^{t}]_{n}>0$ if and only if $a_{2h-\ell}+1-a_{0}\geq 1$. 
	
	 If $t=3h+2$, then $\ell=h+1$ and 
	$$[aq^{3h+2}]_{n}=\frac{(a_{h-1}+1-a_{0})q^{h+2}-a_{0}q^{h+1}+a_{h}(q+1)+a_{h-1}+1}{q+1}.$$
	If $a_{h-1}+1-a_{0}\geq 2$, we have
	$$[aq^{3h+2}]_{n}\geq \frac{2q^{h+2}-(q-1)q^{h+1}+q+1+2+1}{q+1}=\frac{q^{h+2}+q^{h+1}+q+4}{q+1}>(q-1)q^{h}>a.$$
	So $a_{h-1}+1-a_{0}=1$, and therefore
	$$[aq^{3h+2}]_{n}=\frac{(q-a_{h-1})q^{h+1}+a_{h}q+a_{h}+a_{h-1}+1}{q+1}.$$
	We know from above that $a_{h-1}=a_{h-3}=\cdots=a_{2}=a_{0}$ and $q-1-a_{h-1}=a_{h-2}=a_{h-4}=\cdots=a_{1}$; thus
	\begin{align*}
		a&=a_{h}q^{h}+a_{h-1}(q^{h-1}+q^{h-3}+\cdots+q^{2}+1)+(q-1-a_{h-1})(q^{h-2}+q^{h-4}+\cdots+q)\\
		&=\frac{a_{h}q^{h+1}+(a_{h}+a_{h-1}+1)q^{h}-q+a_{h-1}}{q+1}.
	\end{align*}
	Then we have
	\begin{align*}
		[aq^{3h+2}]_{n}<a~&\Leftrightarrow~(q-a_{h}-a_{h-1})q^{h+1}-(a_{h}+a_{h-1}+1)q^{h}+(a_{h}+1)q+a_{h}+1<0\\
		&\Leftrightarrow~(q-a_{h}-a_{h-1})q-a_{h}-a_{h-1}-1<0\\
		&\Leftrightarrow~a_{h}+a_{h-1}\geq q.
	\end{align*}
	
    If $t\geq 3h+4$, then $\ell=t-2h-1\geq h+3$ and
	$$[aq^{t}]_{n}\geq \frac{q^{\ell+1}-(q-1)q^{\ell}+1}{q+1}=\frac{q^{\ell}+1}{q+1}\geq \frac{q^{h+3}+1}{q+1}>(q-1)q^{h}>a.$$
	
	{\bf Subcase 8.2.} Suppose $t$ is even, then $\ell$ is odd, and
	$$[aq^{t}]_{n}=\sum_{i=(\ell+1)/2}^{h-1}\Delta_{i,t}q^{2i-1}+(a_{2h-\ell}+1)\frac{q^{\ell}+1}{q+1}+\sum_{i=0}^{\ell-h-1}a_{i+2h+1-\ell}q^{i},$$
	where $\Delta_{i,t}=(a_{2h-\ell}+1-a_{2i-\ell})q-a_{2h-\ell}-1-a_{2i-1-\ell}$. Similar to the proof of Case 2, $0<[aq^{t}]_{n}<a$ only if $\Delta_{i,t}=0$ for $\frac{\ell+1}{2}\leq i\leq h-1$. Thus
	$$[aq^{t}]_{n}=(a_{2h-\ell}+1)\frac{q^{\ell}+1}{q+1}+\sum_{i=0}^{\ell-h-1}a_{i+2h+1-\ell}q^{i}.$$
	Since $t\geq 3h+3$, $\ell=t-2h-1\geq h+2$, and then $[aq^{t}]_{n}\geq \frac{q^{\ell}+1}{q+1}\geq \frac{q^{h+2}+1}{q+1}>(q-1)q^{h}>a$.

	Hence for $3h+2\leq t\leq 4h-1$, $[aq^{t}]_{n}<a$ if and only if $t=3h+2$ and $a$ satisfies the following condition:
	
	\noindent {\bf c6)} $a=a_{h}q^{h}+(a_{h-1}+1)\frac{q^{h}+1}{q+1}-1$ with $1\leq a_{h}\leq q-2$ and $q-a_{h}\leq a_{h-1}\leq q-1$.
	
	{\bf Case 9.} When $t=4h$, we see from Case 4 that
	$$aq^{4h}\equiv -aq^{2h-1}\equiv -M_{2h-1}~({\rm mod}~n).$$
	where $M_{2h-1}=-\sum\limits_{i=0}^{h-2}a_{i+2}q^{i}+a_{1}\sum\limits_{i=0}^{2h-1}(-1)^{i+1}q^{i}+a_{0}q^{2h-1}$ and $0<M_{2h-1}<2n$. If $n<M_{2h-1}<2n$, one can check that $[aq^{4h}]_{n}=2n-M_{2h-1}>(q-1)q^{h}>a$. Suppose $0<M_{2h-1}<n$. Then
	\begin{align*}
		[aq^{4h}]_{n}&=n-M_{2h-1}=-(M_{2h-1}-n)\\
		&=(q-1-a_{1}-a_{0})q^{2h-1}+(a_{1}+1)\frac{q^{2h-1}+1}{q+1}+\sum_{i=0}^{h-2}a_{i+2}q^{i}.
	\end{align*}
	If $a_{0}+a_{1}> q-1$, then
	$$[aq^{4h}]_{n}\leq -q^{2h-1}+\frac{q(q^{2h-1}+1)}{q+1}+\sum_{i=0}^{h-2}(q-1)q^{i}=-\frac{q^{2h-1}-q^{h}-q^{h-1}+1}{q+1}<0,$$
	which is impossible. If $a_{0}+a_{1}\leq q-1$, we have $[aq^{4h}]_{n}>\frac{q^{2h-1}+1}{q+1}>(q-1)q^{h}>a$. 
	
	{\bf Case 10.} When $t=4h+1$, it follows from Case 5 that $aq^{t}\equiv -aq^{2h}\equiv -M_{2h}~({\rm mod}~n)$,
	where $M_{2h}=-\sum\limits_{i=0}^{h-1}a_{i+1}q^{i}+a_{0}\sum\limits_{i=0}^{2h-1}(-1)^{i+1}q^{i}$ and $0<M_{2h}<\frac{q^{2h+1}-q^{2h}-q+1}{q+1}<n$. Then $$[aq^{t}]_{n}=n-M_{2h}>\frac{q^{2h}+q}{q+1}>(q-1)q^{h}>a.$$
	
	Collecting all the conclusions above yields the desired result.
	\qed

\end{document}